\documentclass[11pt]{article} %
\usepackage{times}
\usepackage[letterpaper, left=1in, right=1in, top=1in, bottom=1in]{geometry}

\usepackage[utf8]{inputenc} %
\usepackage[T1]{fontenc}    %

\usepackage{natbib}
\bibliographystyle{apalike} %
\bibpunct{(}{)}{;}{a}{,}{,}

\usepackage[utf8]{inputenc} %
\usepackage[T1]{fontenc}    %
\usepackage{hyperref}       %
\usepackage{url}            %
\usepackage{booktabs}       %
\usepackage{amsfonts}       %
\usepackage{nicefrac}       %
\usepackage{microtype}      %
\usepackage{xcolor}         %

\usepackage{graphicx}
\usepackage{subcaption}
\usepackage{amsmath,amssymb, amsthm}
\usepackage[capitalize, noabbrev]{cleveref}
\usepackage{multirow}
\usepackage{multicol}
\usepackage{mathtools}
\usepackage{ marvosym }
\usepackage{setspace}
\usepackage{textcomp}
\usepackage{authblk} 
\usepackage{bbm}
\usepackage{array}
\usepackage{algorithm}
\usepackage[noend]{algpseudocode}
\usepackage{rotating}
\usepackage[shortlabels]{enumitem}
\usepackage{tikz}
\usetikzlibrary{bayesnet}

\usepackage{apptools}
\usepackage[page, header]{appendix}
\usepackage{titletoc}

\theoremstyle{plain}
\newtheorem{theorem}{Theorem}
\newtheorem{proposition}{Proposition}
\newtheorem{lemma}{Lemma}
\newtheorem{corollary}{Corollary}
\theoremstyle{definition}

\newtheorem{assumption}{Assumption}
\theoremstyle{remark}

\let\oldReturn\Return
\renewcommand{\Return}{\State\oldReturn}

\newcommand{\R}{\mathbb{R}}

\newcommand{\pr}{\textnormal{Pr}}
\newcommand{\indep}{\perp\!\!\!\!\perp} 
\DeclareMathOperator*{\ind}{1{\hskip -2.5 pt}\hbox{I}}  %

\newcommand{\Ycal}{\mathcal{Y}}
\newcommand{\E}{\mathbb{E}}
\newcommand{\Z}{\mathbb{Z}}

\newcommand{\Ncal}{\mathcal{N}}
\newcommand{\Dcal}{\mathcal{D}}
\newcommand{\opt}{\operatorname{OPT}}

\newcommand{\argmin}{\operatorname{argmin}}

\usepackage[textsize=tiny]{todonotes}

\hypersetup{
  colorlinks   = true, %
  urlcolor     = blue, %
  linkcolor    = blue, %
  citecolor   = red %
}

\usepackage{authblk}

\title{\textbf{Treatment response as a latent variable}}
\author[1]{Christopher Tosh}
\author[2]{Boyuan Zhang}
\author[1]{Wesley Tansey}
\affil[1]{Memorial Sloan Kettering Cancer Center, New York, NY}
\affil[2]{Stanford University, Palo Alto, CA}

\begin{document}
\maketitle

{\def\thefootnote{}
\footnotetext{E-mail:
\texttt{christopher.j.tosh@gmail.com},\
\texttt{boyuanz@stanford.edu},\
\texttt{tanseyw@mskcc.org}}}

\vspace{-2em}
\begin{abstract}

Scientists often need to analyze the samples in a study that responded to treatment in order to refine their hypotheses and find potential causal drivers of response. 
Natural variation in outcomes makes teasing apart responders from non-responders a statistical inference problem. 
To handle latent responses, we introduce the causal two-groups (C2G) model, a causal extension of the classical two-groups model.
The C2G model posits that treated samples may or may not experience an effect, according to some prior probability. 
We propose two empirical Bayes procedures for the causal two-groups model, one under semi-parametric conditions and another under fully nonparametric conditions. 
The semi-parametric model assumes additive treatment effects and is identifiable from observed data. 
The nonparametric model is unidentifiable, but we show it can still be used to test for response in each treated sample. 
We show empirically and theoretically that both methods for selecting responders control the false discovery rate at the target level with near-optimal power. 
We also propose two novel estimands of interest and provide a strategy for deriving estimand intervals in the unidentifiable nonparametric model. 
On a cancer immunotherapy dataset, the nonparametric C2G model recovers clinically-validated predictive biomarkers of both positive and negative outcomes.
Code is available at \url{https://github.com/tansey-lab/causal2groups}.
\end{abstract}

\section{Introduction}
\label{sec:introduction}

This article concerns causal inference when the treatment only affects a subset of the population. 
Such cases are common in medicine, where only individuals with certain characteristics experience the benefits of a prescribed intervention.
For example, a subpopulation of Hispanic patients has a germline mutation that prevents an approved CAR-T cell therapy from binding to the cell surface~\citep{seipel:etal:2023:snp-cd19-car-t-escape}. Lung tumors often acquire a somatic mutation that prevents certain small molecule inhibitors from binding~\citep{da:etal:2011:egfr-lung-cancer}. Personalized cancer mRNA vaccines are only effective if the neoantigens targeted provoke an immune response~\citep{rojas:etal:2023:personalized-mrna-pdac}. In all of these cases, biomarkers of response were unknown at the start of the study.
The key analysis step in refining the treatment criteria was to determine which patients likely experienced an affect on their outcomes. Since all patients experience natural variability in their outcomes, determining the affected patient set and the extent of the effects is a statistical inference problem.

In a dataset with treatment and control groups, what effects can be inferred if only some of the treated subjects receive any effect at all?
Can we identify treated individuals that saw an effect from the treatment?
Can we estimate the magnitudes of effects? What kinds of statistical guarantees can we provide?

To address these questions, we develop a general empirical Bayes framework built on top of the classic two-groups model \citep{efron:2008}. \cref{fig:graphical_models}d shows the causal two-groups (C2G) graphical model we propose here. We assume a binary treatment variable $T$, a binary latent response or effect variable $H$, continuous outcome $Y$, and observed confounders $X$. We also consider extensions that contain unmeasured confounders $U$ (see \cref{fig:confounding}). The central statistical challenge in the C2G setup is handling the confounding between the treatment ($T$), latent response ($H$), and outcome ($Y$) when conducting inference for each subject in the study.

The questions one can answer in the C2G model depend on the modeling assumptions one is willing to make. In common parametric and semi-parametric cases, the causal two-groups model can be shown to be identifiable.
In the fully nonparametric case, the C2G model and certain estimands of interest are unidentifiable.
Despite being unable to identify the nonparametric model, it turns out that it is possible to (i) develop an empirical Bayes procedure to test for individual causal effects, (ii) provide informative intervals for estimands of interest, and (iii) maintain robustness to certain kinds of latent confounding.

We propose two estimation algorithms for different instantiations of the C2G model. In the identifiable C2G regime, we develop an EM-based optimization routine for a semi-parametric model under an additive errors assumption. Given a fitted model, there is a step-down empirical Bayes approach \citep{efron:2004, efron2012} to select individuals that came from the responder distribution. We show that this approach controls the false discovery rate (FDR) and attains near-optimal power (\cref{thm:additive-fdr}).

Our second approach handles the fully nonparametric setting. Even though there are many possible prior/responder distribution pairs that could give rise to the observed treatment distribution, it can be shown that this set only enjoys a single degree of freedom, corresponding to a closed interval (\cref{lem:conservative-pi}). Moreover, the endpoints of this interval can be calculated in closed form given estimates of the observed untreated and treated distributions. Using this result, we can construct a valid conservative selection procedure that controls FDR and attains near-optimal power (\cref{thm:general-fdr}). Moreover, we can also easily derive intervals that provably contain the ground-truth latent effect (\cref{corr:np-ite}).

Beyond the latent drug response scenario, the C2G model has applications in many other areas of medicine and science. In particular, the C2G model generalizes the intention-to-treat setting, where a treatment is assigned but compliance with treatment is unmeasured. In the out-patient setting, a patient is prescribed a medication, sent home, and may or may not actually take the treatment. The only observation available to the healthcare provider is when the patient is readmitted for subsequent treatment, leaving the provider unsure whether the patient truly complied with the original treatment assignment. Similarly, social welfare programs like housing lotteries may not actively track which specific vouchers are redeemed, jobs training programs may not take attendance, and
food subsidy programs are unable to know whether foods purchased were actually consumed by the purchaser~\citep{hidrobo:etal:2014:cash-food-vouchers}. In all of these instances, the C2G model can be reinterpreted to have $T$ be the intention-to-treat and $H$ be the latent treatment compliance status. This paper shows the limits and capabilities of causal inference in such studies, and provides new tools to make causal conclusions in the face of latent treatment effects. %

\section{Background}
\label{sec:background}
This paper draws inspiration from three distinct areas in statistics and machine learning. The first of these is multiple hypothesis testing, where there are many separate hypotheses to be tested and the primary objective is to maximize power while controlling FDR. The second area is treatment assignment and noncompliance, a rich field that has been explored extensively in the causal inference literature. And the last area is the recent body of work in machine learning on estimating treatment effects using flexible black box models. Below, we survey related work in each of these and contrast it with the causal two-groups setting considered in this paper.

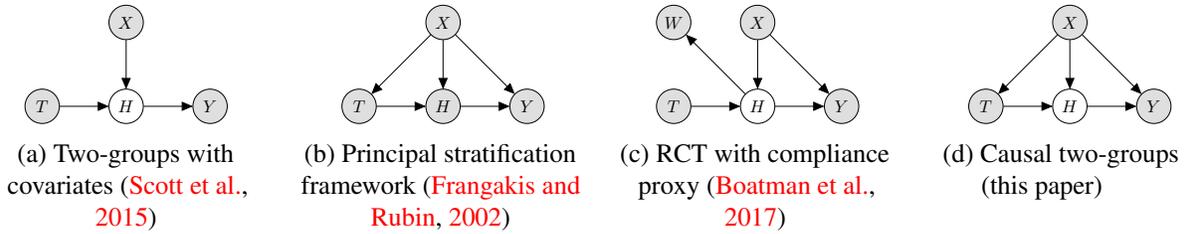
\begin{figure*}[t]
\centering
\begin{subfigure}{0.23\linewidth}
\captionsetup{justification=centering}
\centering
\scalebox{0.65}{\begin{tikzpicture}

  \node[obs, ]                               (x) {$X$};
  \node[latent, below=of x]                (h) {$H$};
  \node[obs, left=of h]                   (t) {$T$};
  \node[obs, right=of h]                   (y) {$Y$};

  \edge {x} {h}  ; %
  \edge {t} {h}  ; %
  \edge {h} {y}  ; %

\end{tikzpicture}}
\caption{\label{fig:2g_model}Two-groups with covariates~\citep{scott:etal:2014:fdr-regression}}\end{subfigure}\quad
\begin{subfigure}{0.23\linewidth}
\captionsetup{justification=centering}
\centering
\scalebox{0.65}{
\begin{tikzpicture}

  \node[obs, ]                               (x) {$X$};
  \node[obs, below=of x]                (h) {$H$};
  \node[obs, left=of h]                   (t) {$T$};
  \node[obs, right=of h]                   (y) {$Y$};

  \edge {x} {t} ; %
  \edge {x} {h}  ; %
  \edge {x} {y}  ; %
  \edge {t} {h}  ; %
  \edge {h} {y}  ; %

\end{tikzpicture}}
\caption{\label{fig:rct_model}Principal stratification framework~\citep{frangakis:rubin:2002:principal-stratification} }\end{subfigure}\quad
\begin{subfigure}{0.23\linewidth}
\captionsetup{justification=centering}
\centering
\scalebox{0.65}{
\begin{tikzpicture}

  \node[obs, ]                               (x) {$X$};
  \node[latent, below=of x]                (h) {$H$};
  \node[obs, left=of x]                     (w) {$W$};
  \node[obs, left=of h]                   (t) {$T$};
  \node[obs, right=of h]                   (y) {$Y$};

  \edge {x} {h}  ; %
  \edge {x} {y}  ; %
  \edge {t} {h}  ; %
  \edge {h} {y}  ; %
  \edge {h} {w}  ; %

\end{tikzpicture}}
\caption{\label{fig:rct_proxy}RCT with compliance proxy~\citep{boatman:etal:2017:compliance-error}}\end{subfigure}\quad
\begin{subfigure}{0.23\linewidth}
\captionsetup{justification=centering}
\centering
\scalebox{0.65}{\begin{tikzpicture}

  \node[obs, ]                               (x) {$X$};
  \node[latent, below=of x]                (h) {$H$};
  \node[obs, left=of h]                   (t) {$T$};
  \node[obs, right=of h]                   (y) {$Y$};

  \edge {x} {t} ; %
  \edge {x} {h}  ; %
  \edge {x} {y}  ; %
  \edge {t} {h}  ; %
  \edge {h} {y}  ; %

\end{tikzpicture}}
\caption{\label{fig:causal_2g_model}Causal two-groups \newline (this paper) \newline \mbox { }}
\end{subfigure}
\caption{\label{fig:graphical_models}
(a) The FDR regression model of \citet{scott:etal:2014:fdr-regression}.
(b) An observational dataset with observed noncompliance \citep{frangakis:rubin:2002:principal-stratification}.
(c) A randomized controlled trial with a compliance proxy \citep{boatman:etal:2017:compliance-error}.
(d) The general causal two-groups model. The treatment $T$, (unobserved) treatment effect indicator $H$, and the outcome $Y$ are all confounded by $X$. The causal two-groups model generalizes other scenarios such as those in (a) and (b).}
\end{figure*}

\subsection{Multiple hypothesis testing with covariates}
\label{subsec:background:multiple-testing}
Several approaches to multiple testing incorporate covariates as side information. These methods generally work under the two-groups model \citep{efron:2004,efron:2008,efron2012} where a test statistic is drawn either from the null or alternative with some prior probability. However, both frequentist \citep{ignatiadis:etal:2016:ihw,xia:etal:2017:neuralfdr,lei:fithian:2018:adapt,chen:etal:2018:functional-fdr,li:barber:2019:sabha} and Bayesian \citep{scott:etal:2014:fdr-regression,patra:bodhi:2016:2groups-jrssb,tansey:etal:2018:fdr-smoothing,tansey:etal:icml:2018:bbfdr} approaches assume that the covariates only impact the likelihood of treatment having an effect. This is often a reasonable assumption in laboratory experiments, where biological replicates can be tested repeatedly under different experimental conditions. In observational data, the assumption is unlikely to hold as covariates are often confounders that affect the probability of treatment assignment, treatment compliance, and both the null and alternative outcomes. \Cref{fig:2g_model} shows the graphical model for existing two-groups models.

In a similar vein to the latent response model considered here, \citet{duan:etal:2024:interactive-fdr-control} propose an interactive protocol to the testing problem, in which an analyst is given the covariates and outcomes but is blinded to the treatment assignments, and they must cooperate with an algorithm that knows the treatment assignments to make discoveries. In their setup, however, the blinding to treatment assignments is not intrinsic to the problem but is rather a technique to increase power without losing FDR control.

\subsection{Causal inference with noncompliance}
\label{subsec:background:compliance}
In the statistics literature, methods for handling noncompliance were originally motivated by clinical trial data. \citet{efron:feldman:1991:compliance} observed that patients receiving placebo treatments had better outcomes as a function of compliance rate to the placebo. \citet{frangakis:rubin:2002:principal-stratification} generalized this in the \textit{principal stratification} framework, with treatment assignment and compliance jointly representing potential interventions. \Cref{fig:rct_model} shows the graphical model for a randomized controlled trial (RCT) in the principal stratification setting. 
These models assume compliance is observed, leading to identifiability of the causal effects; here we consider the case where compliance is entirely latent.

Other methods consider noisy or latent compliance with side information. \citet{angrist:etal:1996:iv-noncompliance} introduced an instrumental variable approach to estimating average treatment effects with noisy compliance. \citet{boatman:etal:2017:compliance-error} consider the scenario where a noisy proxy variable is available for compliance, such as blood testing for nicotine levels when patients may be misreporting their smoking habits. Further, they assume the true parametric model is known and identifiable. \Cref{fig:rct_proxy} shows the graphical model for the proxy variable case. More recent work leverages auxiliary variables combined with conditional ignorability assumptions or partial observability of the compliance variable~\citep{jiang:ding:2021:identification-principal-strata-auxiliary,jiang:etal:2022:multiply-robust-principal-ignorability,lu:etal:2023:principal-strata-continuous}.  These models obtain identifiability by leveraging additional information to disentangle the confounding from the causal effects. By contrast, we consider the challenging case where no instrumental or proxy variables are available.

\subsection{Estimating individual treatment effects in observational data}
\label{subsec:background:causal-effects}
A large body of recent work in machine learning has considered the problem of predicting the conditional average treatment effect~\cite[e.g.][]{shi:etal:2019:dragonnet,johansson:etal:gen-bounds-causal,bica:etal:2020:causal-gan,wang:etal:2024:cauusal-optimal-transport}. In the presence of latent response or noncompliance, the conditional average treatment effect (CATE) is generally an underestimate of the true effect if one could intervene, such as requiring compliance. CATE estimates also do not directly help identify which samples may have responded to treatment, as they integrate out the probability of response in a given sample. However, methods that provide uncertainty quantification can be used to conduct hypothesis tests for whether a given treatment has any effect on a given individual. The causal forests method \citep{wager:athey:2018:causal-forests} adapts random forests \citep{breiman:2001:random-forests} to the causal setting and provides asymptotic confidence intervals for effects. Bayesian additive regression trees (BART) \citep{chipman:etal:2010:bart} have been shown to perform well for estimating causal effects without any modification \citep{hill:etal:2011:causal-bart}. The posterior credible intervals provided by BART provide the Bayesian equivalent of frequentist confident intervals for causal effect sizes. Both causal forests and BART are predictive models with no explicit notion of null effects or noncompliance. In principle, the predictive models could be integrated into our causal two-groups procedures in place of black box neural networks. Synthetic and semi-synthetic simulations in \cref{sec:more_simulations,sec:gdsc_simulations} show that confidence intervals from these models are not sufficient to control FDR in finite samples.

\section{The causal two-groups model}
\label{sec:model}
Let $(x_i, y_i, t_i)_{i=1}^n$ be an observational dataset of $n$ samples, where $t_i \in \{0,1\}$ indicates whether individual $i$ was treated $(t_i=1)$ or not $(t_i=0)$, $y_i \in \R$ is the individual outcome, and $x_i \in \R^d$ are potential confounders. For instance, $t_i=1$ indicates that the patient was prescribed a certain drug, $y_i$ is the patient's survival time in days, and $x_i$ is demographic information like age, race, and sex. We drop the subscripts from here on when it is clear we are referring to a single observation.

We focus on the primary analysis goal of selecting responders $(h = 1)$; we will consider estimation of effect sizes as a secondary goal. We adapt the two-groups (2G) model from the multiple testing literature \citep{efron:2004,efron:2008}. For each treated individual ($t=1$), we suppose they are either \textit{non-responders} or \textit{responders}, with outcome distributions $f_0(y \mid x)$ or $f_1(y \mid x)$, respectively. All untreated ($t=0$) patients are assumed to have the same outcome distribution as non-responders. Each treated individual is affected by the treatment with probability $\pi(x)$,
\begin{equation}
\label{eqn:two_groups}
\begin{aligned}
(y \mid x, h) &\sim& (1-h) f_0(y \mid x) + h f_1(y \mid x) \\
(h \mid x, t=0) &=& 0 \\
(h \mid x, t=1) &\sim& \mbox{Bernoulli}(\pi(x)) \\
(t \mid x) &\sim& \mbox{Bernoulli}(\phi(x))  \, ,
\end{aligned}
\end{equation}
where $h \in \{0,1\}$ indicates whether the individual saw any effect from the treatment. Equivalently, $h$ can capture whether the individual complied with a treatment in the intent-to-treat scenario; we will use the responder terminology for the remainder of the paper. \Cref{fig:causal_2g_model} shows the causal graphical model corresponding to \cref{eqn:two_groups}. The model assumes $X$ is a confounder of $T$, $H$, and $Y$. Scenarios with no treatment bias (e.g. randomized controlled trials) are covered as well, as \cref{eqn:two_groups} is strictly a generalization. The model also assumes that treatment response is random, allowing for exogenous variables (e.g. unmeasured biomarkers) to induce uncertainty in response.

\subsection{When is the causal two-groups model identifiable?}

Our first result is that, with no further assumptions, the causal two-groups model is not identifiable, even without overlap violations.
\begin{proposition}
\label{thm:unident}
The model in \cref{eqn:two_groups} is unidentifiable without further assumptions, even when restricting $0 < \pi(x) < 1$ for all $x$. Moreover, it is unidentifiable even when $x$ is a constant and the distributions $f_0, f_1$ are restricted to being differentiable log-concave probability distributions.
\end{proposition}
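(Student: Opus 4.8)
The plan is to exhibit a single observable law arising from two distinct parameter settings. Because every untreated ($t=0$) unit is a non-responder, the untreated outcomes identify $f_0$ exactly, and the only observable feature of the treated outcomes is the mixture $g := (1-\pi) f_0 + \pi f_1$. Non-identifiability thus reduces to finding two solutions $(\pi,f_1) \ne (\pi',f_1')$ of $g = (1-\pi)f_0 + \pi f_1$ for a common $(f_0,g)$. Subtracting $f_0$ gives $g - f_0 = \pi(f_1 - f_0)$, which already exposes a single scalar degree of freedom: $\pi$ may be rescaled as long as $f_1 = f_0 + \pi^{-1}(g-f_0)$ remains a valid density. Concretely, for the first (unrestricted) claim I would fix an arbitrary $f_0$, any $\pi \in (0,1)$, and any $f_1$, then pick any $\pi' \in (\pi,1)$ and set $\lambda := \pi/\pi' \in (0,1)$ and $f_1' := (1-\lambda)f_0 + \lambda f_1$. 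A one-line check confirms $f_1'$ is a density (a convex combination of densities) and that $(1-\pi')f_0 + \pi' f_1' = (1-\pi)f_0 + \pi f_1 = g$, since $\pi'\lambda = \pi$. As $\pi'\ne\pi$, this settles unidentifiability, already with $0<\pi,\pi'<1$ and $x$ held constant.

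The hard part will be the stronger second claim, because the $f_1'$ just built is a mixture, and convex combinations of log-concave densities are generally not log-concave (a mixture of well-separated Gaussians is bimodal). To repair this I would specialize to Gaussian components, $f_0 = \N(0,1)$ and $f_1 = \N(2c,1)$, so that $f_1' = (1-\lambda)\N(0,1) + \lambda\N(2c,1)$ is a two-Gaussian mixture whose log-concavity I can control through the mean separation $2c$. The crux is a direct second-derivative computation: after centering the two means at $\pm c$, the log-density takes the form $\mathrm{const} - y^2/2 + \log\!\big[(1-\lambda)e^{-yc} + \lambda e^{yc}\big]$, and differentiating twice yields $(\log f_1')'' = -1 + c^2\operatorname{sech}^2(\,\cdot\,) \le -1 + c^2$, a bound that holds uniformly in $y$ and, crucially, in $\lambda$.

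Hence choosing any $c \le 1$ (say $c=\tfrac12$, with $\pi=0.4$, $\pi'=0.8$, so $\lambda=\tfrac12$) makes $f_1'$ differentiable and log-concave, while $f_0$ and $f_1$ are trivially so, producing two observationally equivalent settings within the log-concave class with $x$ constant and $0<\pi,\pi'<1$. I expect the sign analysis of $(\log f_1')''$ to be the only delicate step; everything else is bookkeeping. It is worth noting that this construction also previews \cref{lem:conservative-pi}: the admissible values of $\pi$ sweep out an interval, here parametrized by the free quantity $\lambda$ (equivalently $\pi'$).
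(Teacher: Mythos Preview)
Your argument is correct, but it differs from the paper's in a pleasant way. The paper produces a hands-on counterexample on $[0,\infty)$: it takes $f_0(y)=\tfrac{3}{2}e^{-y}-e^{-2y}$ (verified log-concave by a direct second-derivative check), $f_1^{(1)}=e^{-y}$ with $\pi_1=\tfrac34$, and $f_1^{(2)}=2e^{-2y}$ with $\pi_2=\tfrac14$, and shows algebraically that the two mixtures coincide. Your construction is more structural: you first expose the one-parameter family $f_1'=(1-\lambda)f_0+\lambda f_1$ with $\pi'=\pi/\lambda$, and then specialize to unit-variance Gaussians so that the log-concavity of the mixture can be certified by the identity $(\log f_1')''=-1+c^2\operatorname{sech}^2(cy+s)$ (with $s=\tfrac12\log\tfrac{\lambda}{1-\lambda}$), which is $\leq 0$ whenever the mean separation satisfies $c\leq 1$. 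This is valid and in fact yields a whole continuum of log-concave counterexamples indexed by $(c,\pi,\pi')$, and it makes explicit the link to \cref{lem:conservative-pi} that the paper only hints at. The paper's version is shorter to verify from scratch; yours buys insight into \emph{why} unidentifiability persists under log-concavity, via the classical fact that a two-Gaussian mixture with means within two standard deviations is log-concave.
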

For space and clarity, all proofs are presented in the appendix. The key idea is to rewrite the treatment outcome distribution $f_t(y \mid x) := p(y \mid x, t=1)$ as a mixture model,
\begin{equation}
\label{eqn:treatment-mixture}
f_t(y \mid x) = (1-\pi(x)) f_0(y \mid x) + \pi(x) f_1(y \mid x).
\end{equation}
Then one can show that even when $f_0(y \mid x)$ is fixed, there is a range of $\pi$'s and $f_1$'s that give rise to exactly the same $f_t$.

The story changes if parametric forms of the conditional likelihoods $f_0, f_1$ are known. Indeed, the form of \Cref{eqn:treatment-mixture} implies that \Cref{eqn:two_groups} is identifiable at those $x$ such that $f_0(\cdot \mid x), f_1(\cdot \mid x)$ come from a family for which finite mixtures are identifiable~\citep{teicher:1963:identifiability-finite, yakowitz:spragins:1968:identifiability-finite}. As a concrete example, we give a direct proof that \Cref{eqn:two_groups} is identifiable for the class of conditional normal likelihoods.

\begin{proposition}
\label{prop:normal-identifiability}
Suppose that $f_i$ is of the form
$f_i(y \mid x) = \Ncal(y \mid \mu_i(x), \sigma_i^2(x))$
for $i=0,1$. Then \Cref{eqn:two_groups} is identifiable for all $x$ such that $f_0(\cdot \mid x) \neq f_1(\cdot \mid x)$.
\end{proposition}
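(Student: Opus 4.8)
The plan is to exploit the layered structure of \cref{eqn:two_groups}. The untreated conditional $p(y \mid x, t=0)$ equals $f_0(\cdot \mid x)$, the treated conditional $p(y \mid x, t=1)$ equals the mixture $f_t(\cdot \mid x)$ of \cref{eqn:treatment-mixture}, and $\phi(x) = p(t=1 \mid x)$; all three are functionals of the observed joint law of $(x,y,t)$ and hence already identifiable. In particular $f_0(\cdot \mid x)$, and with it $\mu_0(x)$ and $\sigma_0^2(x)$, is pinned down directly, so the only remaining task is to recover $\pi(x)$ and $f_1(\cdot \mid x)$ from the decomposition $f_t = (1-\pi) f_0 + \pi f_1$ with $f_0$ treated as known. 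I will fix an $x$ with $f_0(\cdot \mid x) \neq f_1(\cdot \mid x)$ and suppress $x$ throughout.

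Next I would set up the uniqueness argument. Suppose a second admissible setting $(\pi', f_1')$, with $f_1' = \Ncal(\,\cdot \mid \mu_1', (\sigma_1')^2) \neq f_0$, produces the same treated density, so that $(1-\pi) f_0 + \pi f_1 = (1-\pi') f_0 + \pi' f_1'$ for all $y$. Writing $g_0, g_1, g_1'$ for the three Gaussian densities $\Ncal(\mu_0,\sigma_0^2)$, $\Ncal(\mu_1,\sigma_1^2)$, $\Ncal(\mu_1',(\sigma_1')^2)$ and rearranging yields the signed identity $(\pi'-\pi) g_0 + \pi g_1 - \pi' g_1' = 0$. The goal is to show this forces $\pi = \pi'$ and $f_1 = f_1'$, provided a genuine effect is present, i.e. $\pi > 0$.

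The technical core is the linear independence of Gaussian densities with distinct parameters: for pairwise-distinct pairs $(\mu_k, \sigma_k^2)$, the functions $y \mapsto \Ncal(y \mid \mu_k, \sigma_k^2)$ are linearly independent. I would establish this by passing to characteristic functions, where $\Ncal(\mu,\sigma^2)$ becomes $t \mapsto \exp(i\mu t - \tfrac{1}{2}\sigma^2 t^2)$; a vanishing linear combination of such functions must have all coefficients zero, which one sees by first separating terms according to the decay rate $\sigma_k^2$ as $|t| \to \infty$ and then according to the oscillation frequency $\mu_k$. This is precisely the classical identifiability of finite normal mixtures of \citet{teicher:1963:identifiability-finite} and \citet{yakowitz:spragins:1968:identifiability-finite}, specialized to our three-term signed combination.

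Finally I would run the case analysis on the parameter pairs $P_0 = (\mu_0,\sigma_0^2)$, $P_1 = (\mu_1,\sigma_1^2)$, and $P_1' = (\mu_1', (\sigma_1')^2)$. Since $f_1 \neq f_0$ and $f_1' \neq f_0$, we have $P_1 \neq P_0$ and $P_1' \neq P_0$. If $P_1 \neq P_1'$, then all three pairs are distinct and linear independence forces every coefficient to vanish, in particular $\pi = 0$, contradicting $\pi > 0$. Hence $P_1 = P_1'$, so $g_1 = g_1'$ and the identity collapses to $(\pi'-\pi)(g_0 - g_1) = 0$; since $g_0 \neq g_1$ this gives $\pi = \pi'$, and therefore $f_1 = f_1'$, establishing identifiability. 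I expect the main obstacle to be making the exponential-separation step of the linear-independence lemma airtight, together with carrying the hypothesis $\pi > 0$ explicitly: when $\pi = 0$ the treated and untreated densities coincide and $f_1$ is genuinely unrecoverable, so identifiability at a given $x$ truly needs an effect to be present there.
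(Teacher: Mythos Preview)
Your proof is correct and reaches the same conclusion as the paper, but the organization differs. The paper argues directly on the specific two-representation equation: after normalizing to $\mu_0=0$, $\sigma_0^2=1$, it splits into the cases $\pi_1=\pi_2$ (where matching moments finishes) and $\pi_1\neq\pi_2$, and in the latter it passes to moment generating functions and analyzes the growth as $t\to\infty$, peeling off the variances and then the means one by one to force $f_1=f_0$, a contradiction. You instead package the analytic core as the general lemma that Gaussian densities with pairwise-distinct $(\mu,\sigma^2)$ are linearly independent, prove it via characteristic functions with the same ``separate by decay rate, then by frequency'' idea, and then run a clean three-way case split on $P_0,P_1,P_1'$. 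Your route is more modular and makes the connection to \citet{teicher:1963:identifiability-finite} explicit; the paper's route is self-contained and slightly more elementary, never naming linear independence as such. Your observation that the argument needs $\pi(x)>0$ is accurate and matches the paper's implicit assumption $\pi_1,\pi_2\in(0,1)$.
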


In practice, we generally do not know the parametric forms of the outcome distributions a priori. Thus, we would ideally like to avoid parametric assumptions as much as possible.
As we show in \cref{sec:additive,sec:kernel_generalized}, there is some hope here: nonparametric estimation of the true model may be impossible, but semi-parametric estimation and nonparametric testing are viable. The upshot here is that we can select responders with minimal modeling assumptions while still controlling the desired error rate.

\subsection{Testing in the causal two-groups model}
\label{subsec:model:empirical_bayes_testing}
We formulate responder selection as a multiple hypothesis testing problem.
For each observed outcome in a treated sample, we are testing the null hypothesis that it was drawn from the non-responder distribution,
\[
H_0 \colon h = 0 \, .
\]
The multiple testing goal is to maximize power while controlling the false discovery rate at or below the target $\alpha$ level. The false discovery rate selecting $\hat{V} \subseteq \{ i \colon 1 \leq i \leq n \, , t_i = 1 \}$ is given by
\[ \text{FDR} = \E\left[ \frac{1}{|\hat{V}|} \sum_{i \in \hat{V}} \ind[h_i = 0] \right] .\]
As in other two-groups models~\citep{efron:2004,efron:2008,efron2012,scott:etal:2014:fdr-regression,tansey:etal:2018:fdr-smoothing,tansey:etal:icml:2018:bbfdr}, we take an empirical Bayes approach.
At a high level, empirical Bayes testing in the causal two-groups model proceeds as follows:
\begin{enumerate}
    \item Fit $\hat{f}_0(y \mid x)$, a model of outcomes for the untreated population. This also serves as the outcome model for non-responders by definition in \cref{eqn:two_groups}.
    \item Provide (conservative) estimates $\hat{f}_1(y \mid x)$ and $\hat{\pi}(x)$ of $f_1(y \mid x)$ and $\pi(x)$, the heterogeneous treatment effect model and the prior probability of seeing an effect, respectively, using the outcomes of the treated population.
    \item Using the estimates $\hat{f}_0$, $\hat{f}_1$, and $\hat{\pi}$, calculate the (conservative) posterior probability of each treated sample having come from the non-responder distribution,
    \[ \hat{w}_i = \frac{(1-\hat{\pi}(x_i))\hat{f}_0(y_i \mid x_i)}{(1-\hat{\pi}(x_i))\hat{f}_0(y_i \mid x_i) + \hat{\pi}(x_i)\hat{f}_1(y_i \mid x_i)}. \]
    \item Select discoveries $\hat{V} \subseteq \{ i \colon 1 \leq i \leq n \, , t_i = 1 \}$ at the $\alpha$ FDR level using the estimates $\hat{w}_i$,
            \begin{equation}
            \label{eqn:selection}
            \begin{aligned}
            \underset{V}{\text{maximize}} & \mbox{ } |V| \\
             \text{subject to} & \mbox{ } \frac{1}{|V|} \sum_{i \in V} \hat{w}_i \leq \alpha \, .
            \end{aligned}
            \end{equation}
\end{enumerate}
The details of how to model $f_0$, $f_1$, and $\pi$ are dependent on the assumptions one is willing to make about the true causal model. We will consider two such models, one simpler and the other more general.
\Cref{sec:additive} presents the simpler model that assumes additive noise while keeping the noise distribution and outcome mean functions nonparametric.
\Cref{sec:kernel_generalized} then considers arbitrary distributions for the outcomes and response probabilities. 

\subsection{Estimands of interest in the causal two-groups model}
\label{subsec:model:estimands}
Latent (non-)responders in the treatment group make interpreting causal effects more nuanced than in the traditional causal inference setup. We will consider two causal estimands that may be of interest to the analyst.

\begin{itemize}
    \item \emph{Conditional average response effect (CARE)}. This is the analogue of the classical conditional average treatment effect (CATE). Statistically,
    $$\E[Y \mid t=1, h=1, X=x] - \E[Y \mid t=0, X=x] \, .$$
    The CARE estimates the impact of the treatment if we could intervene and force response on treated samples. For instance, rather than prescribing a drug in the out-patient setting, we could instead administer the treatment in the clinic. Removing conditioning yields the average response effect (ARE), analogous to the average treatment effect (ATE). For positive effects, the (C)ATE will generally yield a biased downward estimate of the true effect captured by the (C)ARE.

    \item \emph{Expected responder population fraction (ERPF)}. This answers the question of how many people did a treatment effect. Statistically,
    $$\frac{1}{\sum_i t_i} \sum_i t_i \hat\pi(x_i) \, .$$
    In the classical setup, the ERPF is simply one. When only a fraction of patients respond, the ERPF enables us to distinguish between treatments with a strong effect for a few patients and treatments with a small but consistent effect across the majority of a population.
\end{itemize}

Our ability to estimate these quantities depends on whether the modeling assumptions provided render the corresponding estimand identifiable or at least allow for interval estimation. When only an interval is identifiable, as we shall see in \cref{sec:kernel_generalized}, we can use the lower end to provide conservative estimates.

The potential for non-response also leads to risk-reward trade-offs. For a given set of covariates, we can estimate both the expected response fraction and the expected effect size of the response. Deciding whether to prescribe a treatment then involves a delicate balance in the face of potential treatment costs and side effects. \cref{sec:case-study} shows an example of estimating the CARE and ERPF on a population of cancer patients treated with combination immunotherapy.

\section{A semi-parametric additive errors model}
\label{sec:additive}
In the additive errors (AE) regime, we model $y$ as a deterministic function plus mean zero i.i.d. noise,
\begin{equation}
\label{eqn:additive_errors}
\begin{aligned}
(y \mid x, h=0) &=& \mu_0(x) + \epsilon  & &\\
(y \mid x, h=1) &=& \mu_1(x) + \epsilon &=& \mu_0(x) + \tau(x) + \epsilon\\
\epsilon &\sim& g(\epsilon)\,, && \mathbb{E}[\epsilon] = 0  \, , 
\end{aligned}
\end{equation}
where $\tau(x) = \mu_1(x) - \mu_0(x)$ is the expected difference in outcome between the responder and non-responder models. \cref{eqn:additive_errors} is common in many causal inference setups, particularly when estimating the conditional average treatment effect \citep[cf.][]{hahn:etal:2017:bcf,wager:athey:2018:causal-forests}. This setting can be thought of as a conditionally semi-parametric model. Conditioned on covariates $x$, the model consists of the parameters $\mu_0(x), \mu_1(x)$ and the infinite-dimensional model for the noise $\epsilon$. The AE model is strictly more flexible than a parametric model with a location parameter. Despite the added flexibility, we can show that this model is identifiable, meaning we can directly estimate the underlying parameters.

\begin{theorem}
\label{thm:additive-identify}
The model in \cref{eqn:additive_errors} is identifiable at every $x$.
\end{theorem}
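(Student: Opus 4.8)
The plan is to split identification into two stages: first recover the noise law $g$ and the baseline mean $\mu_0(x)$ from the untreated arm, and then use these to disentangle the response probability $\pi(x)$ from the effect size $\tau(x)$ in the treated arm. Fix $x$ and suppress it from the notation. Because every untreated sample satisfies $y = \mu_0 + \epsilon$ with $\E[\epsilon]=0$, the mean of the observed untreated outcomes is exactly $\mu_0$, and the law of $y - \mu_0$ among the untreated is exactly $g$. Hence $\mu_0$ and $g$ are pinned down by the untreated outcome distribution alone, with no further assumptions.

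It remains to identify $\pi$ and $\tau$ from the treated outcome distribution, now that $g$ and $\mu_0$ are known. Centering the treated outcomes and writing $z = y - \mu_0$, the model in \cref{eqn:additive_errors} says the centered treated density is the two-component mixture $\tilde f(z) = (1-\pi)\,g(z) + \pi\, g(z-\tau)$, a mixture of $g$ and its translate by $\tau$. The central difficulty is that $\pi$ and $\tau$ enter only through this single mixture: the mean difference $\E[z \mid t=1] = \pi\tau$ (the CATE) recovers only their product, and matching higher moments one at a time is both ambiguous and forces moment assumptions on $g$ that we would rather not impose.

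The key step I would take is to pass to characteristic functions, which linearizes the translation. Let $\hat g$ and $\hat{\tilde f}$ denote the characteristic functions of $g$ and $\tilde f$. Since translation by $\tau$ multiplies the characteristic function by $e^{is\tau}$ and mixing is linear, for every $s\in\R$ we obtain the identity $\hat{\tilde f}(s) = \hat g(s)\,[(1-\pi) + \pi e^{is\tau}]$, valid with no moment assumptions. As $\hat g(0)=1$ and $\hat g$ is continuous, $\hat g$ is nonzero on a neighborhood of the origin, so there the ratio $q(s) := \hat{\tilde f}(s)/\hat g(s) = (1-\pi)+\pi e^{is\tau}$ is determined by the observable distributions. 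This $q$ is an explicit smooth (indeed entire) function, so one may differentiate it directly at $s=0$ regardless of the regularity of $g$: from $q'(0) = i\pi\tau$ and $q''(0) = -\pi\tau^2$ we recover $\tau = -i\,q''(0)/q'(0)$ and then $\pi = q'(0)/(i\tau)$, whenever $\tau \neq 0$.

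This recovers all of $(\mu_0, g, \pi, \tau)$ from the observable untreated and treated outcome laws, establishing identifiability. The main obstacle is precisely the disentangling step, and the characteristic-function argument is what makes it clean; equivalently, one can argue that the entire function $(1-\pi)+\pi e^{is\tau}$ determines $(\pi,\tau)$ uniquely (for $\tau\neq 0$, $\pi\in(0,1)$) by analytic continuation from the neighborhood of $0$, since two distinct frequencies cannot produce the same almost-periodic function. The one genuinely degenerate situation is $\tau(x)=0$, i.e. $f_0(\cdot\mid x)=f_1(\cdot\mid x)$, where responders and non-responders are indistinguishable and $\pi(x)$ cannot be recovered; but there the latent label is immaterial, since it changes nothing observable.
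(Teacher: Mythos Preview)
Your proof is correct and rests on the same two ingredients as the paper's: first recover $\mu_0$ and $g$ from the untreated arm, then pass to characteristic functions and use that $\hat g$ is continuous with $\hat g(0)=1$, hence nonvanishing on a neighborhood of the origin, to isolate the factor $q(s)=(1-\pi)+\pi e^{is\tau}$ from the treated mixture.

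The execution differs. The paper argues by contradiction: it supposes two parameter pairs $(\pi_1,\tau_1)$ and $(\pi_2,\tau_2)$ yield the same mixture, splits into cases $\pi_1=\pi_2$ versus $\pi_1\neq\pi_2$, and in the second case derives $e^{i\omega\tau_2}=1-p+pe^{i\omega\tau_1}$ with $p=\pi_1/\pi_2\in(0,1)$, then reaches a contradiction via the magnitude computation $|1-p+pe^{i\omega\tau_1}|<1$ for a suitably small $\omega>0$. Your route is more direct: you read off $q'(0)=i\pi\tau$ and $q''(0)=-\pi\tau^2$ and divide to get $\tau$, then $\pi$. This avoids the case split and the magnitude argument entirely, and it makes explicit that the observable quantities are exactly $\pi\tau$ and $\pi\tau^2$, from which $(\pi,\tau)$ is uniquely recoverable whenever $\tau\neq 0$. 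Both arguments impose no moment assumptions on $g$; your derivative step is legitimate because the equality $q(s)=(1-\pi)+\pi e^{is\tau}$ on a neighborhood of $0$ already forces $q$ to be smooth there, with derivatives matching those of the right-hand side. Your handling of the degenerate case $\tau(x)=0$ also matches the paper's implicit restriction to $\tau\neq 0$ in its restatement.
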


The upshot to \cref{thm:additive-identify} is that we can fit a semi-parametric model to \cref{eqn:additive_errors} and use the resulting model to test the hypothesis $H_0: \mathbb{E}[y - \mu_0(x)] = 0$. 
We can also directly interpret $\tau(x)$ as the CARE for a sample with covariates $x$. 
Additionally, after fitting a semi-parametric model to \cref{eqn:additive_errors}, we can also recover the ERPF from our estimate of $\pi$.

To fit a model to \cref{eqn:additive_errors}, we follow a stagewise procedure. First, we fit a nonparametric regression function $\hat{\mu}_0(x)$ to the expected outcomes for the untreated population. Next, we use a nonparametric density estimator to marginally model the residual error distribution $\hat{g}$ of $y - \hat{\mu}_0(x)$ on the untreated population. Finally, we use an EM algorithm to fit nonparametric regression functions $\hat{\pi}$ and $\hat{\mu}_1$ for the prior and responder model, respectively, on the treated population, utilizing $\hat{g}$ and $\hat{\mu}_0$. The remainder of this section details the above approach.

\subsection{Estimating the non-response distribution}
\label{subsec:additive:null}
We model $\hat{\mu}_0$ using kernel ridge regression with a radial basis function kernel, and we fit it on the untreated population, using generalized cross-validation to select the bandwidth and regularization parameters~\citep{golub:etal:1979:gen-cv}. After fitting $\hat{\mu}_0$, we compute the leave-one-out predictions for each point in the untreated population, which can be calculated efficiently in closed form. We use these conditional expectations to estimate the residual noise distribution $g$. Cross-validation generally produces a biased-upward estimate of the true error distribution \citep{efron:gong:1983:leisurely-cv}. By overestimating the tails of the residual distribution, we conservatively bias the posterior probability of a response downward and the local FDR estimates upward.

We estimate the distribution of residuals $\hat{g}$ nonparametrically. We use predictive recursion \citep{martin:tokdar:2009:predictive-recursion} for its good rates of convergence and strong performance in other two-groups models \citep{scott:etal:2014:fdr-regression,tansey:etal:2018:fdr-smoothing}.
We choose the bandwidth by maximizing the predictive recursion marginal likelihood (PRML) \citep{martin:tokdar:2011:pr-marginal-likelihood}. This yields a tighter fit to the data and makes the density estimation routine fully auto-tuned and data-adaptive.

\subsection{Estimating the prior and response distributions}
\label{subsec:additive:alternative}
With $\hat{g}$ and $\hat{\mu}_0$ in hand, we form an estimate of the alternative distribution for the treated population. Specifically, we model each member of the treatment population as arising from the mixture model
\begin{equation}
\label{eqn:residual_mixture}
(y \mid x, t=1) \sim (1-\pi(x)) \hat{g}(y - \hat{\mu}_0(x)) + \pi(x) \hat{g}(y - \mu_1(x)) \, .
\end{equation}
We fit \cref{eqn:residual_mixture} via expectation-maximization with the following steps.

\paragraph{E-step.} Fix the estimates $\hat{\pi}$ and $\hat{\mu}_1$. Calculate the expected posterior probability of each data point coming from $\hat{g}(y - \mu_0(x))$,
\begin{equation*}
\label{eqn:additive_e_step}
\hat{w}_i = \frac{(1-\hat{\pi}(x_i)) \hat{g}(y_i - \hat{\mu}_0(x))}{(1-\hat{\pi}(x_i)) \hat{g}(y_i - \hat{\mu}_0(x)) + \hat{\pi}(x_i) \hat{g}(y_i - \hat{\mu}_1(x))} \, .
\end{equation*}
The posterior expectations $\hat{w}$ then serve as weights in the M-step.

\paragraph{M-step.} Fixing the weights $\hat{w}$, the optimization problem is separable in $\mu_1$ and $\pi$,
\begin{equation*}
\label{eqn:additive_m_step}
\begin{aligned}
\hat{\pi} &= \underset{\pi}{\text{argmax}} \sum_{i=1}^n \left[ \hat{w}_i \log(1-\pi(x_i)) + (1-\hat{w}_i) \log(\pi(x_i)) \right] \\
\hat{\mu}_1 &= \underset{\mu_1}{\text{argmax}} \sum_{i=1}^n (1-\hat{w}_i) \log(\hat{g}(y_i - \mu_1(x_i))) \, .
\end{aligned}
\end{equation*}
Both $\hat{\mu}_1$ and $\hat{\pi}$ (in logit-space) are encoded as linear models on top of random Fourier features, which approximate the full kernelized models and can be fit using gradient-based methods~\citep{rahimi:recht:2007:random-fourier-features}. The bandwidth parameters are selected using $k$-fold cross validation, and final predictions for the treated population are made on each held-out fold.

\subsection{Selecting responders}
\label{subsec:method:selection}
The final E-step estimate $\hat{w}$ has the appeal of both frequentist and Bayesian interpretations. Frequentists can interpret $\hat{w}\times 100$ as a local false discovery rate. Bayesians can interpret $1-\hat{w}$ as a posterior probability of treatment response. To select responders, we order the $\hat{w}$ values in ascending order and select the largest subset $\hat{S} \subset [n]$ such that $\frac{1}{|\hat{S}|} \sum_{i \in \hat{S}} \hat{w}_i \leq \alpha$, for a target $\alpha$-level FDR. While FDR control is guaranteed if we have the true posteriors~\citep{efron2012}, in practice with finite samples we will have some estimation error. Under some regularity assumptions, we can show that that excess FDR can be bounded above as a function of the estimation error.

\begin{assumption}
\label{assump:add-assumptions}
Let $\epsilon, \lambda, L, U > 0$ be given. Suppose the following holds for all $i=1,\ldots, n$:
\begin{itemize}
    \item[(i)] $g$ is $\lambda$-Lipschitz and bounded above by $U$,
    \item[(ii)] $(1-\pi(x_i))g(y_i - \mu_0(x_i)) + \pi(x_i) g(y_i - \mu_1(x_i)) \geq L$,
    \item [(iii)] $|\pi(x_i) - \hat{\pi}(x_i)|, |\mu_0(x_i) - \hat{\mu}_0(x_i)|, |\mu_1(x_i) - \hat{\mu}_1(x_i)| \leq \epsilon$, and
    \item[(iv)] $|g(y) - \hat{g}(y)| \leq \epsilon$ for all $y \in \R$.
\end{itemize}
\end{assumption}
Assumptions (i) and (ii) require the residual distribution and the treatment distributions to be well-behaved and are similar to standard assumptions made in the density estimation literature. Assumptions (iii) and (iv) reflect the quality of the estimators. This approximation error linearly translates to FDR.
\begin{theorem}
\label{thm:additive-fdr}
Suppose \cref{assump:add-assumptions} holds. If $\epsilon \leq \min \{1, L/4(U+2(\lambda + 1)) \}$, then the procedure outlined above results in FDR bounded by $\alpha + \delta$, where $\delta = \frac{2U(U+2(\lambda + 1))}{L^2} \epsilon$.
\end{theorem}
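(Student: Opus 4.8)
The plan is to reduce the true FDR to a quantity involving the oracle posterior weights, and then to control the gap between the oracle and the estimated weights. Write
\[
w_i = \frac{(1-\pi(x_i))g(y_i-\mu_0(x_i))}{(1-\pi(x_i))g(y_i-\mu_0(x_i)) + \pi(x_i)g(y_i-\mu_1(x_i))}
\]
for the \emph{oracle} posterior probability that treated sample $i$ is a non-responder, so that $\Pr(h_i=0 \mid x_i,y_i,t_i=1)=w_i$. The crucial observation is that the selected set $\hat{V}$ is a deterministic function of the observed data $\{(x_j,y_j,t_j)\}_{j=1}^n$ alone (through $\hat g,\hat\mu_0,\hat\mu_1,\hat\pi$ and hence through $\hat w$), and does not depend on the latent $h_i$. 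Since the samples are i.i.d., conditioning on all observed data leaves each $h_i$ Bernoulli with null-probability $w_i$, independent across $i$. Hence, by the tower property (with the convention $\mathrm{FDP}=0$ on $\{\hat V=\emptyset\}$),
\[
\mathrm{FDR} = \E\!\left[\frac{1}{|\hat V|}\sum_{i\in\hat V}\ind[h_i=0]\right] = \E\!\left[\frac{1}{|\hat V|}\sum_{i\in\hat V} w_i\right].
\]
This identity holds for the data-measurable $\hat V$ regardless of how it was chosen; in particular it is stated in terms of the oracle $w_i$, not the estimates $\hat w_i$.

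Next I would split off the estimation error. Adding and subtracting $\hat w_i$,
\[
\mathrm{FDR} = \E\!\left[\frac{1}{|\hat V|}\sum_{i\in\hat V}\hat w_i\right] + \E\!\left[\frac{1}{|\hat V|}\sum_{i\in\hat V}(w_i-\hat w_i)\right] \le \alpha + \max_i |w_i - \hat w_i|,
\]
where the first term is at most $\alpha$ because the selection rule enforces $\frac{1}{|\hat V|}\sum_{i\in\hat V}\hat w_i \le \alpha$ pointwise, and the second is an average of the per-sample gaps and so is at most $\max_i|w_i-\hat w_i|$. It therefore remains to show $\max_i|w_i-\hat w_i|\le\delta$ under \cref{assump:add-assumptions}.

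The main work is this per-sample perturbation bound. Set $a_i=(1-\pi(x_i))g(y_i-\mu_0(x_i))$ and $b_i=\pi(x_i)g(y_i-\mu_1(x_i))$, with hatted analogues, so $w_i=a_i/(a_i+b_i)$ and $\hat w_i=\hat a_i/(\hat a_i+\hat b_i)$; assumption (ii) gives $a_i+b_i\ge L$. A routine triangle-inequality argument bounds each of $|\hat a_i-a_i|$ and $|\hat b_i-b_i|$ by splitting into three pieces: the error from $\hat\pi$ (controlled by $|\pi-\hat\pi|\le\epsilon$ and $g\le U$), the error from $\hat\mu$ (controlled by $\lambda$-Lipschitzness of $g$ and $|\mu-\hat\mu|\le\epsilon$), and the error from $\hat g$ (controlled by $|g-\hat g|\le\epsilon$), yielding $\max_i\{|\hat a_i-a_i|,|\hat b_i-b_i|\}\le (U+2(\lambda+1))\epsilon$. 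The smallness condition $\epsilon\le L/(4(U+2(\lambda+1)))$ then forces $|(\hat a_i+\hat b_i)-(a_i+b_i)|\le 2(U+2(\lambda+1))\epsilon\le L/2$, so the perturbed denominator stays bounded below, $\hat a_i+\hat b_i\ge L/2$. Finally, writing
\[
w_i-\hat w_i = \frac{a_i(\hat b_i-b_i) - b_i(\hat a_i-a_i)}{(a_i+b_i)(\hat a_i+\hat b_i)},
\]
using $a_i\le(1-\pi(x_i))U$ and $b_i\le\pi(x_i)U$ to obtain $|a_i(\hat b_i-b_i)-b_i(\hat a_i-a_i)|\le U\max\{|\hat a_i-a_i|,|\hat b_i-b_i|\}$, and combining with $(a_i+b_i)(\hat a_i+\hat b_i)\ge L\cdot(L/2)$, gives $|w_i-\hat w_i|\le \frac{2U(U+2(\lambda+1))}{L^2}\epsilon=\delta$.

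The first two paragraphs are the standard Bayesian-FDR reduction and are routine once one notes that $\hat V$ depends on the data only through the estimated weights. The main obstacle is the perturbation analysis of the posterior ratio, and specifically keeping the perturbed denominator away from zero — this is exactly where the lower bound $L$ and the $\epsilon$-smallness condition enter. A useful trick in controlling the numerator is the observation that $a_i+b_i\le U$ (via $a_i\le(1-\pi)U$, $b_i\le\pi U$), which saves a factor of two relative to bounding $a_i,b_i\le U$ separately and is what produces the stated constant in $\delta$.
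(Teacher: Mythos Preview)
Your proposal is correct and follows essentially the same route as the paper: the same conditioning/tower argument to reduce FDR to $\E[\tfrac{1}{|\hat V|}\sum_{i\in\hat V}w_i]$, the same add-and-subtract of $\hat w_i$ to isolate $\alpha$ plus the pointwise gap, and the same perturbation analysis of the ratio $a/(a+b)$ (the paper packages this as \cref{lem:additive-posterior-approximation}). Your explicit use of the smallness condition on $\epsilon$ to secure $\hat a_i+\hat b_i\ge L/2$, and the observation $a_i+b_i\le U$ via $a_i\le(1-\pi)U$, $b_i\le\pi U$, are exactly the devices the paper uses to land on the constant $\frac{2U(U+2(\lambda+1))}{L^2}$.
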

Thus, so long as we are reasonably accurate in our estimates, the realized FDR will be close to the target FDR.
Further, the power of the procedure can also be bounded from below as a function of the error, where the power of a procedure that selects a subset $V \subset \{ i \in [n]: t_i=1 \}$ is the expected fraction of individuals with $h_i = 1$, i.e. $\E\left[ \frac{1}{n} \sum_{i \in V} \ind[h_i = 1] \right]$. As in the FDR bound, approximation error also translates linearly to power.
\begin{theorem}
\label{thm:additive-power}
Suppose \cref{assump:add-assumptions} holds. The power of the procedure is bounded below by 
\[ \frac{1-\alpha - \delta}{1-\alpha + \delta + 1/(n_{\opt}(\alpha - \delta) +1)} \opt(\alpha - \delta), \]
where $\opt(\beta)$ is the power of the Bayes' optimal procedure that achieves FDR bounded above by $\beta$, $n_{\opt}(\beta)$ is the corresponding number of selections, and $\delta = \frac{2U(U+2(\lambda + 1))}{L^2} \epsilon$.
\end{theorem}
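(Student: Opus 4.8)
The plan is to relate the realized procedure, which thresholds the estimated posteriors $\hat w_i$ at level $\alpha$, to the Bayes-optimal oracle that thresholds the true posteriors $w_i = \pr(h_i = 0 \mid x_i, y_i)$ at the reduced level $\alpha - \delta$. I would work conditionally on the data, so that for a selection $S$ the conditional expected number of true discoveries is $R(S) = \sum_{i \in S}(1 - w_i)$ and its conditional FDR is $\tfrac{1}{|S|}\sum_{i\in S} w_i$; since power equals $\tfrac1n \E[R(\cdot)]$, it suffices to prove the claimed ratio for $R$ and then take expectations. The one external ingredient is the per-point error bound $|\hat w_i - w_i| \le \delta$ already established en route to \cref{thm:additive-fdr} from the same Lipschitz and boundedness computation under \cref{assump:add-assumptions}; this is precisely what makes approximation error in $\hat\pi,\hat\mu_0,\hat\mu_1,\hat g$ translate linearly into the posteriors.

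The first step is a selection-count comparison showing $|\hat S| \ge n_{\opt}(\alpha-\delta)$. Let $S^*$ be the oracle set of size $n^* := n_{\opt}(\alpha-\delta)$, i.e.\ the $n^*$ smallest $w_i$, which by definition satisfies $\tfrac1{n^*}\sum_{i\in S^*} w_i \le \alpha-\delta$. Inflating each term by the error bound gives $\tfrac1{n^*}\sum_{i \in S^*}\hat w_i \le (\alpha - \delta) + \delta = \alpha$, so $S^*$ is feasible for the program in \cref{eqn:selection} at level $\alpha$. Since the procedure returns a feasible set of maximum cardinality (equivalently, the smallest-$\hat w_i$ prefix whose running mean stays below $\alpha$), its size is at least $n^*$.

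The second step sandwiches the true discoveries. For the realized set, combining its defining constraint $\tfrac1{|\hat S|}\sum_{i\in\hat S}\hat w_i \le \alpha$ with $w_i \le \hat w_i + \delta$ shows its conditional FDR is at most $\alpha + \delta$, hence $R(\hat S) = |\hat S| - \sum_{i\in\hat S}w_i \ge |\hat S|\,(1-\alpha-\delta) \ge n^*(1-\alpha-\delta)$. For the oracle, I would invoke its stopping (maximality) condition: appending the next-smallest posterior $w_{(n^*+1)}$ must violate the level-$(\alpha-\delta)$ constraint, so $\sum_{i\in S^*}w_i > (n^*+1)(\alpha-\delta) - w_{(n^*+1)}$, and with $w_{(n^*+1)} \le 1$ this upper-bounds $R(S^*) = n^* - \sum_{i\in S^*} w_i$. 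Forming the ratio $R(\hat S)/R(S^*)$, simplifying, and dividing through by $n$ to pass from conditional true-discovery counts back to power yields a lower bound of the stated form $\tfrac{1-\alpha-\delta}{\,1-\alpha+\delta + 1/(n_{\opt}(\alpha-\delta)+1)\,}\,\opt(\alpha-\delta)$.

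The main obstacle is the tight constant bookkeeping in this last step. The denominator term $\tfrac{1}{n_{\opt}(\alpha-\delta)+1}$ is tracking the single \emph{boundary} sample that the oracle is forced to exclude: its posterior contributes at most $1$ but is amortized across $n^*+1$ potential selections, and recovering exactly this factor (rather than a looser $1-\alpha+\delta$ or a trivial constant) requires applying the stopping condition with the sharp inequality $w_{(n^*+1)}\le 1$ while being careful not to over-inflate $R(S^*)$. A secondary subtlety is that $n_{\opt}(\alpha-\delta)$ is itself data-dependent, so the conditional inequality must be shown to survive the outer expectation defining power; because $t \mapsto \tfrac{1-\alpha-\delta}{1-\alpha+\delta+1/(t+1)}$ is increasing, a worst-case (smallest-$n^*$) reduction or a Jensen-type step should discharge the averaging.
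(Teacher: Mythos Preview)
Your proposal is correct and follows essentially the same route as the paper: the paper packages your three steps (feasibility of $S^*$ at level $\alpha$ under $\hat w$ to get $|\hat S|\ge n^*$, the lower bound $R(\hat S)\ge n^*(1-\alpha-\delta)$, and the stopping-condition upper bound on $R(S^*)$ using $w_{(n^*+1)}\le 1$) into a standalone deterministic lemma (\cref{lem:prob-subset-sum}) and then simply invokes it together with \cref{lem:additive-posterior-approximation}. Two small notes on the obstacles you flagged: (i) your stopping-condition algebra as written produces the ratio $\tfrac{n^*}{n^*+1}\cdot\tfrac{1-\alpha-\delta}{1-\alpha+\delta}$, whereas the paper arrives at its stated constant via the intermediate claim $\tfrac{1}{|V|}\sum_{i\in V}p_i\ge \alpha-\delta-\tfrac{1}{|V|+1}$, so the bookkeeping differs slightly though both are valid lower bounds; (ii) the data-dependence of $n_{\opt}(\alpha-\delta)$ that you worry about is not addressed in the paper either --- the inequality is proved and, implicitly, stated conditionally on the realized data, so no Jensen step is needed.
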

The number of selections plays a role in the quality of the power approximation in \cref{thm:additive-power}. This factor arises due to the discrete nature of the selection problem and the fact that it may not be possible to choose a subset with predicted FDR exactly equal to some quantity. However, as the number of optimal selections increases, the approximation ratio tends to $(1-\alpha-\delta)/(1-\alpha + \delta)$. Thus, with small model error, the selection procedure tends toward optimal power.

\section{Causal two-groups with nonparametric models}
\label{sec:kernel_generalized}
The additivity assumption in \cref{eqn:additive_errors} is often violated in practice. For instance, \citet{efron:feldman:1991:compliance} observed that higher rates of compliance were associated with increased variability in outcomes for both the treatment and control groups. Further, the increase in variance differed between treatment and control groups, suggesting that the error models for the two distributions $f_0$ and $f_1$ were different in the study.

Here, we propose a generalized empirical Bayes procedure for estimation in the causal two-groups model. The goal is to be maximally flexible to the true data generating model. As such, we remove the assumption of additive errors and instead directly model the outcome distributions in \cref{eqn:two_groups}. The starting point for this approach is the following result.
\begin{lemma}
\label{lem:conservative-pi}
Suppose the generative model from \cref{eqn:two_groups} induces treated distribution $f_t(y \mid x)$ and untreated distribution $f_0(y \mid x)$. Then for any $x$, we have
\[ \pi(x) \geq \pi^\star(x) := 1 - \min_y \frac{f_t(y \mid x)}{f_0(y \mid x)}. \]
Moreover, for any function $\pi$ satisfying $\pi(x) \in [\pi^\star(x), 1]$ for all $x$, there is an alternative distribution $f^\pi_1$ such that
\begin{align*}
f_t(y \mid x) &= (1-\pi(x)) f_0(y \mid x) + \pi(x) f^\pi_1(y \mid x).
\end{align*}
\end{lemma}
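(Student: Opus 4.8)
The plan is to work entirely from the mixture identity in \cref{eqn:treatment-mixture}, namely $f_t(y \mid x) = (1-\pi(x)) f_0(y \mid x) + \pi(x) f_1(y \mid x)$, and to exploit the single constraint that any responder density $f_1$ must be nonnegative and integrate to one. Both halves of the lemma fall out of rearranging this identity for $f_1$, so no new machinery is needed beyond the decomposition already established.

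For the lower bound, I would fix $x$ and solve the mixture identity for the responder component, writing $\pi(x) f_1(y \mid x) = f_t(y \mid x) - (1-\pi(x)) f_0(y \mid x)$. Since $f_1(\cdot \mid x) \geq 0$ pointwise, the right-hand side must be nonnegative for every $y$, giving $f_t(y \mid x) \geq (1-\pi(x)) f_0(y \mid x)$. Dividing by $f_0(y \mid x)$ at every $y$ in its support and rearranging gives $\pi(x) \geq 1 - f_t(y \mid x)/f_0(y \mid x)$; since this holds for all such $y$, it holds for the tightest bound, $\pi(x) \geq 1 - \inf_y f_t(y \mid x)/f_0(y \mid x) = \pi^\star(x)$.

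For the existence direction, I would reverse the construction: given any $\pi(x) \in [\pi^\star(x), 1]$, define the candidate responder density
\[
f_1^\pi(y \mid x) := \frac{f_t(y \mid x) - (1-\pi(x)) f_0(y \mid x)}{\pi(x)}
\]
and verify it is a bona fide probability density. Nonnegativity follows because $\pi(x) \geq \pi^\star(x)$ is exactly the statement that $1 - \pi(x) \leq f_t(y \mid x)/f_0(y \mid x)$ for every $y$, making the numerator nonnegative; integration to one is immediate, since $f_t$ and $f_0$ each integrate to one, so the numerator integrates to $\pi(x)$. By construction $f_1^\pi$ then satisfies the mixture identity, which is what we wanted.

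The obstacles here are bookkeeping around degenerate cases rather than any genuine difficulty. I would adopt the convention $f_t(y \mid x)/f_0(y \mid x) = +\infty$ wherever $f_0(y \mid x) = 0$, so that the binding constraints in the infimum come only from the support of $f_0$ (points with $f_0 = 0$ impose no constraint, since $f_t \geq 0$ automatically). A short mass-comparison argument shows $\inf_y f_t/f_0 \leq 1$, hence $\pi^\star(x) \in [0,1]$ and the interval $[\pi^\star(x),1]$ is nonempty; division by $\pi(x)$ is legitimate except in the degenerate case $\pi(x) = 0$, which can only arise when $f_t = f_0$ and is handled trivially by taking $f_1^\pi$ to be any density. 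I would also state the infimum as $\inf_y$ and note it coincides with the $\min_y$ in the statement when a support or continuity condition guarantees the minimizer is attained.
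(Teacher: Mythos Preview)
Your proposal is correct and follows essentially the same approach as the paper: both argue the lower bound by rearranging the mixture identity and using nonnegativity of $f_1$, and both prove existence by explicitly defining the candidate $f_1^\pi(y\mid x) = (f_t - (1-\pi)f_0)/\pi$ and checking it is a valid density. If anything, you are more careful than the paper about the edge cases ($f_0=0$, $\pi(x)=0$, $\inf$ versus $\min$), which the paper glosses over.
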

The implication of \cref{lem:conservative-pi} is that given the observed null and treatment distributions, one can compute the most conservative mixing function, $\pi^\star$, by minimizing their ratio. This leads to a valid test for an observation $(x,y)$ that controls the Type I error rate:
\begin{equation}
\label{eqn:general-bayes-test}
    \hat{h}_{\text{Bayes}} = \ind\left[ \frac{(1 -\pi^\star(x)) f_0(y \mid x) }{\pr(y \mid x, t=1)} \leq \alpha  \right] .
\end{equation}
Thus, empirical Bayesian testing in the fully general causal two-groups model can be reduced to finding suitable estimates of $f_0$ and $f_t$.

\subsection{Estimating the non-response and treatment distributions}
\label{subsec:kernel_generalized:estimation}
We take a nonparametric approach to approximating $f_0$ and $f_t$, using a $k$ nearest-neighbor version of the Rosenblatt kernel conditional density estimator~\citep{holmes:etal:2007:kernel-density-estimation,rosenblatt:1969:kernel-density-estimation}. That is, we approximate $f_t$ as
\[ \hat{f_t}(y \mid x ) = \frac{\sum_{i=1}^k K_{h_1}(x \mid x_{j_{x,1,i}})K_{h_2}(y \mid y_{j_{x,1,i}})}{\sum_{i=1}^k K_{h_1}(x \mid x_{j_{x,1,i}})}, \]
where $K_h$ is a probability kernel with bandwidth parameter $h>0$ and $j_{x,t,i}$ is the index of the $i$-th nearest neighbor of $x$ with treatment value $t$ in the dataset. $\hat{f_0}$ is constructed identically, substituting $j_{x,0,i}$ for $j_{x,1,i}$. The bandwidth parameters $h_1, h_2$ and number of neighbors $k$ are selected via leave-one-out cross validation.

\subsection{Selecting responders}
\label{subsec:kernel_generalized:estimation}
The test in \cref{eqn:general-bayes-test} relies on ratios of densities, making it sensitive to errors in our estimates. To address this issue, we use bootstrap sampling to create a population of conditional densities and estimate the ratios by using upper/lower quantiles to produce conservative ratio estimates. The final null posterior probability estimate for observation $x_i, y_i$ is given by
\begin{align}
\hat{w}_i = \frac{\hat{f}_{0,1-q}(y_i \mid x_i) }{\hat{f}_{t,q}(y_i \mid x_i)} \cdot \min_{y} \frac{\hat{f}_{t,1-q}(y \mid x_i)}{\hat{f}_{0,q}(y \mid x_i)},
\label{eqn:generalized_w_estimate}
\end{align}
where $\hat{f}_{t,q}(y\mid x)$ is the $q$-th quantile (under the bootstrap distribution) of the conditional treatment density of $y$ given $x$. Similarly, $\hat{f}_{0,q}(y\mid x)$ is the $q$-th quantile of the conditional untreated distribution. As in the additive case, we can sort the $\hat{w}$ values and select the largest subset such that their average value is below $\alpha$. The following result shows that when the conditional distributions are estimated accurately enough, the procedure results in proper FDR control.

\begin{theorem}
\label{thm:general-fdr}
Let $\epsilon, L, U > 0$ be given. Suppose the following holds for all $i=1,\ldots, n$ and $y \in \R$:
\begin{itemize}
    \item[(i)] $L \leq f_t(y \mid x_i), f_0(y \mid x_i) \leq U$,
    \item [(ii)] $|f_t(y \mid x_i) - \hat{f}_t(y \mid x_i)| \leq \epsilon$, and
    \item[(iii)] $|f_0(y \mid x_i) - \hat{f}_0(y \mid x_i)| \leq \epsilon$.
\end{itemize}
If $\epsilon \leq \min(1, L/2)$, then the procedure outlined above results in FDR bounded by $\alpha + \delta$, where $\delta = \frac{8U}{L^2} \left( 1 + \frac{12U}{L^2}\right)\epsilon$. 
Moreover, the power of the procedure is bounded below by 
\[ \frac{1-\alpha - \delta}{1-\alpha + \delta + 1/(n_{\opt}(\alpha - \delta) +1)} \opt(\alpha - \delta), \]
where $\opt(\beta)$ is the power of the Bayes' optimal procedure under the conservative prior of \Cref{lem:conservative-pi} that achieves FDR bounded above by $\beta$, and $n_{\opt}(\beta)$ is the corresponding number of selections.
\end{theorem}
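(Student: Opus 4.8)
The plan is to mirror the structure of the proof of \cref{thm:additive-fdr}, replacing its estimation-error bookkeeping with a density-ratio perturbation argument tailored to \cref{eqn:generalized_w_estimate}. Write the idealized conservative null posterior as
\[
w_i^\star = \frac{f_0(y_i \mid x_i)}{f_t(y_i \mid x_i)} \cdot \min_y \frac{f_t(y \mid x_i)}{f_0(y \mid x_i)},
\]
which by \cref{lem:conservative-pi} equals $(1-\pi^\star(x_i)) f_0(y_i\mid x_i)/f_t(y_i\mid x_i)$, the posterior probability that sample $i$ is a non-responder under the most conservative prior $\pi^\star$. Since the true prior satisfies $\pi(x_i) \ge \pi^\star(x_i)$, the genuine null posterior $\pr(h_i=0 \mid x_i, y_i, t_i=1) = (1-\pi(x_i))f_0/f_t$ is no larger than $w_i^\star$. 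This domination is what licenses calling $w_i^\star$ ``conservative'' and is the hinge of the whole argument.

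First I would establish the FDR identity. Conditioning on the full observed data $(x_j,y_j,t_j)_{j=1}^n$, the selected set $\hat V$ is deterministic (it depends on the data only through the $\hat w_i$, never through the latent labels), while the $h_i$ of treated samples are conditionally independent with $\E[\ind[h_i=0] \mid \text{data}] = \pr(h_i=0\mid x_i,y_i,t_i=1) \le w_i^\star$. By the tower property,
\[
\mathrm{FDR} = \E\!\left[ \frac{1}{|\hat V|} \sum_{i\in\hat V} \ind[h_i=0] \right] \le \E\!\left[ \frac{1}{|\hat V|} \sum_{i\in\hat V} w_i^\star \right].
\]
Combined with the selection constraint $\frac{1}{|\hat V|}\sum_{i\in\hat V}\hat w_i \le \alpha$, it then suffices to prove the pointwise estimation bound $w_i^\star \le \hat w_i + \delta$, since averaging over $\hat V$ immediately yields $\mathrm{FDR}\le\alpha+\delta$.

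The crux, and the step I expect to be the main obstacle, is this pointwise bound. Here I would use a perturbation lemma for ratios: if $a,b\in[L,U]$, $|\hat a - a|,|\hat b - b|\le\epsilon$, and $\epsilon \le L/2$, then $|\hat a/\hat b - a/b| \le 4U\epsilon/L^2$, because the denominator stays above $L/2$ while the numerator error is at most $2U\epsilon$. Taking assumptions (ii)--(iii) to apply to each bootstrap quantile estimate (so every quantile lies within $\epsilon$ of the corresponding truth), this controls the perturbation of the leading factor $f_0(y_i)/f_t(y_i)$ and, uniformly in $y$, of the integrand $f_t(y)/f_0(y)$ inside the minimum. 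Because $|\min_y g(y)-\min_y\hat g(y)|\le\sup_y|g(y)-\hat g(y)|$, the minimum term inherits the same uniform ratio bound, which is where the ``for all $y\in\R$'' form of the assumptions is essential. Writing $w_i^\star$ and $\hat w_i$ as products of these two controlled quantities and expanding $|\hat R\hat M - RM|\le|\hat R - R|\,\hat M + R\,|\hat M - M|$, together with the a priori bounds $R\le U/L$ and $M\le 1$, gives $|w_i^\star - \hat w_i| \le \delta$; using $\epsilon\le 1$ to absorb quadratic terms, the constants collect into $\delta = \frac{8U}{L^2}\big(1+\frac{12U}{L^2}\big)\epsilon$.

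Finally, the power bound is identical in form to \cref{thm:additive-power}, and I would obtain it by the same argument: the $\delta$-accuracy of $\hat w$ implies that every set whose true conservative average is at most $\alpha-\delta$ is feasible for the estimated procedure at level $\alpha$, so the realized selection is competitive with the Bayes-optimal procedure $\opt(\alpha-\delta)$ under the conservative prior of \cref{lem:conservative-pi}; the discreteness of the selection problem contributes the $1/(n_{\opt}(\alpha-\delta)+1)$ term, exactly as in the additive case.
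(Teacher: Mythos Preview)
Your proposal is correct and follows essentially the same route as the paper. The paper isolates two lemmas---one bounding $|\hat\pi^\star(x_i)-\pi^\star(x_i)|$ by $8U\epsilon/L^2$ and a second propagating that into the posterior bound $|w_i^\star - \hat w_i|\le\delta$---whereas you fold both into a single product decomposition $w_i^\star=R\cdot M$ with a generic ratio-perturbation lemma; these are the same computation organized differently. Your explicit use of the domination $\pr(h_i=0\mid x_i,y_i,t_i=1)=(1-\pi(x_i))f_0/f_t\le(1-\pi^\star(x_i))f_0/f_t=w_i^\star$ is in fact cleaner than the paper's presentation, which silently identifies ``$\pr(H=0\mid x_i,y_i,t_i=1)$'' with the conservative posterior $w_i^\star$ rather than the true one when invoking its approximation lemma. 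The power argument is identical in both: it reduces to the combinatorial subset lemma (\cref{lem:prob-subset-sum}) already established for the additive case.
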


We further control FDR with an empirical estimate of the FDR, calculated by performing the above selection procedure on both the treated and untreated populations over a grid of nominal FDR values $\alpha_1 < \alpha_2 < \cdots < \alpha_m$, and taking $e_k$ to be the ratio of the number of untreated selected at level $\alpha_k$ over the number of treated selected at level $\alpha_k$. Selection at level $\alpha$ proceeds by finding the largest $\alpha_k \leq \alpha$ such that $e_k \leq \alpha_k$ and selecting at level $\alpha_k$. We call this procedure \emph{empirical control}.

\subsection{Estimand intervals}
\label{subsec:kernel_generalized:estimands}
Beyond providing a test with valid FDR, \cref{lem:conservative-pi} also provides insights on the range of possible effects that can be ascribed to a possible response distribution. Specifically, because the model is not identifiable, one can not estimate a single latent response effect. Rather, \cref{lem:conservative-pi} implies that there is an \emph{interval} of feasible effects as summed up by the following corollary.

\begin{corollary}
\label{corr:np-ite}
Let $f_0(y \mid x)$, $f_t(y \mid x)$, and $\pi^\star(x)$ be as defined in \cref{lem:conservative-pi}. If $f_1$ is a valid responder distribution in \cref{eqn:two_groups}, we must have 
\[ \E_{f_1}[Y|X=x] \in \left[\mu_t(x),  \mu_1^\star(x) \right] \cup \left[\mu_1^\star(x), \mu_t(x) \right], \]
where 
\begin{align*}
\mu_t(x) = \E[Y | X=x, T=1], 
\mu_0(x) = \E[Y | X=x, T=0], \text{ and }
\mu_1^\star(x) = \frac{1}{\pi^\star(x)}(\mu_t(x) - \mu_0(x))  - \mu_0(x).
\end{align*}
Moreover, for any $v \in \left[\mu_t(x),  \mu_1^\star(x)\right] \cup \left[\mu_1^\star(x), \mu_t(x) \right]$, there is a valid responder distribution $f_1$ such that $\E_{f_1}[Y|X=x] = v$.
\end{corollary}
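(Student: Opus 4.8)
The plan is to leverage \cref{lem:conservative-pi} to reduce the corollary to a one-dimensional monotonicity argument. That lemma already certifies which mixing functions are feasible, namely $\pi(x) \in [\pi^\star(x), 1]$, and that for each such $\pi$ the identity $f_t = (1-\pi)f_0 + \pi f_1^\pi$ holds. The first observation I would make is that, for a fixed value $\pi(x)$, this identity \emph{uniquely} determines the responder density as $f_1^\pi(y \mid x) = [f_t(y\mid x) - (1-\pi(x))f_0(y\mid x)]/\pi(x)$, since the equation is linear in the densities. Hence the set of valid responder distributions at $x$ is genuinely a one-parameter family indexed by $\pi(x) \in [\pi^\star(x), 1]$, and the corollary amounts to tracking how the scalar $\E_{f_1}[Y \mid X=x]$ moves as this parameter varies.

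For the containment direction I would start from an arbitrary valid responder distribution $f_1$ with corresponding prior $\pi(x)$, so that $\pi(x) \in [\pi^\star(x), 1]$ by \cref{lem:conservative-pi}. Integrating $y$ against the mixture identity (assuming $\mu_t(x)$ and $\mu_0(x)$ are finite, which makes every term integrable for $\pi(x)>0$) yields the scalar relation $\mu_t(x) = (1-\pi(x))\mu_0(x) + \pi(x)\,\E_{f_1}[Y\mid X=x]$. Solving for the responder mean gives $\E_{f_1}[Y \mid X=x] = \mu_0(x) + (\mu_t(x)-\mu_0(x))/\pi(x) =: \psi(\pi(x))$. The key structural fact is that $\psi$ is continuous and monotone in $\pi$ on $(0,1]$, since $\psi'(\pi) = -(\mu_t(x)-\mu_0(x))/\pi^2$ has constant sign. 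Evaluating $\psi$ at the two endpoints $\pi=1$ and $\pi=\pi^\star(x)$ recovers $\mu_t(x)$ and $\mu_1^\star(x)$ respectively, so monotonicity forces $\psi(\pi(x))$ into the closed interval between these endpoints, which is exactly $[\mu_t(x),\mu_1^\star(x)] \cup [\mu_1^\star(x),\mu_t(x)]$.

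For the converse (the ``moreover'' claim) I would run the same map backwards. Given any target value $v$ in that interval, continuity and monotonicity of $\psi$ over the compact interval $[\pi^\star(x),1]$ imply, by the intermediate value theorem, that there is some $\pi_v \in [\pi^\star(x),1]$ with $\psi(\pi_v)=v$. Invoking \cref{lem:conservative-pi} at $\pi_v$ produces a valid responder distribution $f_1^{\pi_v}$, and by the uniqueness noted above its mean is precisely $\psi(\pi_v)=v$, which completes the construction.

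Because \cref{lem:conservative-pi} does the real work of deciding feasibility of the $(\pi, f_1)$ pairs, the remaining obstacle is bookkeeping around degenerate cases rather than any substantive difficulty. I would need to handle (i) the boundary case $\pi^\star(x)=0$, where $\min_y f_t/f_0 = 1$ together with $\int f_t = \int f_0 = 1$ forces $f_t = f_0$, so the $\mu_1^\star$ expression is an indeterminate $0/0$ and monotonicity must be stated on $(0,1]$ with the endpoint read as a limit; and (ii) the case $\mu_t(x)=\mu_0(x)$ with $\pi^\star(x)>0$, where $\psi$ is constant and the interval collapses to the single point $\mu_t(x)$. Both are routine, as is the interchange of integration with the finite mixture decomposition.
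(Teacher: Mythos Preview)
Your argument is correct and is exactly the intended one: the paper does not supply a separate proof of this corollary, treating it as immediate from \cref{lem:conservative-pi} via the mean-of-the-mixture computation and the monotonicity-in-$\pi$ argument that you spell out. One remark: your $\psi(\pi^\star)=\mu_0(x)+(\mu_t(x)-\mu_0(x))/\pi^\star(x)$ is the correct value dictated by the mixture identity, so the $-\mu_0(x)$ in the paper's displayed formula for $\mu_1^\star(x)$ appears to be a typographical slip for $+\mu_0(x)$ rather than an error in your reasoning.
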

In the above statement, we have used the convention that $[a,b]$ is empty whenever $b<a$. Thus, for any individual $x$, the CARE lies in an interval that is bounded on one side by the CATE value $\mu_t(x) - \mu_0(x)$ and on the other side by the extremal CARE value of $\mu_1^\star(x) - \mu_0(x)$.

\begin{figure*}
\centering
\includegraphics[width=.95\textwidth]{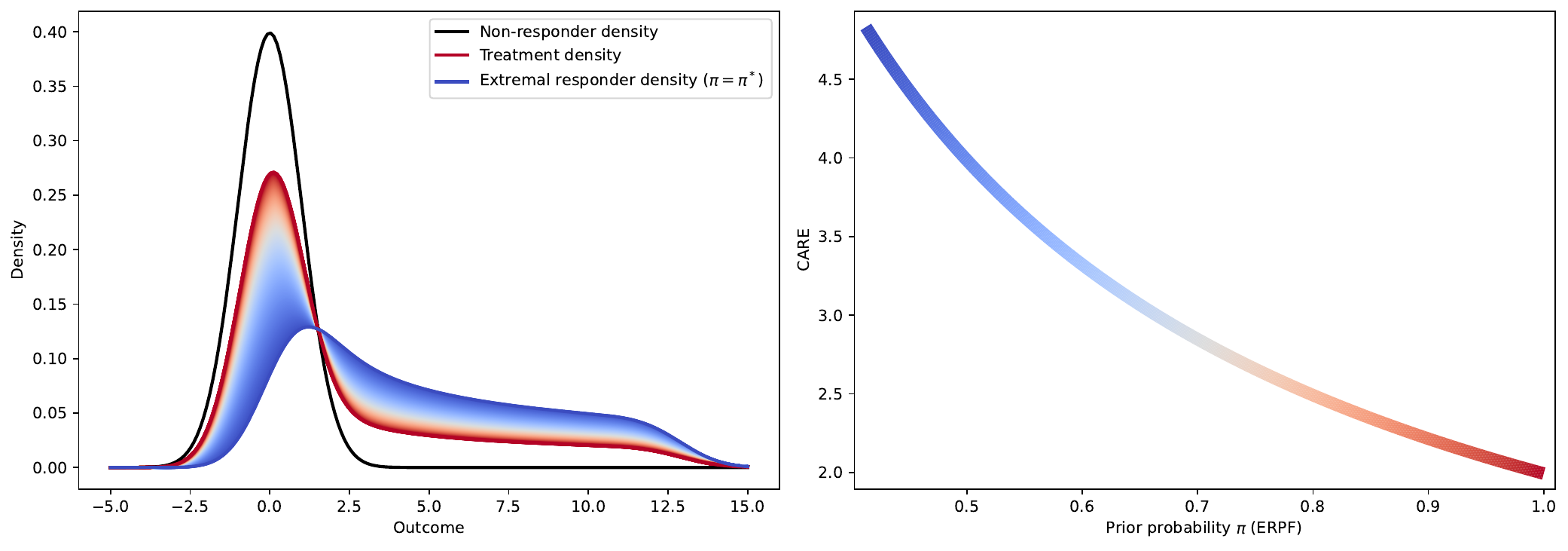}
\caption{Nonparametric C2G example. \emph{Left:} An example of a non-responder density (black), a treatment density (red), an extremal responder density (blue), and the range of feasible responder densities (gradient). \emph{Right:} Each feasible responder density has an associated prior probability and CARE value.}
\label{fig:npc2g-example}
\end{figure*}

\cref{fig:npc2g-example} shows an example of the nonparametric C2G setup: the control density, the treatment density, and the range of feasible responder densities. We see that at one end of the scale, the treatment density itself is a feasible alternative, with an associated prior probability $\pi(x) = 1$ and induced CARE equal to the CATE. On the other end of the scale, there is an extremal responder density with a smaller associated prior probability ($\pi(x) = \pi^\star(x)$) and larger CARE value. The relationship between these quantities is that as the prior probability grows towards one, the CARE shrinks towards the CATE, reflecting the fact that the associated responder distribution must also shrink towards the treatment distribution itself.

\section{Robustness against unmeasured confounding}
\label{sec:confounding}

\begin{figure*}[t]
\centering
\begin{subfigure}{0.22\linewidth}
\centering
\scalebox{0.7}{\begin{tikzpicture}

  \node[obs, ]                               (x) {$X$};
  \node[latent, below=of x]                (h) {$H$};
  \node[obs, left=of h]                   (t) {$T$};
  \node[obs, right=of h]                   (y) {$Y$};
  \node[latent, right=of x]                 (u) {$U$};

  \edge {x} {t} ; %
  \edge {x} {h}  ; %
  \edge {x} {y}  ; %
  \edge {t} {h}  ; %
  \edge {h} {y}  ; %
  \edge {u} {t}  ;
  \edge {u} {y}  ;

\end{tikzpicture}}
\caption{\centering \label{fig:confounding:canonical}Canonical\newline confounding}
\end{subfigure}\quad
\begin{subfigure}{0.22\linewidth}
\centering
\scalebox{0.7}{\begin{tikzpicture}

  \node[obs, ]                               (x) {$X$};
  \node[latent, below=of x]                (h) {$H$};
  \node[obs, left=of h]                   (t) {$T$};
  \node[obs, right=of h]                   (y) {$Y$};
  \node[latent, right=of x]                 (u) {$U$};

  \edge {x} {t} ; %
  \edge {x} {h}  ; %
  \edge {x} {y}  ; %
  \edge {t} {h}  ; %
  \edge {h} {y}  ; %
  \edge {u} {t}  ;
  \edge {u} {h}  ;

\end{tikzpicture}}
\caption{\centering \label{fig:confounding:compliance}Response\newline confounding}
\end{subfigure}\quad
\begin{subfigure}{0.22\linewidth}
\centering
\scalebox{0.7}{\begin{tikzpicture}

  \node[obs, ]                               (x) {$X$};
  \node[latent, below=of x]                (h) {$H$};
  \node[obs, left=of h]                   (t) {$T$};
  \node[obs, right=of h]                   (y) {$Y$};
  \node[latent, right=of x]                 (u) {$U$};

  \edge {x} {t} ; %
  \edge {x} {h}  ; %
  \edge {x} {y}  ; %
  \edge {t} {h}  ; %
  \edge {h} {y}  ; %
  \edge {u} {h}  ;
  \edge {u} {y}  ;

\end{tikzpicture}}
\caption{\centering \label{fig:confounding:effect}Effect\newline confounding}
\end{subfigure}\quad
\begin{subfigure}{0.22\linewidth}
\centering
\scalebox{0.7}{\begin{tikzpicture}

  \node[obs, ]                               (x) {$X$};
  \node[latent, below=of x]                (h) {$H$};
  \node[obs, left=of h]                   (t) {$T$};
  \node[obs, right=of h]                   (y) {$Y$};
  \node[latent, right=of x]                 (u) {$U$};

  \edge {x} {t} ; %
  \edge {x} {h}  ; %
  \edge {x} {y}  ; %
  \edge {t} {h}  ; %
  \edge {h} {y}  ; %
  \edge {u} {t}  ;
  \edge {u} {h}  ;
  \edge {u} {y}  ;

\end{tikzpicture}}
\caption{\centering \label{fig:confounding:total}Total\newline confounding}
\end{subfigure}
\caption{\label{fig:confounding}
The four types of latent confounding in the causal two-groups model.}
\end{figure*}
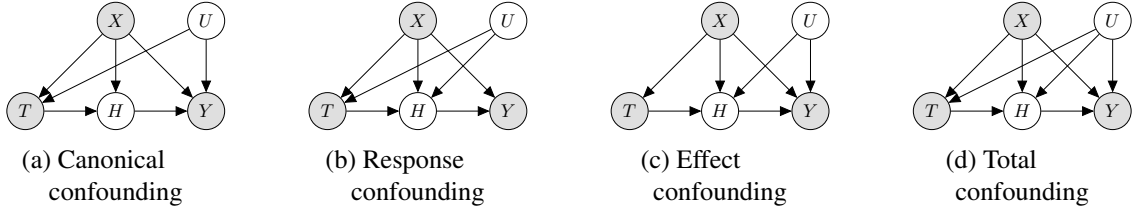

In this section, we consider how latent confounding impacts statistical inferences in the causal two-groups model. Latent confounding in the C2G model occurs when an unobserved variable $U$ affects two or more of the variables in $\{T, H, Y \}$.  \Cref{fig:confounding} shows the four types of possible latent confounding scenarios.

The key question is whether either of the C2G models proposed above remains \emph{well-specified} under any of these latent confounding scenarios. By well-specified, we mean the observed non-responder and treatment distributions are within the class of probability distributions specified by the modeling assumptions. A well-specified model under latent confounding ensures that integrating out the latent confounder in the untreated group yields the original non-responder distribution and in the treatment group yields the original mixture model,
\begin{equation}
\begin{aligned}
\int p(y | x, u, t=0) p(u) du & = f_0(y \mid x) = p(y | x, h=0, t=1) \, , \\
\int p(y | x, u, t=1) p(u) du &= (1-\pi(x)) f_0(y \mid x) + \pi(x) f_1(y \mid x) \, ,
\end{aligned}
\label{eqn:well-specified}
\end{equation}
where $(\pi, f_0, f_1)$ are specified by the modeling assumptions (i.e. parametric, semi-parametric, or nonparametric). If a model remains well-specified in the presence of latent confounding, statistical inferences will still be valid.

\subsection{Confounding in the additive model}
\label{subsec:confounding:additive}

The additive causal two-groups model requires that the $(f_0, f_1)$ distributions obey \cref{eqn:additive_errors}. Because \cref{eqn:additive_errors} is agnostic to the relationship between $T$ and $H$, it is straightforward to show that the additive causal two groups model remains correctly specified and applicable under response confounding.
\begin{proposition}
\label{prop:compliance-confounding-ac2g}
The additive causal two groups model is robust against response confounding.
\end{proposition}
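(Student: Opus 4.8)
The plan is to verify the two well-specified conditions of \cref{eqn:well-specified} directly, exploiting the single structural feature of response confounding (\cref{fig:confounding:compliance}): the latent $U$ points into $T$ and $H$ but \emph{not} into $Y$. Consequently, conditioning on the effect indicator $H$ screens $Y$ off from $U$ (and from $T$, since $T$ reaches $Y$ only through $H$), so that $p(y \mid x, u, h) = p(y \mid x, h)$ for every $u$. First I would record this conditional-independence fact, reading it off the graph by d-separation, and observe that it means the outcome mechanism is untouched by the confounder: the additive forms $f_0(y \mid x) = g(y - \mu_0(x))$ and $f_1(y \mid x) = g(y - \mu_1(x))$ from \cref{eqn:additive_errors} continue to describe $p(y \mid x, h=0)$ and $p(y \mid x, h=1)$.

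For the untreated arm, I would use the fact that $t=0$ forces $h=0$ in \cref{eqn:two_groups}, so $p(y \mid x, u, t=0) = p(y \mid x, h=0) = f_0(y \mid x)$ independently of $u$; integrating out $u$ therefore returns $f_0(y \mid x)$ verbatim, giving the first line of \cref{eqn:well-specified}. For the treated arm, I would expand $p(y \mid x, u, t=1)$ over $h \in \{0,1\}$ and apply the screening identity to replace each $p(y \mid x, u, h, t=1)$ with $f_h(y \mid x)$, yielding the $u$-dependent mixture $(1 - \tilde\pi(x,u)) f_0(y \mid x) + \tilde\pi(x,u) f_1(y \mid x)$ with $\tilde\pi(x,u) = p(h=1 \mid x, u, t=1)$. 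Because $f_0$ and $f_1$ carry no dependence on $u$, marginalizing $u$ passes the integral through to the weight alone, producing exactly $(1-\pi(x)) f_0(y \mid x) + \pi(x) f_1(y \mid x)$, the second line of \cref{eqn:well-specified}.

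The one point requiring care — and the closest thing to an obstacle — is the marginalization in the treated arm: since $U \to T$, the confounder is not independent of the treatment group, so the averaging over $u$ must be taken against the within-group law $p(u \mid x, t=1)$ rather than the marginal $p(u)$. I would make this explicit and check that the resulting effective prior $\pi(x) = \int \tilde\pi(x,u)\, p(u \mid x, t=1)\, du = p(h=1 \mid x, t=1)$ is still a valid probability in $[0,1]$, so that the collapsed mixture is a legitimate instance of \cref{eqn:two_groups}. Everything else is mechanical: the additive assumption constrains only the outcome-side conditionals $p(y \mid x, h)$, whereas response confounding perturbs only the joint law of $(T,H)$, leaving those conditionals — and hence the additive specification — intact. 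The net effect of the confounding is absorbed entirely into a shifted but admissible $\pi(x)$, which establishes well-specifiedness and proves the proposition.
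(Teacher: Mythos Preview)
Your proposal is correct and follows essentially the same route as the paper's proof: both arguments use that $Y \indep U \mid X, H$ (since $U$ does not point into $Y$), verify the untreated identity via $t=0 \Rightarrow h=0$, and collapse the treated-arm mixture by defining $\pi(x) = \int p(h=1 \mid x,u,t=1)\,p(u \mid x, t=1)\,du = p(h=1 \mid x, t=1)$. Your explicit remark that the marginalization must be against $p(u \mid x, t=1)$ rather than the unconditional $p(u)$ is a welcome clarification that the paper handles only implicitly.
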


The story changes when we allow an unobserved random variable to affect the output $Y$. Indeed, even when this variable only affects the conditional means of the non-responder and responder distributions by an additive shift while leaving the noise distribution unchanged, the model is misspecified.

\begin{proposition}
\label{prop:latent-nonrobust-ac2g}
The additive causal two-groups model is misspecified in the presence of an unobserved $U$ such that 
\begin{align*}
Y | X=x, U=u, H=h &=& \mu_{h}(x) + \rho_{h}(u) + \epsilon  \\
\epsilon &\sim& g.
\end{align*}
As a consequence, the additive causal two-groups model is not robust in the presence of canonical, effect, or total confounding.
\end{proposition}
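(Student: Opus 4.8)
The plan is to show that marginalizing out the confounder $U$ turns the clean two-group additive structure of \cref{eqn:additive_errors} into a location mixture of the noise density with strictly more than two components, which the additive model cannot represent. Recall that the additive model forces both $f_0$ and $f_1$ to be location shifts of a \emph{single} noise density $g$. Translated through \cref{eqn:well-specified}, this means the observed untreated density must equal $g$ up to a shift, and the observed treated density must be expressible as a \emph{two}-component location mixture of that same $g$. I would therefore establish misspecification by exhibiting, for the relevant confounding graphs, an additive-in-$u$ mechanism whose induced observed densities provably violate this two-component location-mixture form.

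First I would fix the cleanest instance, effect confounding (\cref{fig:confounding:effect}), with $X$ constant, $U \indep T$, $U \sim \mathrm{Bernoulli}(1/2)$, $g$ a fixed Gaussian, $\rho_0 \equiv 0$, and $\rho_1$ taking two distinct values $\rho_1(0)=0 \ne \rho_1(1)=c$. Since $\rho_0 \equiv 0$, integrating out $U$ in the untreated group gives $P_0 = g(\cdot - \mu_0)$ exactly, which pins the model noise shape to this Gaussian. I would then compute the treated density by marginalizing $U$ within each latent class: the non-responder component is again $g(\cdot-\mu_0)$, but conditioning on $H=1$ reweights rather than eliminates either value of $U$, so the responder component is a genuine two-atom location mixture of $g$ at $\mu_1$ and $\mu_1+c$. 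Choosing $\mu_0,\mu_1,\mu_1+c$ distinct makes the observed treated density a location mixture of $g$ at three distinct points with strictly positive weights.

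The key step, and the main obstacle, is to conclude non-representability, i.e. to rule out \emph{every} alternative two-component additive decomposition rather than just the ``true'' one. I would handle this via injectivity of convolution with $g$: because the characteristic function of a Gaussian never vanishes, the map sending a finite mixing measure to $g$ convolved with it is injective, so the mixing measure (hence its number of atoms) is uniquely determined by the density. A well-specified additive model would require the treated density to be $g$ convolved with a two-atom measure, contradicting the uniquely determined three-atom representation. Equivalently, and more transparently once the decomposition is identified, the induced responder component has variance $\mathrm{Var}(g) + \mathrm{Var}(\rho_1(U)\mid H=1) > \mathrm{Var}(g)$, whereas any additive-model responder density, being a pure location shift of $f_0$, must share the non-responder variance $\mathrm{Var}(g)$.

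Finally I would transfer the conclusion to the remaining graphs by noting that canonical, effect, and total confounding (\cref{fig:confounding:canonical,fig:confounding:effect,fig:confounding:total}) all contain the edge $U \to Y$, which is precisely what the additive term $\rho_h(u)$ encodes. For canonical confounding, where $U \to T$ replaces $U \to H$, the same convolution argument applies through selection: $U \mid T{=}0$ and $U \mid T{=}1$ have different laws, so the untreated density and the treated non-responder component are convolutions of $g$ with different push-forward measures of $\rho(U)$, again breaking the common-shape requirement even when $\rho_0=\rho_1$. Total confounding contains both mechanisms. Response confounding (\cref{fig:confounding:compliance}) is the lone graph lacking the $U \to Y$ edge, so the $\rho_h(u)$ term is absent and robustness is retained, consistent with \cref{prop:compliance-confounding-ac2g}.
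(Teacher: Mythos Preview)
Your argument is correct, but it takes a longer route than the paper's. The paper chooses a continuous Gaussian confounder with $\rho_0(u)=u$ and $\rho_1\equiv 0$: marginalizing $U\sim\Ncal(0,1)$ gives $Y\mid H=0\sim\Ncal(\mu_0,2)$ and $Y\mid H=1\sim\Ncal(\mu_1,1)$, so the two conditional outcome densities have \emph{different variances} and therefore cannot both be location shifts of a single $g$---end of proof. You instead take $U$ Bernoulli with $\rho_0\equiv 0$ and $\rho_1$ nontrivial, which forces you to argue that a three-atom Gaussian location mixture cannot be rewritten as a two-atom one, invoking injectivity of convolution with a Gaussian (equivalently, nonvanishing characteristic function). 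That is sound, but the paper sidesteps the whole mixture-identifiability step by choosing the confounder so that marginalization preserves Gaussianity and the contradiction is visible at the level of second moments. What your version buys is a more explicit treatment of the ``as a consequence'' clause: you separately address effect and canonical confounding and explain how the $U\to Y$ edge is the common culprit, whereas the paper's proof exhibits a single counterexample (with $H$ independent of $U$) and leaves the graph-by-graph transfer implicit. Your variance remark at the end (``the induced responder component has variance strictly larger than $\mathrm{Var}(g)$'') is in fact the paper's whole proof in miniature, just applied to $f_1$ rather than $f_0$; you could have led with that and skipped the three-component machinery entirely.
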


\subsection{Confounding in the nonparametric model}
\label{subsec:confounding:nonparametric}

For the nonparametric method, there are only two modeling assumptions. First, the outcome distribution of the untreated samples must match that of the treated non-responders. Second, the outcome distribution of the treated samples can be written as a mixture model of the form given by \cref{eqn:treatment-mixture}.
The following result shows that this minimalist approach offers an added benefit over the additive model in the form of validity under effect confounding as well as response confounding.

\begin{proposition}
\label{prop:confounding-npc2g}
The nonparametric causal two-groups model is robust against response and effect confounding.
\end{proposition}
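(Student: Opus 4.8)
The plan is to verify that the two modeling assumptions underpinning the nonparametric procedure — that the untreated outcome law serves as the (conservative) non-responder reference for the treated population, and that the treated law is a mixture of the form in \cref{eqn:treatment-mixture} — survive marginalizing out the latent confounder $U$ under both response and effect confounding. The single structural fact I would exploit is that neither scenario contains an edge $U \to T$ (unlike canonical and total confounding). Consequently $T \perp U \mid X$, so conditioning on treatment status does not distort the confounder law: $p(u \mid x, t=0) = p(u \mid x, t=1) = p(u \mid x)$. I expect this identity to do essentially all of the work, and its failure when $U \to T$ is present to be exactly why the remaining two confounding types are not covered.

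Response confounding is the easy case and mirrors \cref{prop:compliance-confounding-ac2g}. Because $U \not\to Y$, the outcome kernel is untouched, $p(y \mid x, h, u) = p(y \mid x, h)$, so the untreated and treated-non-responder outcome distributions both reduce to the original $f_0(y \mid x)$, and the treated distribution is the original mixture with effective prior $\pi(x) = \int \pi(x,u)\, p(u \mid x)\, du$. Both conditions in \cref{eqn:well-specified} then hold verbatim, and I would dispatch this case in a line or two.

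The crux is effect confounding, where $U$ acts on both $H$ and $Y$. Here I would marginalize $U$ to write the untreated law as $f_0(y \mid x) = \int p(y \mid x, h=0, u)\, p(u \mid x)\, du$ and split the treated law into a non-responder part $\int (1-\pi(x,u))\, p(y \mid x, h=0, u)\, p(u \mid x)\, du$ and a responder part. The key step is the pointwise inequality that this non-responder part is at most $f_0(y \mid x)$, which follows because $1 - \pi(x,u) \le 1$ and — crucially — the identical confounder law $p(u \mid x)$ appears in both expressions by virtue of $U \not\to T$. This domination is precisely the statement that the untreated distribution is a valid conservative non-responder reference in the sense of \cref{lem:conservative-pi}, so the conservative choice $\pi^\star$ and the FDR/power guarantees of \cref{thm:general-fdr} carry over to the true latent $h$.

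I expect the main obstacle to be this last case, because $U \to H$ and $U \to Y$ together make the treated non-responder sub-population a reweighted copy of the untreated population, so the untreated law does not exactly equal the treated-non-responder law. The real work is to show the conservative machinery tolerates this mismatch — that pointwise domination of the true non-responder density by the untreated density is enough for valid (conservative) selection even though the densities themselves differ — and to confirm that this domination genuinely hinges on $U \not\to T$, isolating why canonical and total confounding fall outside the guarantee.
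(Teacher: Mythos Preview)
The paper's proof is a three-line argument that invokes a single conditional independence, $Y\indep T\mid X,H$: once this holds, $p(y\mid x,T{=}1,H{=}h)=p(y\mid x,H{=}h)$, and the mixture form of \cref{eqn:treatment-mixture} follows immediately with $f_0(y\mid x)=p(y\mid x,H{=}0)=p(y\mid x,T{=}0)$. There is no explicit marginalization over $U$ and no domination step.

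Your plan has two gaps. First, a factual slip: response confounding (\cref{fig:confounding:compliance}) \emph{does} carry the edge $U\to T$, so your organizing claim that ``neither scenario contains an edge $U\to T$'' is false there, and $T\indep U\mid X$ fails. Your response-confounding argument still goes through because it really rests on $U\not\to Y$ (so $p(y\mid x,h,u)=p(y\mid x,h)$), not on $T\indep U\mid X$; but the effective prior should be $\int\pi(x,u)\,p(u\mid x,t{=}1)\,du$ rather than $\int\pi(x,u)\,p(u\mid x)\,du$.

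Second, and more seriously, your domination argument for effect confounding does not deliver conservative selection. Writing $A(y\mid x)=\int(1-\pi(x,u))\,p(y\mid x,h{=}0,u)\,p(u\mid x)\,du$, you correctly have $A(y\mid x)\le f_0(y\mid x)$, but the NP-C2G null posterior is $(1-\pi^\star(x))\,f_0(y\mid x)/f_t(y\mid x)$ and you need this to dominate the \emph{true} null posterior $A(y\mid x)/f_t(y\mid x)$, i.e.\ $(1-\pi^\star(x))\,f_0(y\mid x)\ge A(y\mid x)$ for every $y$. Since $1-\pi^\star(x)=\min_{y'}f_t(y'\mid x)/f_0(y'\mid x)$, this would require $\min_{y'}f_t/f_0\ge\max_y A/f_0$, which fails in general (e.g.\ binary $U$ with $\pi(x,0)=0$, $\pi(x,1)=1$, and $p(\,\cdot\mid x,h{=}0,u)$ supported on disjoint halves of the outcome space for $u=0,1$). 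So domination alone neither establishes well-specification in the sense of \cref{eqn:well-specified} (which demands equality) nor guarantees conservative FDR. Your suspicion that the effect-confounding case is delicate is well placed --- conditioning on the collider $H$ on the path $T\to H\leftarrow U\to Y$ does break $Y\indep T\mid X,H$ --- but the fix you propose does not close the gap.
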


The reason why the nonparametric causal two-groups model remains well-specified in the presence of these two types of confounding is that both still allow for $Y \indep T \mid X,  H$. When this conditional independence relationship holds, the non-responder outcome distribution coincides with the untreated outcome distribution. This allows us to write the treatment distribution as the mixture in \cref{eqn:treatment-mixture}, thereby satisfying the conditions of \cref{eqn:well-specified}. When we do not have $Y \indep T \mid X,  H$, we cannot guarantee that the nonparametric causal two-groups model is well-specified, as the following result shows.

\begin{proposition}
\label{prop:canonical-confounding}
The nonparametric causal two-groups model is not robust in the presence of canonical or total confounding.
\end{proposition}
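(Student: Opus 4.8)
The plan is to prove non-robustness constructively: for each of the two structures I would exhibit an explicit joint distribution conforming to the corresponding DAG that violates the first well-specification requirement of \cref{eqn:well-specified}, namely that the treated non-responder law matches the untreated law, $p(y \mid x, h=0, t=1) = p(y \mid x, t=0)$. The second requirement (mixture form) imposes no real constraint here, since by \cref{lem:conservative-pi} the observed treated distribution can always be written as a mixture with $f_0$ equal to the observed untreated distribution; so the binding condition is precisely this matching of non-responder laws. The conceptual reason it can fail is the one already flagged in the text: in both \cref{fig:confounding:canonical} and \cref{fig:confounding:total} the back-door path $T \leftarrow U \to Y$ is active given $\{X, H\}$, so $Y \not\indep T \mid X, H$, and conditioning on the treatment arm reweights the latent $U$ in a way that shifts the outcome of the non-responders.

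For the construction I would take $X$ constant (a special case of the model) and let $U \in \{0,1\}$ be binary. I would couple treatment to $U$ via $\pr(T=1 \mid U=u) = \phi_u$ with $\phi_0 \neq \phi_1$, and let the outcome carry a $U$-dependent shift, $Y \mid H=h, U=u \sim g(\cdot - \mu_h - \nu_u)$ with $\nu_0 \neq \nu_1$. The mechanism is then a one-line Bayes computation: $p(u \mid t=0) \propto (1-\phi_u)\,p(u)$ while $p(u \mid t=1) \propto \phi_u\,p(u)$, so differential assignment $\phi_0 \neq \phi_1$ forces $p(u \mid t=0) \neq p(u \mid t=1)$.

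In the canonical case, $H$ has parents $\{X,T\}$ only, so $U \indep H \mid T$ and $p(u \mid h=0, t=1) = p(u \mid t=1)$. Writing both laws as mixtures over $u$,
\[
p(y \mid t=0) = \sum_u p(u \mid t=0)\, g(y - \mu_0 - \nu_u),
\qquad
p(y \mid h=0, t=1) = \sum_u p(u \mid t=1)\, g(y - \mu_0 - \nu_u),
\]
these are mixtures of the same two shifted components with unequal weights, hence unequal whenever $\nu_0 \neq \nu_1$; comparing means, $\sum_u p(u \mid t=0)\nu_u$ versus $\sum_u p(u \mid t=1)\nu_u$, already exhibits the gap. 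This violates \cref{eqn:well-specified} and settles the canonical case.

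The total confounding case is where I expect the real work, since now $U \to H$ as well, so conditioning on $H=0$ additionally reweights $U$: $p(u \mid h=0, t=1) \propto (1-\pi_u)\,\phi_u\,p(u)$, with $\pi_u = \pr(H=1 \mid T=1, U=u)$. The hazard is that the reweighting from selecting the treated arm and the reweighting from selecting non-response could conspire to reproduce $p(u \mid t=0)$, collapsing the two outcome laws and defeating the counterexample. I would close this gap by a direct parameter choice rather than a structural argument: take the $U \to H$ edge weak (i.e.\ $\pi_u$ nearly constant in $u$) while keeping $\phi_0 \neq \phi_1$ and $\nu_0 \neq \nu_1$ bounded away from degeneracy, so that $p(u \mid h=0, t=1)$ stays close to $p(u \mid t=1)$ and hence provably distinct from $p(u \mid t=0)$. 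Since the resulting discrepancy is again witnessed at the level of means, verifying it reduces to a single explicit inequality, which completes the proof.
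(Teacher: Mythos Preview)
Your proposal is correct and follows the same core mechanism as the paper: exhibit a joint law obeying the relevant DAG in which the back-door path $T\leftarrow U\to Y$ forces $p(u\mid t=0)\neq p(u\mid h=0,t=1)$, so that $p(y\mid x,t=0)\neq p(y\mid x,h=0,t=1)$ and the first clause of \cref{eqn:well-specified} fails. The paper's construction is more extreme---it makes $T$ a deterministic function of $U$ and takes the two induced outcome laws to have disjoint support (\cref{lem:general-latent-confounding})---whereas you use non-degenerate assignment probabilities $\phi_u$ and a location-shift family; both choices work, and yours is arguably more illustrative of how generic confounding breaks the model.

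The one place you do more than needed is total confounding. The paper simply observes that any canonical-confounding law is a total-confounding law with a vacuous $U\to H$ edge, so the canonical counterexample already suffices. Your perturbation argument (take $\pi_u$ nearly constant so $p(u\mid h=0,t=1)\approx p(u\mid t=1)\neq p(u\mid t=0)$) is valid and even yields a counterexample with a genuinely active $U\to H$ edge, but you can shortcut it by setting $\pi_0=\pi_1$ exactly, which collapses to the canonical case.
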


The key to proving \cref{prop:canonical-confounding} is in showing that a canonical confounder $U$ can lead to situations in which the untreated outcome distribution $p(Y \mid X, T=0)$ differs from the treated-but-non-responder outcome distribution $p(Y \mid X, H=0, T=1)$, breaking the fundamental assumptions of the nonparametric causal two-groups model.

\section{Simulations}
\label{sec:simulations}

We compare the additive (Add-C2G) and nonparametric (NP-C2G) causal two-groups models against other methods for FDR control. First, we compare against the frequentist baseline that fits the same model as NP-C2G to the untreated (non-responder) distribution, but uses the predicted densities as a null distribution to calculate frequentist $p$-values, with Benjamini-Hochberg correction for multiple comparisons~\citep{benjamini:hochberg:1995:bh-fdr}. Second, we compare against an oracle baseline that uses the NP-C2G methodology with full knowledge of the responder and non-responder distributions. \Cref{sec:more_simulations} contains an expanded set of comparisons against other causal inference methods (FDRregression~\citep{scott:etal:2014:fdr-regression}, Bayesian additive regression trees~\citep{hill:etal:2011:causal-bart}, and causal forests~\citep{wager:athey:2018:causal-forests}).  We focus our comparisons on synthetic simulations; \Cref{sec:gdsc_simulations} contains semi-synthetic simulations on a drug response dataset. Across all simulations, we generally find that the causal two-groups models achieved high power and valid FDR when the appropriate modeling assumptions held. C2G models consistently outperform the frequentist baseline and other causal inference methods, often approaching the performance of the oracle model.

In all synthetic settings, the covariates $X \in \R^d$ are normally distributed with $d=10$, and the distribution of $H| T=1$ is determined by a random logistic regression. In the additive setting, the responder and non-responder distributions are both normal, with means that are some random function of $X$. In the nonadditive setting, the outcome distributions are normally distributed conditioned on both the covariates and some exogenous random variables. For both settings, the difference between the non-responder and responder means is roughly proportional to a parameter $\tau$, which we vary from $1$ to $5$. The number of data points is varied from $N=1$K to $N=10$K. We run each simulation for 50 different random seeds. At each nominal FDR level $\alpha$, we measure performance with respect to empirical FDR and power, defined as the ratio of selected responders over the total number of responders.

\begin{table}[t]
\centering
\begin{tabular}{||c|c|c|c|c|c|c||}
\hline
\multicolumn{7}{|c|}{FDR @ $\alpha=0.1$} \\
\hline
& \multicolumn{3}{|c|}{Additive} & \multicolumn{3}{|c|}{Nonadditive} \\
\hline
Method & $\tau=1$ & $\tau=3$ & $\tau=5$ & $\tau=1$ & $\tau=3$ & $\tau=5$ \\
\hline
Frequentist & \textbf{ 0.04±0.05 } & \textbf{ 0.02±0.01 } & \textbf{ 0.02±0.0 } & \textbf{ 0.02±0.0 } & \textbf{ 0.02±0.0 } & \textbf{ 0.02±0.0 } \\
Add-C2G & \textbf{ 0.13±0.06 } & \textbf{ 0.11±0.02 } & \textbf{ 0.11±0.01 } & \textbf{ 0.08±0.02 } & \textbf{ 0.05±0.01 } & \textbf{ 0.03±0.01 } \\
NP-C2G & \textbf{ 0.06±0.04 } & \textbf{ 0.05±0.01 } & \textbf{ 0.08±0.01 } & \textbf{ 0.07±0.01 } & \textbf{ 0.09±0.01 } & \textbf{ 0.11±0.01 } \\
NP-Oracle & \textbf{ 0.07±0.03 } & \textbf{ 0.1±0.01 } & \textbf{ 0.1±0.0 } & \textbf{ 0.09±0.01 } & \textbf{ 0.09±0.01 } & \textbf{ 0.12±0.03 } \\
\hline
\hline
\multicolumn{7}{|c|}{Power @ $\alpha=0.1$} \\
\hline
& \multicolumn{3}{|c|}{Additive} & \multicolumn{3}{|c|}{Nonadditive} \\
\hline
Method & $\tau=1$ & $\tau=3$ & $\tau=5$ & $\tau=1$ & $\tau=3$ & $\tau=5$ \\
\hline
Frequentist & 0.0±0.0 & 0.22±0.05 & 0.67±0.06 & 0.38±0.06 & 0.57±0.04 & 0.65±0.03 \\
Add-C2G & 0.01±0.01 & 0.48±0.08 & \textbf{ 0.88±0.03 } & 0.45±0.05 & 0.53±0.03 & 0.54±0.03 \\
NP-C2G & 0.0±0.0 & 0.44±0.06 & \textbf{ 0.84±0.05 } & \textbf{ 0.51±0.06 } & \textbf{ 0.68±0.04 } & \textbf{ 0.74±0.03 } \\
NP-Oracle & \textbf{ 0.04±0.01 } & \textbf{ 0.67±0.05 } & \textbf{ 0.92±0.03 } & \textbf{ 0.58±0.05 } & \textbf{ 0.72±0.03 } & \textbf{ 0.79±0.03 } \\
\hline
\end{tabular}
\caption{Multiple testing results on synthetic data for $N=1$K. $\pm$ denotes 95\% confindence intervals. \textbf{Bolded} results in FDR section indicate that the method(s) achieved valid FDR (up to 95\% confidence intervals). \textbf{Bolded} results in power section indicate that the method(s) achieved the highest power for that setting (up to 95\% confidence intervals).}
\label{table:synthetic-simulations}
\end{table}

\begin{table}[t]
\centering
\begin{tabular}{||c|c|c|c|c|c|c||}
\hline
& \multicolumn{3}{|c|}{Additive} & \multicolumn{3}{|c|}{Nonadditive} \\
\hline
N & $\tau=1$ & $\tau=3$ & $\tau=5$ & $\tau=1$ & $\tau=3$ & $\tau=5$ \\
\hline
1K & 0.19±0.03 & 0.63±0.05 & 0.81±0.03 & 0.67±0.05 & 0.66±0.06 & 0.63±0.06 \\
10K & 0.11±0.01 & 0.6±0.04 & 0.85±0.03 & 0.66±0.05 & 0.75±0.06 & 0.69±0.07 \\
\hline
\end{tabular}
\caption{Average response effect (ARE) interval results on synthetic data for NP-C2G. Values indicate Jaccard index with the oracle ARE interval; $\pm$ denotes 95\% confidence intervals. }
\label{table:nonadditive-jaccard-ite}
\end{table}

\Cref{table:synthetic-simulations} displays the results for the synthetic simulations with $N=1$K. Across all settings, NP-C2G (both empirical and oracle versions), Add-C2G, and the frequentist baseline generally control FDR at the level considered. However, the frequentist baseline sacrifices power compared to the other methods. Moreover, in many settings the NP-C2G model is competitive with the oracle model.

We also measure the accuracy of the response effect intervals generated by the NP-C2G model, according to \cref{corr:np-ite}. We compare the intervals with the oracle NP-C2G intervals by averaging the predicted intervals across the population and computing their corresponding Jaccard index. Here the Jaccard index of two finite subsets of the reals $I, J \subset \R$ is given by $\frac{\lambda(I \cap J)}{\lambda(I \cup J)}$, where $\lambda(\cdot)$ denotes Lebesgue measure. \cref{table:nonadditive-jaccard-ite} shows the results for both $N=1$K and $N=10$K. We observe that more data and larger effect sizes generally lead to more accurate interval estimates.

\section{Case study: Immune checkpoint inhibitors}
\label{sec:case-study}

We apply the NP-C2G model to a dataset tracking survival of cancer patients treated with immune checkpoint inhibitors~\citep{samstein:etal:2019:tmb-immunotherapy-survival}. Each patient in the study had targeted next-generation sequencing performed on their tumors, recording the mutational state of 341-468 genes. The sequencing data generates a binary vector for each patient where one indicates a somatic mutation in a gene and zero indicates the gene is not mutated. Tumors were sequenced with different gene panels, leaving missing data in the patient-by-gene matrix. To handle this, we performed Bernoulli matrix factorization on the sequencing matrix to create 5-dimensional embeddings for each patient, where the dimension was chosen using 3-fold cross-validation on held out log-likelihood. The embeddings, combined with the clinical features of sex, age, cancer type and metastatic status, form the confounders in our analysis.

The study tracked which of three treatments was administered: anti-CLT4A, anti-PD-1/PD-L1, or a combination of both. For our analysis, we consider patients receiving anti-CLT4A drugs ($n=99$) as the baseline (untreated) group and patients receiving combination treatments ($n=255$) as the treatment group. A responder in our study is therefore a patient who received an additional effect from PD-1/PD-L1 therapy on top of the baseline effect from anti-CTLA4 therapy. Determining which patients respond to combination therapy versus monotherapy is valuable because immunotherapies can trigger life-threatening complications. A stratification of patients into those likely to benefit or be harmed by combination therapy could inform future therapy decisions and potentially save lives.

\begin{figure*}
\centering %
\includegraphics[width=.95\textwidth]{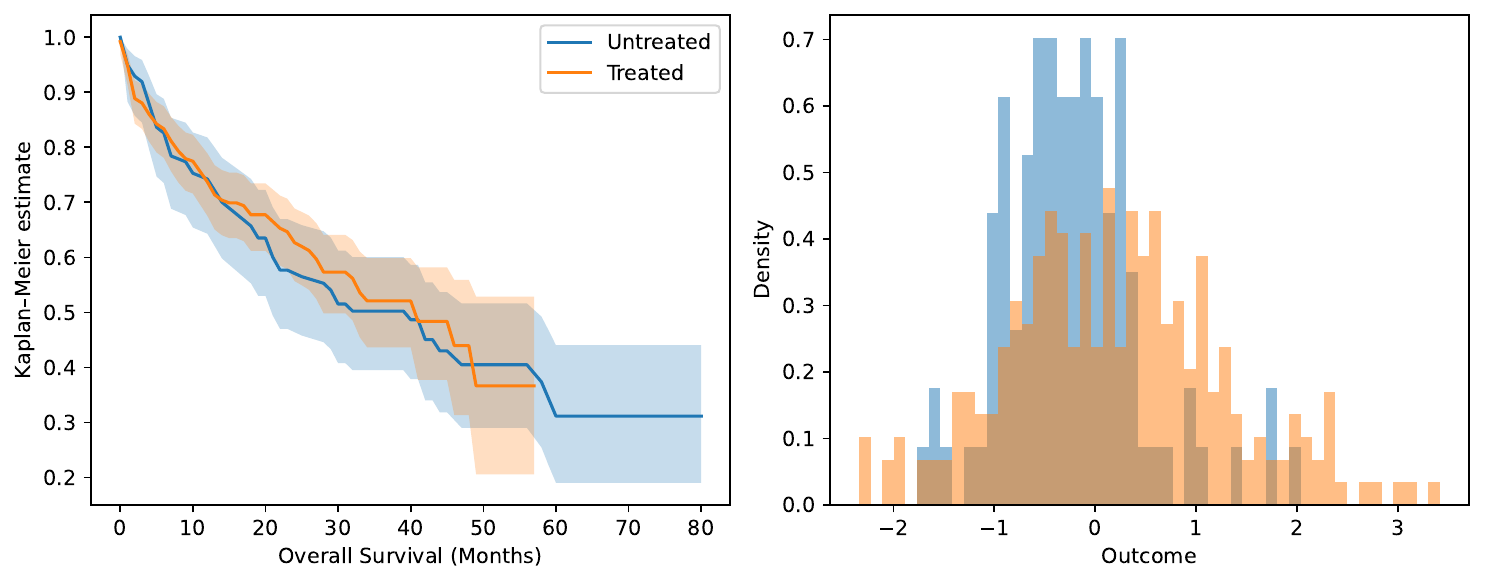}
\caption{Case study survival data. \emph{Left:} Kaplan-Meier fits to the survival data from the untreated and treated groups. \emph{Right:} Histogram of Cox proportional hazard-transformed outcomes for the untreated and treated groups.}
\label{fig:case-study-data}
\end{figure*}

To transform the survival information to numerical outcomes, we first fit a Cox proportional hazards model to the monotherapy group. For each patient $i$, we let $s_i$ denote the patient's survival time if they were uncensored. If they were censored, $s_i$ is the expectation of this time under the Cox model conditioned on being at least the observed survival time. This provides a conservative estimate of the true survival time because the conditional expectation will shrink survival times to the mean. We convert survival times to $z$-scores via the inverse normal cumulative distribution function $P(\text{survival time} \leq s_i)$ under the Cox model. We use the $z$-score as the outcome variable $y_i$ for each patient. \Cref{fig:case-study-data} shows the survival curves for both groups and the resulting outcome variables.

We fit the NP-C2G model according to the methodology outlined in \cref{sec:kernel_generalized}. For a fine-grained grid of nominal FDR thresholds $\alpha \in [0, 0.25]$, we calculate the rejection set of treated patients $S_\alpha$. To detect biomarkers of response, we perform a hypothesis test for each gene in the dataset. We compute Fisher's exact test on the contingency table over treated patients where the two variables are whether or not the patient had the mutated gene and whether or not the patient was in the rejected set $S_\alpha$. We consider both possible alternatives in our test: (i) a randomly generated table has a larger number of patients that are in $S_\alpha$ and do not have the corresponding mutation, and (ii) a randomly generated table has a smaller number of patients that are in $S_\alpha$ and do not have the corresponding mutation. After calculating the p-values for all the genes under (i) or (ii), we performed Benjamini-Hochberg correction to obtain q-values. We consider a range of possible thresholds to reject and record those genes that are rejected with q-values less than 0.1. Genes rejected under alternative (i) are termed \emph{more favorable}, as it is unlikely to have more treated patients that are both in $S_\alpha$ and have the mutation. Analogously, genes rejected under alternative (ii) are termed \emph{less favorable}.

For each detected biomarker, we report the two estimands of interest from \cref{subsec:model:estimands}. For the conditional average response effect, we calculate the upper and lower bounds of the individual response effects given by \cref{corr:np-ite}, and then average these values over individuals carrying a mutation in a given gene. For the expected responder population fraction, we use the conservative prior $\pi^\star$ to generate a conservative estimate of the responder fraction. Each calculation is performed conditioned on the mutated subpopulation and thus we report the conditional expected responder population fraction (CERPF).

\cref{table:case study} displays the results for the selected genes. Among the mutations identified as more favorable, VHL mutations have the smallest rejection threshold, the smallest q-values, the largest CERPF, and a CARE interval that is entirely positive. VHL mutations have been implicated in the literature as a positive biomarker for anti-PD-1/PD-L1 therapies~\citep{meng:etal:2024:vhl-loss-antitumor-immunity}, suggesting this is likely a true positive. Among the mutations identified as less favorable, BRAF mutations have the smallest rejection threshold, the smallest q-values, and have been implicated in increased toxicity of combination anti-CTLA4/anti-PD-1 treatments~\citep{piresdasilva:etal:2022:braf-toxicity-pd1}. Among the other mutations, there is evidence suggesting SETD2 mutations lead to more favorable outcomes for immunotherapy~\citep{jee:etal:2024:msk-chord,lu:etal:2021:SETD2-immunotherapy} and TERT mutations are associated with less favorable outcomes for anti-PD-1 therapies compared with anti-CTLA4 therapies~\citep{li:etal:2020:tert-immunotherapy}. The results for ERBB4 and GRIN2A represent potential novel discoveries. Given their small CERPF, it is likely they only affect a small subset of patients and may warrant further investigation to find more fine-grained molecular markers of response.

\begin{table}
\centering
\begin{tabular}{|c|c|c|c|c|c|c|}
\hline
& \multicolumn{2}{|c|}{More favorable} & \multicolumn{4}{|c|}{Less favorable} \\
\hline
Mutation & SETD2 & VHL & BRAF & ERBB4 & GRIN2A & TERT \\
\hline
Smallest $\alpha$ & 0.07 & 0.029 & 0.149 & 0.239 & 0.247 & 0.201 \\
\hline
Largest -$\log_{10}(q)$ & 1.577 & 7.511 & 3.373 & 1.125 & 1.003 & 1.963 \\
\hline
CARE interval & (-4.3, 1.4) & (1.3, 2.1) & (-8.5, 0.2) & (-9.0, 1.6) & (-7.8, 1.1) & (-7.2, 1.2) \\
\hline
CERPF & 0.39 & 0.62 & 0.22 & 0.16 & 0.15 & 0.25\\
\hline
\end{tabular}
\caption{Case study results comparing outcomes from CTLA4 treatment (null) and CTLA4/PD-1 combo treatment (treatment). \emph{Smallest $\alpha$} denotes the smallest nominal FDR rate $\alpha$ of the NP-C2G procedure that results in the gene being selected. \emph{Largest $- \log_{10}(q)$} denotes the largest $q$-value observed for the selected gene. \emph{CARE interval} denotes the average of the treatment effect intervals produced by NP-C2G over the population with the given mutation. \emph{CERPF} denotes the average of the lower-bound prior probabilities produced by NP-C2G over the treated population with the given mutation.}
\label{table:case study}
\end{table}

\section{Discussion}

Modern precision medicine is a practice of gradual refinement. Treatments are delivered first to a broad group of patients, results are measured, and correlates of response are used to stratify patients into subgroups. For instance, breast cancer patients initially received chemotherapy until, in 1967, two hormone receptors were discovered as predictive biomarkers in subsets of patients~\citep{jensen:etal:1981:er-pr-breast-cancer-discovery}. The introduction of drugs blocking these hormones led to improved outcomes for hormone-positive patients~\citep{ward:etal:1973:breast-er-trial,osborne:etal:1980:value-er-pr-breast}. A third marker, the surface protein HER2, was discovered in 1987~\citep{slamon:etal:1987:her2-breast-cancer-discovery}. More recently, clinical trials in immunotherapy have led to approved treatments for so-called triple-negative breast cancers (those without any of the three previously established biomarkers) that are positive for the PD-L1 biomarker~\citep{emens:etal:2021:tnbc-nab-pac-atezo-trial}. At each step along the way, patient data was retrospectively analyzed to identify responders and stratify by a predictive marker.

The emergence of large, real-world datasets of electronic health records~\citep[e.g.][]{jee:etal:2024:msk-chord} presents both an opportunity and a complication for data analysts in precision medicine. Cohorts of tens of thousands of patients are now readily available for mining correlations of treatment response, adverse events, overall survival, and other outcomes of interest. Analysts now can detect significant associations in fine-grained subsets of patients to stratify patients to a near-personalized level of precision. The trade-off is that these are not randomized clinical trial data. They are susceptible to all the usual pitfalls and problems of causal inference in observational data. Advancing statistical analyses for the modern era requires new causal methodology for refining patient cohorts. The methods presented here represent a step in this direction.

Going forward, the causal two-groups model raises some interesting statistical questions. On the theoretical side, one might wonder how to give a precise delineation between identifiable and non-identifiable C2G models. In this work, we gave two examples of modeling assumptions that lead to identifiable C2G models: certain classes of parametric models and additive semi-parametric models. But this is far from a complete characterization. What are the conditions that need to be placed on a set of modeling assumptions to ensure that the resulting C2G model is identifiable?

On the practical side, one might think about extending the causal two-groups model beyond binary treatments. Indeed, many types of interventions are not binary: drugs have doses, treatments have schedules, jobs training programs have multiple days of courses, and diet interventions involve multiple foods and meals. In all of these settings, however, we can still ask the question of whether or not an individual responded to their associated treatment. What then is the appropriate extension of the C2G framework that allows for different degrees or types of treatment?

One might also ask what role independence plays in the C2G model. As it turns out, the FDR control results (\cref{thm:additive-fdr} and \cref{thm:general-fdr}) do not require i.i.d. observations. They only require accurate estimates of the outcome distributions. However, our approach to acquiring those estimates makes use of assumed i.i.d. structure among individuals. In what ways can we relax these i.i.d. assumptions in the C2G model without sacrificing statistical guarantees? Is it possible to consider interference effects and other non-i.i.d. confounding?

Finally, our immunotherapy case study presents a potentially new way to analyze combination therapy studies. Rather than simply considering the difference in survival time, one can now estimate how many patients actually benefit from combination therapy. Application to datasets of different treatment regimens may reveal that certain combinations are effective only for a small group of patients who have an outsized effect, suggesting the need for better patient stratification. %

\subsection*{Acknowledgements}

We gratefully acknowledge support by the NIH/NCI (R37 CA271186, U54 CA274492, P30 CA008748), Break Through Cancer, and the Maurice Campbell Initiative at Memorial Sloan Kettering Cancer Center.

{
\small
\bibliography{main}
}

\newpage
\appendix

\section{Additional results and details from \Cref{sec:model}}
\subsection{Proof of \cref{thm:unident}}
Consider the following densities over $[0,\infty)$: 
\begin{align*}
    f_0(y) &= \frac{3}{2} e^{-y} - e^{-2y}\\
    f^{(1)}_1(y) &= e^{-y} \\
    f^{(2)}_1(y) &= 2 e^{-2y}.
\end{align*}
It can be readily verified that these are all valid probability densities. Moreover, we can see that $f_0$ is log-concave over its domain since
\[ \frac{d^2}{d y^2} \log f_0(y) = -\frac{2e^y}{3(\frac{2}{3} - e^y)^2} < 0 \, \, \text{ for all 
 } y \geq 0. \]
$f^{(1)}_1$ and $f^{(2)}_1$ are also clearly log-concave since their log-densities are linear.

Suppose that $f_0$ is the observed null distribution. Letting $\pi_1 = \frac{3}{4}$ and $\pi_2 = \frac{1}{4}$, consider the following two observed treatment distributions:
\begin{align*}
f^{(1)}_t(y) &= (1-\pi_1) f_0(y) + \pi_1 f^{(1)}_1(y) \\
f^{(2)}_t(y) &= (1-\pi_2) f_0(y) + \pi_2 f^{(2)}_1(y).
\end{align*}
Using these definitions, we have
\begin{align*}
f^{(1)}_t(y) &= (1-\pi_1) f_0(y) + \pi_1 f^{(1)}_1(y) \\
&= \frac{1}{4} \left( \frac{3}{2} e^{-y} - e^{-2y} \right) + \frac{3}{4} e^{-y} \\
&= \frac{9}{8} e^{-y} - \frac{1}{4} e^{-2y} \\
&= \frac{3}{4} \left( \frac{3}{2} e^{-y} - e^{-2y} \right) + \frac{1}{4} \cdot 2 e^{-y} \\
&= (1-\pi_2) f_0(y) + \pi_2 f^{(2)}_1(y) \\
&= f^{(2)}_t(y).
\end{align*}
Thus, we can conclude that the C2G model is unidentifiable even when both the null distribution and the alternative distribution are restricted to being log-concave.

\subsection{Proof of \cref{prop:normal-identifiability}}

Fix $x$. In \Cref{eqn:two_groups}, we observe distributions $f_0(y \mid x)$ and $f_t(y \mid x)$. Suppose we are given $\mu_0 = \mu_0(x)$, $\mu_1 = \mu_1(x)$, $\mu_2 = \mu_2(x)$, $\sigma_0^2 = \sigma_0^2(x)$, $\sigma_1^2 = \sigma_1^2(x)$, $\sigma_2^2 = \sigma_2^2(x)$,  $\pi_1 = \pi_1(x)$, and $\pi_2 = \pi_2(x)$, where $\pi_1, \pi_2 \in (0,1)$. By translating and scaling the space, we can assume without loss of generality that $\mu_0 = 0$ and $\sigma_0^2 = 1$. Our goal is to show that if
\begin{align}
\label{eqn:normal-identifiability}
(1-\pi_1)\Ncal(y \mid 0, 1) + \pi_1 \Ncal(y \mid \mu_1, \sigma_1^2) = (1-\pi_2)\Ncal(y \mid 0, 1) + \pi_2 \Ncal(y \mid \mu_2, \sigma_2^2) 
\end{align}
for all $y \in \R$, then we must have $\pi_1=\pi_2$, $\mu_1 = \mu_2$, and $\sigma_1^2 = \sigma_2^2$. There are two cases to consider here.

\paragraph{Case 1: $\pi_1 = \pi_2$.} In this case, we immediately have
\[ \Ncal(y \mid \mu_1, \sigma_1^2) = \Ncal(y \mid \mu_2, \sigma_2^2) \]
for all $y \in \R$. Taking the first moment of both sides, we have $\mu_1 = \mu_2$. Conditioned on this fact, we can calculate the variance of both sides to conclude $\sigma_1^2 = \sigma_2^2$. 

\paragraph{Case 2: $\pi_1 \neq \pi_2$.} In this case, we will come to a contradiction. Assume without loss of generality that $\pi_1 > \pi_2$. Letting $\beta = \frac{\pi_2}{\pi_1 - \pi_2} > 0$, we can rearrange \cref{eqn:normal-identifiability} to get
\[ \Ncal(y \mid 0, 1) = (1+\beta) \Ncal(y \mid \mu_1, \sigma_1^2) - \beta \Ncal(y \mid \mu_2, \sigma_2^2).  \]
Multiplying both sides by $e^{ty}$ and integrating $y \in \R$, the moment generating function of the normal distribution gives us
\begin{align}
\label{eqn:normal-identifiability-mgf}
\exp\left( t^2/2 \right)
= (1+\beta)\exp\left( \mu_1 t + \sigma_1^2 t^2/2 \right) - \beta \exp\left( \mu_2 t + \sigma_2^2 t^2/2 \right)   
\end{align}
for all $t \in \R$. We will analyze \cref{eqn:normal-identifiability-mgf} in the limit as $t\rightarrow \infty$.

Observe first that we must have $\sigma_2^2 \leq \sigma_1^2$, since otherwise as $t\rightarrow \infty$, the RHS of \cref{eqn:normal-identifiability-mgf} tends to $-\infty$ while the LHS is positive. 

Now consider the case where $\sigma_2^2 < \sigma_1^2$. Multiplying \cref{eqn:normal-identifiability-mgf} through by $e^{-\mu_1 t - \sigma_1^2 t^2/2}$, we have
\[ \exp\left( -\mu_1 t - (\sigma_1^2 - 1) t^2/2 \right) = 1+\beta - \beta \exp\left( (\mu_2 - \mu_1) t - (\sigma_1^2 -\sigma_2^2) t^2/2 \right).  \]
As $t\rightarrow \infty$, the RHS tends to $1 + \beta$. On the other hand, the LHS must tend to either 0 or $+\infty$. Thus we must have $\sigma_1^2 = \sigma_2^2$. In which case, we can rearrange \cref{eqn:normal-identifiability-mgf} to obtain
\[ 1 = \left[(1+\beta)\exp(\mu_1 t) - \beta \exp(\mu_2 t) \right] \exp\left( (\sigma_1^2 - 1)t^2/2 \right). \]
If $\sigma_1^2 < 1$, then the RHS must tend to 0. If $\sigma_1^2 > 1$, then the RHS must tend to $\pm \infty$. Thus, we can conclude $\sigma_1^2 = 1$, and the above equation becomes
\[ 1 = \left[(1+\beta)\exp(\mu_1 t) - \beta \exp(\mu_2 t) \right]. \]
The only way for this to hold for all $t$ is if $\mu_1 = \mu_2=0$.

\section{Additional results and details from \Cref{sec:additive}}
\subsection{Proof of \cref{thm:additive-identify}}

For any $x$, observe that we can shift the space by $\mu_0(x)$ so that the model in \Cref{eqn:additive_errors} becomes
\begin{align*}
f_0(y \mid x) &= g(y) \\
f_1(y \mid x) &= g(y - \tau(x)).
\end{align*}
In this case, the observable distributions are the null distribution $f_0 = g$ and the treatment distribution
\[ f_t(y \mid x) = (1-\pi(x)) g(y) + \pi(x) g(y - \tau(x)).  \]
Fixing a particular $x$, the identifiability of \Cref{eqn:additive_errors} can be restated as follows.

\begin{theorem}[Restatement of \cref{thm:additive-identify}]
\label{thm:additive-identify-restate}
Let $g$ denote a probability density function over $\R$. For $\pi \in (0,1)$, $\tau \in \R$, define the density function 
\[ f(y ; \pi, \tau) = (1- \pi) g(y) + \pi g(y-\tau). \]
Then the family of densities $\{f(y ; \pi, \tau) \,  \mid \, \pi \in (0,1), \tau \in \R \setminus \{ 0\} \}$ is identifiable.
\end{theorem}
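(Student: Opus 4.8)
The plan is to pass to the Fourier domain, where the translation structure in \cref{eqn:additive_errors} becomes a transparent algebraic identity. Let $\hat{g}(t) = \int_{\R} e^{ity} g(y)\, dy$ be the characteristic function of $g$. Since the responder component $g(y-\tau)$ is just a translate of $g$, the mixture $f(\cdot\,;\pi,\tau)$ has characteristic function
\[ \hat{f}(t;\pi,\tau) = \hat{g}(t)\bigl[(1-\pi) + \pi e^{it\tau}\bigr]. \]
Suppose two admissible pairs $(\pi_1,\tau_1)$ and $(\pi_2,\tau_2)$, with $\pi_j \in (0,1)$ and $\tau_j \neq 0$, induce the same density. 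Then their characteristic functions coincide for every $t$, so
\[ \hat{g}(t)\bigl[(1-\pi_1) + \pi_1 e^{it\tau_1}\bigr] = \hat{g}(t)\bigl[(1-\pi_2) + \pi_2 e^{it\tau_2}\bigr]. \]

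The next step is to cancel the common factor $\hat{g}(t)$. Because $\hat{g}$ is continuous with $\hat{g}(0)=1$, it is nonvanishing on some interval $(-\delta,\delta)$, and there we may divide to obtain $(1-\pi_1) + \pi_1 e^{it\tau_1} = (1-\pi_2) + \pi_2 e^{it\tau_2}$. Both sides are entire functions of $t$ (finite sums of exponentials), so agreement on the interval forces agreement for all $t \in \R$ by analytic continuation. Rearranging yields the exponential-polynomial identity
\[ (\pi_2 - \pi_1) + \pi_1 e^{it\tau_1} - \pi_2 e^{it\tau_2} = 0 \qquad \text{for all } t \in \R. \]

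Finally I would invoke the linear independence of the characters $t \mapsto e^{i\lambda t}$ over distinct frequencies $\lambda$. If $\tau_1 \neq \tau_2$, then the frequencies $0,\tau_1,\tau_2$ are pairwise distinct (here the hypothesis $\tau_j \neq 0$ is essential), so independence forces each coefficient to vanish; but the coefficient of $e^{it\tau_1}$ equals $\pi_1 > 0$, a contradiction. Hence $\tau_1 = \tau_2 =: \tau$, and the identity collapses to $(\pi_2 - \pi_1)(1 - e^{it\tau}) = 0$; since $\tau \neq 0$ the factor $1 - e^{it\tau}$ is not identically zero, giving $\pi_1 = \pi_2$. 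The only genuinely delicate point—the main obstacle—is the cancellation of $\hat{g}$: since $\hat{g}$ may vanish away from the origin, the division is only valid near $t=0$, and the argument hinges on extending the resulting finite exponential identity globally by analyticity before applying the distinct-frequency independence together with the sign constraint $\pi_j \in (0,1)$ that rules out the mismatched case. Everything else is routine.
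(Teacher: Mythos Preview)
Your proof is correct and shares its opening with the paper's: pass to characteristic functions, use continuity of $\hat{g}$ at $0$ to cancel the common factor on a small interval about the origin (this is exactly the paper's \cref{lem:charfun-property}). The finish, however, differs. The paper never leaves that interval: it splits into the cases $\pi_1=\pi_2$ and $\pi_1\neq\pi_2$, and in each picks a single carefully chosen $\omega$ (small enough to stay where $\hat g\neq 0$ and to avoid $2\pi$-periodicity) at which an explicit computation yields a contradiction; in the second case the key step is the modulus identity $|1-p+pe^{i\omega\tau_1}| = \sqrt{1-2p(1-p)(1-\cos\omega\tau_1)} < 1$. Your route is more structural: you extend the finite-exponential identity from the interval to all of $\R$ by analytic continuation and then invoke linear independence of the characters $t\mapsto e^{i\lambda t}$ for distinct $\lambda$. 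Both arguments are valid; yours is tidier and would generalize immediately to mixtures with more than two shifted components, while the paper's stays entirely at the level of elementary complex arithmetic without appealing to analytic continuation or the independence lemma.
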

    
We begin by recalling the characteristic function of a probability distribution. If $g$ is a probability distribution over $\R$, then its characteristic function is given by
\[ \varphi_g(\omega) = \E_{X\sim g}\left[ e^{-i \omega X} \right] = \int_{-\infty}^\infty g(x) e^{-i \omega x} \, dx,  \]
where $\omega \in \R$ and the integral is only well-defined when $g$ is a probability density function. We will make use of the following elementary property of the characteristic function.

\begin{lemma}
\label{lem:charfun-property}
For any probability density function $g$, there exists a value $\omega_0 > 0$ such that $|\varphi_g(\omega)| > 0$ for all $\omega \in [0, \omega_0]$.
\end{lemma}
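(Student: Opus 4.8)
The plan is to derive this as an immediate consequence of two elementary facts about the characteristic function: its value at the origin and its continuity. First I would evaluate $\varphi_g$ at $\omega = 0$. Since $g$ is a probability density, $\varphi_g(0) = \int_{-\infty}^\infty g(x)\,dx = 1$, so in particular $|\varphi_g(0)| = 1 > 0$. This gives the anchor point around which the positivity will propagate.

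Next I would establish that $\varphi_g$ is continuous on $\R$; continuity at $\omega = 0$ alone already suffices. This follows from the dominated convergence theorem: for any sequence $\omega_n \to \omega$, the integrands $g(x)e^{-i\omega_n x}$ converge pointwise to $g(x)e^{-i\omega x}$ and are dominated by the integrable function $g$, since $|g(x)e^{-i\omega_n x}| = g(x)$. Hence $\varphi_g(\omega_n) \to \varphi_g(\omega)$, and composing with the (continuous) modulus map shows that $\omega \mapsto |\varphi_g(\omega)|$ is continuous. Finally, applying the $\varepsilon$--$\delta$ definition of continuity at $\omega = 0$ with $\varepsilon = 1/2$, I would obtain some $\omega_0 > 0$ such that $|\varphi_g(\omega)| > |\varphi_g(0)| - 1/2 = 1/2 > 0$ for all $\omega \in [0, \omega_0]$, which is exactly the claim.

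There is no genuine obstacle here; the statement is a soft, standard property of characteristic functions rather than a substantive inequality. The only step requiring any care is the continuity justification, and even that is routine once one observes that the dominating function may be taken to be $g$ itself. The lemma is included precisely so that the identifiability argument for \Cref{thm:additive-identify-restate} can work in a neighborhood of the origin where the characteristic function is guaranteed not to vanish, allowing ratios and logarithms of $\varphi_g$ to be taken safely.
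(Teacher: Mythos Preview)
Your proposal is correct and follows essentially the same approach as the paper: the paper's proof simply cites that $\varphi_g$ is (uniformly) continuous on $\R$ with $\varphi_g(0)=1$, from which the conclusion is immediate. You supply more detail by deriving continuity via dominated convergence rather than citing a reference, but the underlying argument is identical.
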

\begin{proof}
This follows directly from the fact that $\varphi_g$ is uniformly continuous on $\R$ and $\varphi_g(0) = 1$ \citep{resnick:2013:probability}.
\end{proof}

With this tool in hand, we turn to the proof of \Cref{thm:additive-identify-restate}.

\begin{proof}[Proof of \Cref{thm:additive-identify-restate}]
Let $\pi_1, \pi_2 \in (0,1)$ and $\tau_1, \tau_2 \in \R \setminus \{ 0 \}$ be parameters such that $f(y;\pi_1, \tau_1) = f(y;\pi_2, \tau_2)$. Our goal is to show that we must have $\pi_1 = \pi_2$ and $\tau_1 = \tau_2$. We consider two cases.

\paragraph{Case 1: $\pi_1 = \pi_2$.} In this case, we must have
\begin{align*}
0 &= f(y;\pi_1, \tau_1) - f(y;\pi_2, \tau_2) \\
&= (1- \pi_1) g(y) + \pi_1 g(y-\tau_1) - (1- \pi_2) g(y) - \pi_2 g(y-\tau_2) \\
&= \pi_1 g(y-\tau_1) - \pi_1 g(y - \tau_2)
\end{align*}
for all $y \in \R$. Multiplying through by $e^{i \omega y}/\pi_1$ and integrating over $\R$, we have
\begin{align*}
0 &= \int_{-\infty}^\infty g(y-\tau_1) e^{i \omega y} \, dy - \int_{-\infty}^\infty g(y-\tau_2) e^{i \omega y} \, dy \\
&= \int_{-\infty}^\infty g(y) e^{i \omega (y+\tau_1)} \, dy - \int_{-\infty}^\infty g(y) e^{i \omega (y+\tau_2)} \, dy \\
&= \varphi_g(\omega) e^{i\omega \tau_1} - \varphi_g(\omega) e^{i\omega \tau_2},
\end{align*}
which holds for all $\omega \in \R$. Invoking \Cref{lem:charfun-property}, there is a value a value $\omega_0 > 0$ such that $|\varphi_g(\omega)| > 0$ for all $\omega \in [0, \omega_0]$. Plugging $\omega = \min \left( \omega_0, \frac{\pi}{|\tau_1 - \tau_2|}\right)$ into the above and dividing by $\varphi_g(\omega)$, we can rearrange to observe
\[ e^{i \omega (\tau_1 - \tau_2)} = 1. \]
This can only be true if $\omega (\tau_1 - \tau_2) = 2 \pi n$ for some $n \in \Z$. By our choice of $\omega$, it must be the case that $\tau_1 = \tau_2$.

\paragraph{Case 2: $\pi_1 \neq \pi_2$.} Here we will reach a contradiction. Assume without loss of generality that $\pi_1 < \pi_2$. Then we have
\begin{align*}
0 &= f(y;\pi_1, \tau_1) - f(y;\pi_2, \tau_2) \\
&= (1- \pi_1) g(y) + \pi_1 g(y-\tau_1) - (1- \pi_2) g(y) - \pi_2 g(y-\tau_2) \\
&= (\pi_2 - \pi_1) g(y) + \pi_1 g(y-\tau_1) - \pi_2 g(y-\tau_2)
\end{align*}
for all $y \in \R$. Multiplying through by $e^{i \omega y}$ and integrating over $\R$, we have
\begin{align*}
0 &= (\pi_2 - \pi_1) \int_{-\infty}^\infty g(y) e^{i \omega y} \, dy + \pi_1\int_{-\infty}^\infty g(y-\tau_1) e^{i \omega y} \, dy - \pi_2 \int_{-\infty}^\infty g(y-\tau_2) e^{i \omega y} \, dy \\
&= (\pi_2 - \pi_1) \varphi_g(\omega) + \pi_1 \varphi_g(\omega) e^{i\omega \tau_1} - \pi_2 \varphi_g(\omega) e^{i\omega \tau_2},
\end{align*}
for all $\omega \in \R$. Invoking \Cref{lem:charfun-property} again, we have $|\varphi_g(\omega)| > 0$ for all $\omega \in [0, \omega_0]$.  Taking $\omega = \min( \omega_0, \frac{\pi}{|\tau_1|})$, we divide through by $\varphi_g(\omega) \pi_2$ and rearrange to get
\[ e^{i \omega \tau_2} = 1 - \frac{\pi_1}{\pi_2} + \frac{\pi_1}{\pi_2} e^{i \omega \tau_1} = 1 - p + p e^{i \omega \tau_1},\]
where $p \in (0, 1/2)$. Taking magnitudes of both sides, we have
\begin{align*}
1 = \left| e^{i \omega \tau_2} \right| &= \left| 1 - p + p e^{i \omega \tau_1} \right| \\
&= \left| 1 - p + p( \cos(\omega \tau_1) + i \sin(\omega \tau_1)) \right| \\
&= \sqrt{ (1 - p + p \cos(\omega \tau_1))^2 + p^2 \sin^2(\omega \tau_1) } \\
&= \sqrt{ 1 - 2p(1-p)(1-\cos(\omega \tau_1))} < 1,
\end{align*}
where the last line follows from our choice of $\omega$. Thus, we have reached a contradiction.
\end{proof}

\subsection{Proof of \Cref{thm:additive-fdr}}

We require the following result showing that we our posterior estimates are accurate.

\begin{lemma}
\label{lem:additive-posterior-approximation}
Suppose that the conditions of \Cref{thm:additive-fdr} hold. Then for all $i=1,\ldots, n$, the absolute difference
\[ \left|\frac{(1-{\pi}(x_i)) {g}(y_i - \mu_0(x_i))}{(1-{\pi}(x_i)) {g}(y_i - \mu_0(x_i)) + {\pi}(x_i) {g}(y_i - \mu_1(x_i))} - \frac{(1-\hat{\pi}(x_i)) \hat{g}(y_i - \hat{\mu_0}(x_i))}{(1-\hat{\pi}(x_i)) \hat{g}(y_i - \hat{\mu_0}(x_i)) + \hat{\pi}(x_i) \hat{g}(y_i - \hat{\mu_1}(x_i))} \right| \]
is bounded above by $\frac{2U(U+2(\lambda + 1))}{L^2} \epsilon. $
\end{lemma}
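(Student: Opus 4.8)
The plan is to treat this as a perturbation bound for the map $(a,b)\mapsto a/(a+b)$, exploiting a cancellation that keeps the constants tight. Write the four building blocks as $a = (1-\pi(x_i))g(y_i-\mu_0(x_i))$ and $b = \pi(x_i)g(y_i-\mu_1(x_i))$, with $\hat a,\hat b$ their hatted counterparts, so that the true and estimated posteriors are $w = a/(a+b)$ and $\hat w = \hat a/(\hat a+\hat b)$. Set $S = a+b$ and $\hat S = \hat a+\hat b$. Assumption (ii) gives $S \ge L$, while the bound $g \le U$ in (i) gives $a,b \le U$ and hence $S \le U$ (a convex combination of quantities at most $U$).

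First I would control the per-factor errors. The probability factor is immediate from (iii): $|\pi(x_i)-\hat\pi(x_i)| \le \epsilon$. For the density factor I would split via the triangle inequality, $|g(y_i-\mu_0(x_i)) - \hat g(y_i-\hat\mu_0(x_i))| \le |g(y_i-\mu_0(x_i)) - g(y_i-\hat\mu_0(x_i))| + |g(y_i-\hat\mu_0(x_i)) - \hat g(y_i-\hat\mu_0(x_i))|$, bounding the first piece by $\lambda\epsilon$ using the $\lambda$-Lipschitz property in (i) together with the mean error in (iii), and the second by $\epsilon$ using the uniform density error in (iv); this yields $(\lambda+1)\epsilon$, and the same holds with $\mu_1,\hat\mu_1$ replacing $\mu_0,\hat\mu_0$. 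A standard product-perturbation step (add and subtract a hybrid term, then use $\pi,1-\pi\le 1$ and $g\le U$) then gives $|a-\hat a|\le (U+\lambda+1)\epsilon$ and likewise $|b-\hat b|\le (U+\lambda+1)\epsilon$, so $|S-\hat S|\le 2(U+\lambda+1)\epsilon$.

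The crux is the ratio estimate, and the useful observation is the algebraic identity $a\hat S - \hat a S = a\hat b - \hat a b$, which cancels the diagonal terms and lets me write $|w-\hat w| = |a\hat b - \hat a b|/(S\hat S)$. Expanding $a\hat b - \hat a b = a(\hat b - b) + b(a-\hat a)$ and bounding it by $(a+b)(U+\lambda+1)\epsilon = S(U+\lambda+1)\epsilon \le U(U+\lambda+1)\epsilon$ keeps the numerator from collecting an extra factor. For the denominator I would invoke the $\epsilon$-condition of \cref{thm:additive-fdr}: since $U+\lambda+1 \le U+2(\lambda+1)$, the hypothesis forces $2(U+\lambda+1)\epsilon \le L/2$, so $\hat S \ge S - |S-\hat S| \ge L/2$ and thus $S\hat S \ge L^2/2$. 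Combining gives $|w-\hat w| \le 2U(U+\lambda+1)\epsilon/L^2$, which is at most the claimed $\frac{2U(U+2(\lambda+1))}{L^2}\epsilon$ since $\lambda+1 \le 2(\lambda+1)$.

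The only genuine obstacle is pinning down the numerator constant: a naive bound $a|\hat b - b| + b|a-\hat a| \le 2U(U+\lambda+1)\epsilon$, using $a,b\le U$ separately, overshoots, so the step that matters is recognizing $a+b = S \le U$ so the two terms share a single budget $U$ rather than $2U$, together with the cancellation $a\hat S - \hat a S = a\hat b - \hat a b$ so the denominator contributes only $S\hat S$. Everything else is routine bookkeeping with the triangle inequality and the boundedness and Lipschitz assumptions.
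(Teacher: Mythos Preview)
Your argument is correct and follows essentially the same route as the paper: bound the per-factor errors by a Lipschitz-plus-uniform step, reduce $|w-\hat w|$ to $|a\hat b-\hat a b|/(S\hat S)$ via the cancellation identity, then control the numerator using $a+b\le U$ and the denominator using $\hat S\ge L/2$. Your treatment is in fact slightly sharper (you get the intermediate constant $U+\lambda+1$ rather than $U+2(\lambda+1)$ before relaxing to match the stated bound) and more explicit about where the smallness condition on $\epsilon$ enters.
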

\begin{proof}
Pick an index $i=1, \ldots, n$. We use the following notation
\begin{align*}
g_0 = {g}(y_i - \mu_0(x_i)), \, 
g_1 = {g}(y_i - \mu_1(x_i)), \,
\hat{g_0} = \hat{g}(y_i - \hat{\mu_0}(x_i)), \,
\hat{g_1} = \hat{g}(y_i - \hat{\mu_1}(x_i)).
\end{align*}
The smoothness of $g$ combined with the approximation factor of $\hat{\mu_0}$ implies
\[ |g_0 - \hat{g_0}| \leq |{g}(y_i - \mu_0(x_i)) - {g}(y_i - \hat{\mu_0}(x_i))| + |{g}(y_i - \hat{\mu_0}(x_i)) - \hat{g}(y_i - \hat{\mu_0}(x_i))| \leq \epsilon(\lambda + 1).\]
A symmetrical argument shows the same holds for $|g_1 - \hat{g_1}|$. Thus, we can conclude the following
\begin{align*}
|\pi(x_i) g_1 - \hat{\pi}(x_i)\hat{g}_1| &\leq \epsilon(U + 2(\lambda + 1) ) \\
|(1-\pi(x_i)) g_0 - (1-\hat{\pi}(x_i))\hat{g}_0 | &\leq \epsilon(U + 2(\lambda + 1) ) \\
(1-\hat{\pi}(x_i))\hat{g}_0  + \hat{\pi}(x_i)\hat{g}_1 &\geq L/2 \\ 
(1-{\pi}(x_i)){g}_0  + {\pi}(x_i){g}_1 &\leq U.
\end{align*}
Using the shorthand $a = (1-\pi(x_i)) g_0$, $\hat{a} = (1-\hat{\pi}(x_i))\hat{g}_0$, $b = \pi(x_i) g_1$, and $\hat{b} = \hat{\pi}(x_i)\hat{g}_1$, we have
\begin{align*}
\left| \frac{a}{a+b} - \frac{\hat{a}}{\hat{a} + \hat{b}} \right| 
&= \left|\frac{\hat{a}b - \hat{b}a}{(a+b)(\hat{a} + \hat{b})} \right| \\
&\leq \frac{2}{L^2}\left|\hat{a}b - \hat{b}a\right| \\
&\leq \frac{2}{L^2} (a+b) \epsilon (U + 2) = \frac{2U(U+2(\lambda + 1))}{L^2} \epsilon. \qedhere
\end{align*}

\end{proof}

Our next lemma is an elementary result concerning sums of probabilities.
\begin{lemma}
\label{lem:prob-subset-sum}
Let $p_1, \hat{p}_1, p_2, \hat{p_2}, \ldots, p_n, \hat{p}_n, \alpha, \delta \in [0,1]$ such that $\left|p_i - \hat{p_i}\right| \leq \delta$ for all $i=1,\ldots, n$. Let $V, \hat{V} \subset \{1,\ldots, n\}$ be the largest subsets such that 
\[ \frac{1}{|V|} \sum_{i\in V} p_i \leq \alpha - \delta \text{ and }  \frac{1}{|\hat{V}|} \sum_{i\in \hat{V}} \hat{p}_i \leq \alpha .  \]
Then 
\[\sum_{i\in \hat{V}} 1 - {p}_i \geq \frac{1 - \alpha - \delta}{1 - \alpha + \delta + 1/(|V|+1)}\sum_{i\in {V}} 1 - {p}_i.\]
\end{lemma}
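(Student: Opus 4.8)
The plan is to reduce everything to four scalar quantities: the selected-set sizes $m = |V|$ and $\hat m = |\hat V|$, together with the ``power'' sums $a = \sum_{i\in V}(1-p_i)$ and $b = \sum_{i\in \hat V}(1-p_i)$, and then to prove a lower bound on $b$ and a matching upper bound on $a$ whose ratio is exactly the claimed factor. Throughout I would use only the one-sided inequality $\hat p_i \leq p_i + \delta$ and the fact that both $V$ and $\hat V$ are greedy prefixes, obtained by sorting the relevant values in ascending order and taking the longest prefix whose average stays at or below the threshold.

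First I would dispose of the trivial cases. If $m=0$ then $a=0$ and the claim is immediate since $b \geq 0$; and if $1-\alpha-\delta \leq 0$ the right-hand side is nonpositive while $b \geq 0$ (each $1-p_i \geq 0$ because $p_i \leq 1$), so again there is nothing to prove. Hence I may assume $1-\alpha-\delta > 0$ and $m \geq 1$. The first substantive step is to show $\hat m \geq m$. Since $V$ has average $p$-value at most $\alpha-\delta$, it has average $\hat p$-value at most $\frac1m\sum_{i\in V}(p_i+\delta) \leq \alpha$; as the $m$ smallest $\hat p$-values have sum no larger than $\sum_{i\in V}\hat p_i$, the size-$m$ prefix already satisfies the level-$\alpha$ constraint, so the greedy set $\hat V$ contains at least $m$ elements. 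The lower bound on $b$ then follows at once: $\sum_{i\in\hat V}p_i \leq \sum_{i\in \hat V}(\hat p_i + \delta) \leq \hat m(\alpha+\delta)$, whence $b = \hat m - \sum_{i\in\hat V}p_i \geq \hat m(1-\alpha-\delta) \geq m(1-\alpha-\delta)$.

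The crux is the upper bound on $a$, and this is where the discreteness factor $1/(m+1)$ enters. I would combine two complementary bounds. Trivially $a \leq m$ since each summand is at most $1$. The second bound uses the maximality of $V$: when $m < n$, adjoining the next-smallest $p$-value would violate the constraint, giving $\sum_{i\in V}p_i > (m+1)(\alpha-\delta) - 1$ and hence $a = m - \sum_{i\in V}p_i < (m+1)(1-\alpha+\delta)$. The key elementary fact is then that for every integer $m\geq 1$ and every $c = 1-\alpha+\delta \geq 0$ one has $\min\{m,\,(m+1)c\} \leq mc + \frac{m}{m+1}$, which I would verify by splitting on whether $c \lessgtr \frac{m}{m+1}$. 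Combining yields $a \leq m(1-\alpha+\delta) + \frac{m}{m+1} = m\bigl[(1-\alpha+\delta)+\frac{1}{m+1}\bigr]$, and multiplying this by the positive constant $\frac{1-\alpha-\delta}{(1-\alpha+\delta)+1/(m+1)}$ and comparing with the bound $b \geq m(1-\alpha-\delta)$ gives the result. The only remaining case is $m=n$, where $\hat m \geq m$ forces $\hat V = V$, so $b = a$ and the claim reduces to the prefactor being at most $1$; this holds because the denominator exceeds the numerator by $2\delta + 1/(m+1) > 0$.

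The main obstacle I anticipate is getting the discreteness correction exactly right. The trivial bound $a \leq m$ alone is too weak when $\alpha-\delta$ is large, whereas the maximality bound $a < (m+1)(1-\alpha+\delta)$ alone is too weak when $\alpha-\delta$ is small, and it is precisely the \emph{minimum} of the two that matches the target denominator. Recognizing that these two regimes are separated exactly at $c=\frac{m}{m+1}$, and that this crossover is what the $1/(m+1)$ term encodes, is the one genuinely delicate part of the argument; the rest is bookkeeping with the two one-sided density bounds.
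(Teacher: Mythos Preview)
Your argument is correct and matches the paper's approach: show $|\hat V|\ge |V|$, bound $\sum_{i\in\hat V}(1-p_i)$ below by $|V|(1-\alpha-\delta)$, and bound $\sum_{i\in V}(1-p_i)$ above via maximality of $V$; your explicit split into the trivial bound $a\le m$ and the maximality bound $a<(m+1)(1-\alpha+\delta)$ is a more careful unpacking of the paper's one-line ``by considering adding a single element.'' One small slip: you say you use only the one-sided inequality $\hat p_i \le p_i+\delta$, but the step $\sum_{i\in\hat V}p_i \le \sum_{i\in\hat V}(\hat p_i+\delta)$ uses the other direction; both follow from $|p_i-\hat p_i|\le\delta$, so correctness is unaffected.
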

\begin{proof}
Observe first that $|\hat{V}| \geq |V|$, since 
\[ \frac{1}{|{V}|} \sum_{i\in {V}} \hat{p}_i \leq \frac{1}{|{V}|} \sum_{i\in {V}} ({p}_i + \delta) \leq \alpha. \]
Now observe that
\[ \sum_{i\in \hat{V}} 1 - {p}_i \geq \sum_{i\in \hat{V}} 1 - \hat{p}_i - \delta = |\hat{V}|(1 - \alpha - \delta) \geq |{V}|(1 - \alpha - \delta). \]
On the other hand, by considering adding a single element to $V$, we have
\[  \frac{1}{|V|} \sum_{i \in V} p_i \geq \alpha - \delta - \frac{1}{|V|+1}.  \]
Thus,
\[ \sum_{i\in {V}} 1 - {p}_i \leq |V|\left( 1 - \alpha + \delta + \frac{1}{|V|+1} \right). \]
Taking ratios completes the proof.
\end{proof}

We now turn to the proof of \cref{thm:additive-fdr}.

\begin{proof}[Proof of \cref{thm:additive-fdr}]
Let $(x_1, y_1, t_1, h_1), \ldots, (x_n, y_n, t_n, h_n)$ denote the samples from \cref{eqn:additive_errors} (where $h_1,\ldots, h_n$ are unobserved). From the structure of \cref{eqn:additive_errors}, we have that, conditioned on $(x_1, y_1, t_1), \ldots, (x_n, y_n, t_n)$, the random variables $\ind[h_i = 0]$ are Bernoulli random variables with biases $p_i = \pr(h_i = 0 \mid x_i, y_i, t_i)$.

From \Cref{lem:additive-posterior-approximation}, we have that
\[ \left| \hat{w}_i - \pr(H=0 \mid x_i, y_i, t_i=1)\right| \leq \frac{2U(U+2(\lambda + 1))}{L^2} \epsilon \]
for all $i=1,\ldots, n$. If $S$ is a set of indices satisfying $t_i=1$ for all $i \in S$ and $\frac{1}{|S|} \sum_{i\in S} \hat{w}_i \leq \alpha$, then
\begin{align*}
\E \left[ \frac{1}{|S|} \sum_{i \in S} \ind[h_i = 0] \mid (x_1, y_1, t_1), \ldots, (x_n y_n, t_n) \right] 
&= \frac{1}{|S|} \sum_{i \in S} \E\left[ \ind[h_i = 0] \mid (x_1, y_1, t_1), \ldots, (x_n y_n, t_n) \right] \\
&=  \frac{1}{|S|} \sum_{i \in 
S} \hat{w}_i +\frac{2U(U+2(\lambda + 1))}{L^2} \epsilon \\
&\leq \alpha + \frac{2U(U+2(\lambda + 1))}{L^2} \epsilon. 
\end{align*}

To prove the power statement, observe that the Bayes' optimal procedure is to choose the largest set $V$ such that $ \frac{1}{|V|}\sum_{i \in V} p_i \leq \alpha$. Then the power conclusion follows directly from \cref{lem:prob-subset-sum} with $\delta = \frac{2U(U+2(\lambda + 1))}{L^2} \epsilon$.
\end{proof}

\section{Additional results and details from \Cref{sec:kernel_generalized}}
\subsection{Proof of \cref{lem:conservative-pi}}
Suppose there exist distributions $\pi$, $f_0$, $f_1$ such that
\begin{align*}
p(y \mid x, t=0) &:= f_0(y \mid x), \text{ and } \\
p(y \mid x, t=1) &:= f_t(y \mid x) = (1-\pi(x)) f_0(y \mid x) + \pi(x) f_1(y \mid x).
\end{align*}
Pick any $x$. Observe that if $\pi(x)=0$, then the corresponding $\pi^\star(x) = 0$ and the lemma conclusion trivially holds. Thus, we may assume $\pi(x) > 0$, in which case we have by rearrangement:
\[ f_1(y \mid x) = \frac{1}{\pi(x)} \left( f_t(y \mid x) - f_0(y \mid x) \right) + f_0(y \mid x). \]
As a probability density, $f_1(y \mid x) \geq 0$, which implies
\[  \pi(x) \geq 1 - \frac{f_t(y \mid x)}{f_0(y \mid x)}. \]
Since the above holds for all $y$, we therefore must have 
\[  \pi(x) \geq 1 - \min_y \frac{f_t(y \mid x)}{f_0(y \mid x)} = \pi^\star(x). \]

To prove the second part of the lemma, let $x$ be a point and $\hat{\pi}$ be a function satisfying $1 \geq \hat{\pi}(x) \geq \pi^\star(x) > 0$. Then define
\[ f_1^{\hat{\pi}}(y \mid x) = \frac{1}{\hat{\pi}(x)} \left( f_t(y \mid x) - f_0(y \mid x) \right) + f_0(y \mid x). \]
By definition of $\pi^\star$, we have $f_1^{\hat{\pi}}(y \mid x) \geq 0$. Moreover, integrating the right hand side over $y$ shows that $f_1^{\hat{\pi}}(y \mid x)$ integrates to 1. Thus, $f_1^{\hat{\pi}}(y \mid x)$ is a proper density. Rearrangement shows that
\[ f_t(y \mid x) = (1-\hat{\pi}(x)) f_0(y \mid x) + \hat{\pi}(x) f_1^{\hat{\pi}}(y \mid x). \]

\subsection{Proof of \Cref{thm:general-fdr}}

To prove \Cref{thm:general-fdr} we first need to prove some lemmas. The first lemma shows that under the conditions of \Cref{thm:general-fdr},  we can estimate $\pi^\star$ accurately.

\begin{lemma}
\label{lem:w_star-estimate}
Let $\epsilon, L, U > 0$ be given. Suppose the following holds for all $i=1, \ldots n$ and $y \in \Ycal$:
\begin{itemize}
    \item[(i)] $L \leq f_t(y \mid x_i), f_0(y \mid x_i) \leq U$,
    \item [(ii)] $|f_t(y \mid x_i) - \hat{f}_t(y \mid x_i)| \leq \epsilon$, and
    \item[(iii)] $|f_0(y \mid x_i) - \hat{f}_0(y \mid x_i)| \leq \epsilon$.
\end{itemize}
If $\epsilon \leq L/2$, then $|\hat{\pi}^\star(x_i) - \pi^\star(x_i)| \leq \frac{8 U \epsilon}{L^2}$ for all $i=1,\ldots, n$, where
$ \hat{\pi}^\star(x) = 1 - \min_y \frac{\hat{f}_t(y \mid x)}{\hat{f}_0(y \mid x)} .$
\end{lemma}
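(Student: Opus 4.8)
The plan is to reduce the claim to a uniform pointwise bound on the difference of the two density ratios and then exploit the fact that the pointwise minimum is a nonexpansive operation with respect to the supremum norm. Since $\pi^\star(x_i) = 1 - \min_y \frac{f_t(y \mid x_i)}{f_0(y \mid x_i)}$ and $\hat{\pi}^\star(x_i) = 1 - \min_y \frac{\hat{f}_t(y \mid x_i)}{\hat{f}_0(y \mid x_i)}$ by definition, the additive constant $1$ cancels and it suffices to control $\left| \min_y \frac{\hat{f}_t(y \mid x_i)}{\hat{f}_0(y \mid x_i)} - \min_y \frac{f_t(y \mid x_i)}{f_0(y \mid x_i)} \right|$ for each fixed $i$.

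First I would invoke the elementary inequality $\left| \min_y a(y) - \min_y b(y) \right| \leq \sup_y |a(y) - b(y)|$, valid whenever both minima are attained, to push the absolute value inside and reduce to bounding the uniform quantity $\sup_y \left| \frac{\hat{f}_t(y \mid x_i)}{\hat{f}_0(y \mid x_i)} - \frac{f_t(y \mid x_i)}{f_0(y \mid x_i)} \right|$. This converts a statement about (possibly different) minimizers into a pointwise estimate that can be handled by routine algebra.

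Next, fixing $x_i$ and a value $y$ and writing $a = f_t$, $\hat a = \hat f_t$, $b = f_0$, $\hat b = \hat f_0$ with arguments suppressed, I would use the identity
\[
\frac{\hat a}{\hat b} - \frac{a}{b} = \frac{\hat a b - a \hat b}{b \hat b} = \frac{b(\hat a - a) + a(b - \hat b)}{b \hat b}.
\]
The numerator is bounded by $b|\hat a - a| + a|b - \hat b| \leq 2 U \epsilon$ using the density bound (i) together with the error bounds (ii) and (iii). For the denominator, the key step is to keep $\hat b$ bounded away from zero: the hypothesis $\epsilon \leq L/2$ together with (iii) gives $\hat b \geq b - \epsilon \geq L - L/2 = L/2$, and combined with $b \geq L$ this yields $b \hat b \geq L^2/2$. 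Assembling these estimates produces a pointwise (hence uniform) bound of $\frac{2U\epsilon}{L^2/2} = \frac{4U\epsilon}{L^2} \leq \frac{8U\epsilon}{L^2}$, which gives the stated conclusion.

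The only point requiring care is ensuring the estimated denominator $\hat f_0$ stays bounded below, since otherwise the ratio estimate could blow up; this is precisely what the assumption $\epsilon \leq L/2$ buys us. I do not expect a genuine obstacle here — the rest is elementary, and there is a factor-of-two slack between the $\frac{4U\epsilon}{L^2}$ one actually obtains and the stated $\frac{8U\epsilon}{L^2}$, so no delicate constant-chasing is needed.
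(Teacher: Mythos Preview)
Your proposal is correct and follows essentially the same approach as the paper: both reduce to bounding the pointwise difference of the two density ratios and then transferring this to the difference of minima. Your packaging via the nonexpansiveness inequality $|\min_y a(y)-\min_y b(y)|\le \sup_y|a(y)-b(y)|$ is a clean abstraction of exactly what the paper does by evaluating at the two argmins and arguing symmetrically, and your decomposition $\frac{\hat a}{\hat b}-\frac{a}{b}=\frac{b(\hat a-a)+a(b-\hat b)}{b\hat b}$ even yields the slightly sharper constant $4U\epsilon/L^2$ rather than the paper's $4(U+\epsilon)\epsilon/L^2$, both of which sit comfortably under the stated $8U\epsilon/L^2$.
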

\begin{proof}
Pick some index $i$, and let $x = x_i$. Let $y = \argmin_y \frac{f_t(y \mid x)}{f_0(y \mid x)}$ and $\hat{y} = \argmin_y \frac{\hat{f}_t(y \mid x)}{\hat{f}_0(y \mid x)}$. Then we can write
\begin{align*}
\hat{\pi}^\star(x) - \pi^\star(x)
&=  \frac{f_t(y \mid x)}{f_0(y \mid x)} - \frac{\hat{f}_t(\hat{y} \mid x)}{\hat{f}_0(\hat{y} \mid x)}  \\
&= \frac{f_t(y \mid x)}{f_0(y \mid x)} - \frac{f_t(\hat{y} \mid x)}{f_0(\hat{y} \mid x)} + \frac{f_t(\hat{y} \mid x)}{f_0(\hat{y} \mid x)} - \frac{\hat{f}_t(\hat{y} \mid x)}{\hat{f}_0(\hat{y} \mid x)} \\
&\leq \frac{f_t(\hat{y} \mid x)}{f_0(\hat{y} \mid x)} - \frac{\hat{f}_t(\hat{y} \mid x)}{\hat{f}_0(\hat{y} \mid x)} \\
&= \frac{f_t(\hat{y} \mid x) \hat{f}_0(\hat{y} \mid x) - f_0(\hat{y} \mid x)\hat{f}_t(\hat{y} \mid x)}{f_0(\hat{y} \mid x) \hat{f}_0(\hat{y} \mid x)} \\
&\leq \frac{(\hat{f}_t(\hat{y} \mid x) + \epsilon )(f_0(\hat{y} \mid x) + \epsilon) - f_0(\hat{y} \mid x)\hat{f}_t(\hat{y} \mid x)}{f_0(\hat{y} \mid x) \hat{f}_0(\hat{y} \mid x)} \\
&\leq \frac{2\epsilon(U + \epsilon)}{L^2/2} \leq \frac{8U\epsilon}{L^2}.
\end{align*}
The first inequality comes from the definition of $y$, the second inequality comes from Assumptions (ii) and (iii), and the last two inequalities follow from Assumption (i). A symmetric line of reasoning that demonstrates $\pi^\star(x) - \hat{\pi}^\star(x) \leq \frac{8U\epsilon}{L^2}$ completes the proof.
\end{proof}

Our second lemma shows that the posterior estimate in \cref{eqn:generalized_w_estimate} is close to the ground truth value.

\begin{lemma}
\label{lem:posterior-approximation}
Let $\epsilon, L, U > 0$ be given. Suppose the following holds for all $i=1, \ldots n$ and $y \in \Ycal$:
\begin{itemize}
    \item[(i)] $L \leq f_t(y \mid x_i), f_0(y \mid x_i) \leq U$,
    \item [(ii)] $|f_t(y \mid x_i) - \hat{f}_t(y \mid x_i)| \leq \epsilon$, and
    \item[(iii)] $|f_0(y \mid x_i) - \hat{f}_0(y \mid x_i)| \leq \epsilon$.
\end{itemize}
If $\epsilon \leq \min (L/2, 1)$, then for all $i=1,\ldots, n$,
\[ \left|\frac{(1 - \pi^\star(x_i)) f_0(y_i \mid x_i)}{f_t(y_i \mid x_i)} - \frac{(1 - \hat{\pi}^\star(x_i)) \hat{f}_0(y_i \mid x_i)}{\hat{f}_t(y_i \mid x_i)} \right| \leq \frac{8U}{L^2} \left( 1 + \frac{12U}{L^2}\right) \epsilon.  \]
\end{lemma}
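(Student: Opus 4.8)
The plan is to fix an index $i$, write $x = x_i$ and $y = y_i$, and adopt the shorthand $a = f_0(y \mid x)$, $\hat a = \hat f_0(y \mid x)$, $b = f_t(y \mid x)$, $\hat b = \hat f_t(y \mid x)$, $p = 1 - \pi^\star(x)$, and $\hat p = 1 - \hat\pi^\star(x)$. The quantity to bound is then $\left| p \tfrac{a}{b} - \hat p \tfrac{\hat a}{\hat b}\right|$, i.e. the gap between the true and estimated posterior null probabilities, each of which factors as a prior-null term times a density ratio. The natural strategy is to split this gap by the triangle inequality into an error from estimating the prior ($p$ versus $\hat p$) and an error from estimating the density ratio ($a/b$ versus $\hat a/\hat b$):
\[ \left| p \tfrac{a}{b} - \hat p \tfrac{\hat a}{\hat b}\right| \le p\left| \tfrac{a}{b} - \tfrac{\hat a}{\hat b}\right| + |p - \hat p|\,\tfrac{\hat a}{\hat b}. \]
For the prior error I would directly invoke \Cref{lem:w_star-estimate}, which under the same hypotheses already gives $|p - \hat p| = |\pi^\star(x) - \hat\pi^\star(x)| \le \tfrac{8U}{L^2}\epsilon$; this supplies exactly the leading factor $\tfrac{8U}{L^2}$ appearing in the target bound.

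Next I would control the density-ratio error through the identity $\tfrac{a}{b} - \tfrac{\hat a}{\hat b} = \tfrac{a\hat b - \hat a b}{b \hat b}$. Writing $\hat a = a + \delta_a$ and $\hat b = b + \delta_b$ with $|\delta_a|, |\delta_b| \le \epsilon$, the numerator collapses to $a\delta_b - b\delta_a$, whose magnitude is at most $2U\epsilon$ by Assumption (i). For the denominator, the condition $\epsilon \le L/2$ guarantees $\hat b \ge b - \epsilon \ge L/2$, so $b\hat b \ge L^2/2$ and hence $\left| \tfrac{a}{b} - \tfrac{\hat a}{\hat b}\right| \le 4U\epsilon/L^2$. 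Because $\min_y f_t/f_0 \le 1$ (two densities on the same domain cannot have one strictly dominating the other), we have $p = 1-\pi^\star(x) \in [0,1]$, so the first term is controlled by the ratio error alone. For the second term I would bound $\hat a/\hat b \le (U+\epsilon)/(L/2)$, using $\epsilon \le L/2 \le U/2$ to write this in terms of $U$ and $L$. Collecting the two contributions and loosening constants (using $\epsilon \le \min(L/2,1)$ and $L \le U$) then yields a bound of the stated form $\tfrac{8U}{L^2}\bigl(1 + \tfrac{12U}{L^2}\bigr)\epsilon$.

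I expect the main obstacle to be the bookkeeping that keeps every denominator bounded away from zero while folding the two independent error sources into a single clean constant: one must consistently exploit $\epsilon \le L/2$ to replace the perturbed densities $\hat a, \hat b$ by bounds in $L$ and $U$, and be careful that the prior-estimation error from \Cref{lem:w_star-estimate} enters \emph{multiplied} by the estimated density ratio rather than on its own. The degenerate case $\pi^\star(x) = 0$, where $p = 1$ and the minimizing ratio equals $1$, requires no separate treatment since it is absorbed by these bounds; everything else is routine algebra once the decomposition above is fixed.
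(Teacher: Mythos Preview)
Your approach is correct and essentially the same as the paper's: both invoke \Cref{lem:w_star-estimate} to control $|\pi^\star - \hat\pi^\star|$ and then combine with the density bounds (i)--(iii) and $\epsilon \le L/2$ to handle the ratio. The only organizational difference is that you split via the triangle inequality into a ``prior error'' plus a ``ratio error'' term, whereas the paper combines everything over a common denominator $f_t \hat f_t$ and expands directly; your decomposition is arguably cleaner.

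One small caveat on the last step: your intermediate bound works out to roughly $4U\epsilon/L^2 + 24U^2\epsilon/L^3$, and ``loosening constants'' to reach exactly $\tfrac{8U}{L^2}(1+\tfrac{12U}{L^2})\epsilon = 8U\epsilon/L^2 + 96U^2\epsilon/L^4$ does not go through uniformly in $L$ (try $L=U$ large). This is purely a bookkeeping mismatch---your bound is valid and of the same order, and in the regime that matters for densities (small $L$) it is in fact tighter---but if you want the constant exactly as stated you would need to follow the paper's single-fraction expansion more closely rather than the two-term split.
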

\begin{proof}
Let $x=x_i$, $y = y_i$. Then we have
\begin{align*}
&\frac{(1 - \pi^\star(x)) f_0(y \mid x)}{f_t(y \mid x)} - \frac{(1 - \hat{\pi}^\star(x)) \hat{f}_0(y \mid x)}{\hat{f}_t(y \mid x)}\\
&\hspace{3em}= \frac{(1 - \pi^\star(x)) f_0(y \mid x) \hat{f}_t(y \mid x) - (1 - \hat{\pi}^\star(x)) \hat{f}_0(y \mid x)f_t(y \mid x)}{f_t(y \mid x)\hat{f}_t(y \mid x)} \\
&\hspace{3em}\leq \frac{ ((1 - \hat{\pi}^\star(x)) + \frac{8U\epsilon}{L^2}) (\hat{f}_0(y \mid x) + \epsilon)(f_t(y \mid x) + \epsilon)- (1 - \hat{\pi}^\star(x)) \hat{f}_0(y \mid x)f_t(y \mid x)}{f_t(y \mid x)\hat{f}_t(y \mid x)} \\
&\hspace{3em}\leq \frac{(U + \epsilon)\epsilon + U\epsilon + \epsilon^2 + \frac{8U^2 \epsilon}{L^2}(U+\epsilon) + \frac{8U \epsilon}{L^2}\epsilon^2 (U + \epsilon) + \frac{8U^2 \epsilon^2}{L^2} + \frac{8U \epsilon^3}{L^2} }{L^2/2} \\
&\hspace{3em}\leq \frac{8U}{L^2} \left( 1 + \frac{12U}{L^2}\right) \epsilon. 
\end{align*}
Here, we have used \cref{lem:w_star-estimate} in the first inequality.
\end{proof}

We now turn to the proof of \cref{thm:general-fdr}, which is nearly identical to the proof of \cref{thm:additive-fdr}.

\begin{proof}[Proof of \cref{thm:general-fdr}]
Let $(x_1, y_1, t_1, h_1), \ldots, (x_n, y_n, t_n, h_n)$ denote draws from \cref{eqn:two_groups} (where $h_1,\ldots, h_n$ are unobserved). From the structure of \cref{eqn:two_groups}, we have that, conditioned on $(x_1, y_1, t_1), \ldots, (x_n, y_n, t_n)$, the random variables $\ind[h_i = 0]$ are Bernoulli random variables with biases $p_i = \pr(h_i = 0 \mid x_i, y_i, t_i)$.

From \Cref{lem:posterior-approximation}, we have that
\[\hat{w}_i \leq \pr(H=0 \mid x_i, y_i, t_i=1) + \frac{8U}{L^2} \left( 1 + \frac{12U}{L^2}\right) \epsilon \]
for all $i=1,\ldots, n$. If $S$ is a set of indices satisfying $t_i=1$ for all $i \in S$ and $\frac{1}{|S|} \sum_{i\in S} \hat{w}_i \leq \alpha$, then
\begin{align*}
\E \left[ \frac{1}{|S|} \sum_{i \in S} \ind[h_i = 0] \mid (x_1, y_1, t_1), \ldots, (x_n y_n, t_n) \right] 
&= \frac{1}{|S|} \sum_{i \in S} \E\left[ \ind[h_i = 0] \mid (x_1, y_1, t_1), \ldots, (x_n y_n, t_n) \right] \\
&=  \frac{1}{|S|} \sum_{i \in 
S} \hat{w}_i + \frac{8U}{L^2} \left( 1 + \frac{12U}{L^2}\right) \epsilon \\
&\leq \alpha + \frac{8U}{L^2} \left( 1 + \frac{12U}{L^2}\right)\epsilon.
\end{align*}
To prove the power statement, we first observe that the Bayes' optimal procedure is to select the largest set such that $\frac{1}{|V|} \sum_{i\in V} p_i \leq \alpha$, where $p_i = \pr(h_i = 0 \mid x_i, y_i, t_i)$ is formed using the prior $\pi^\star$ from \cref{lem:conservative-pi}. Then \cref{lem:prob-subset-sum} finishes the argument.
\end{proof}

\section{Additional results and details from \Cref{sec:confounding}}
\subsection{Proof of \cref{prop:compliance-confounding-ac2g}}

The full compliance confounding additive causal two-groups model is as follows.

\begin{align*}
T |  X=x, U=u &\sim& \textnormal{Bern}(\phi(x,u))  \\
H | X=x, U=u, T=0 &=& 0  \\
H | X=x, U=u, T=1 &\sim& \textnormal{Bern}(\pi(x,u))  \\
Y | X=x, H=h &=& \mu_{h}(x) + \epsilon  \\
\epsilon &\sim& g.
\end{align*}

To show that the additive causal two-groups model is well-specified, observe that we still have
\[ p(Y = y \mid X=x, T=0 ) 
= p(Y = y \mid X=x, H=0 ) 
=  {g}(y - {\mu}_0(x)). \]
Thus, all that remains to be shown is that when we marginalize over $U$, we can write out
\[ p(Y = y \mid X=x, T=1) =  (1-\pi(x)) {g}(y - {\mu}_0(x)) + \pi(x) {g}(y - \mu_1(x)) .\]
Working it out,
\begin{align*}
p(Y = y \mid X=x, T=1) &= \sum_{h =0}^1 p(H=h \mid X=x, T=1) p(Y = y \mid X=x, H=h) \\
&= \sum_{h =0}^1 p(H=h \mid X=x, T=1) g(y - \mu_h(x)).
\end{align*}
Marginalizing over $U$, we have
\begin{align*}
p(H=1 \mid X=x, T=1) &= \int p(U = u \mid X=x, T=1) p(H = 1 \mid X=x, U=u, T=1) \, du \\
&= \int p(U = u \mid X=x, T=1) \pi(x,u) \, du \\
&=: \pi(x),
\end{align*}
where we have simply defined $\pi(x)$ in the last line. Putting it all together gives us the desired result.

\subsection{Proof of \cref{prop:latent-nonrobust-ac2g}}

Consider the following model.
\begin{align*}
H \mid X=x, T=1 &\sim& \textnormal{Bern}(\pi(x)) \\ 
Y \mid X=x, U=u, H=0 &\sim& \mu_0(x) + u + \epsilon \\ 
Y \mid X=x, U=u, H=1 &\sim& \mu_1(x) + \epsilon \\ 
U, \epsilon &\sim& \Ncal(0,1).
\end{align*}
Marginalizing over $U$ and $\epsilon$, we have
\begin{align*}
Y \mid X=x, H=0 &\sim& \Ncal(\mu_0(x),2) \\ 
Y \mid X=x, H=1 &\sim& \Ncal(\mu_1(x),1).
\end{align*}
As the above have different standard deviations, this model cannot be rewritten in the form of \cref{eqn:additive_errors}. Thus, the additive causal two-groups model is misspecified here.
\subsection{Proof of \cref{prop:confounding-npc2g}}

We need to show that $p(Y \mid X, T=1)$ can be written as a mixture model as in \cref{eqn:treatment-mixture}. This can be done by making use of the condition $Y \indep T \mid X, H$:
\begin{align*}
p(Y = y \mid X=x, T=1) &= \sum_{h=0}^1 p(H=h \mid X=x, T=1) p(Y = y \mid X=x, T=1, H=h) \\
&= \sum_{h=0}^1 p(H=h \mid X=x, T=1) p(Y = y \mid X=x, H=h) \\
&= (1-\pi(x)) f_0(y \mid x) + \pi(x) f_1(y \mid x).
\end{align*}
Here, we have simply defined $\pi(x) = p(H=1 \mid X=x, T=1)$, $f_0(y \mid x) = p(Y = y \mid X=x, H=0)$, and $f_1(y \mid x) = p(Y = y \mid X=x, H=1)$. \qed

\subsection{Proof of \cref{prop:canonical-confounding}}

We will make use of the following lemma.

\begin{lemma}
\label{lem:general-latent-confounding}
Let $\Dcal$ and $\Dcal'$ be two distributions. For random variables $X, Y, H, T, U$, there is a choice of distributions for $U$, $Y|X,H,U$, $T|X,U$, and $H|X,T$ such that
\begin{enumerate}
\item $X, Y, H, T, U$ obey the canonical confounding graphical model,
\item $\pr(T=1 \mid X) = 1/2$ [no treatment overlap violations],
\item $\pr(H=0 \mid X, T=0) = 1$ [all untreated are non-responders], 
\item $\pr(H=1 \mid X, T=1) = 1/2$ [no responder overlap violations], 
\item $Y \mid X, T=0 \sim \Dcal'$, and
\item $Y \mid X, H=0, T=1 \overset{d}{=} Y \mid X, H=1, T=1 \overset{d}{=} Y \mid X, T=1 \sim \Dcal$ [matching responder and non-responder distributions].
\end{enumerate}
\end{lemma}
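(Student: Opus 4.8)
The plan is to prove the lemma constructively, exhibiting an explicit structural causal model consistent with the canonical confounding DAG of \cref{fig:confounding:canonical} (the edges $U\to T$ and $U\to Y$, with no direct $T\to Y$ edge) that realizes all six required conditions simultaneously. The governing intuition is that the target demands a genuine dependence of the outcome law on treatment, namely $Y\mid X,T{=}1\sim\Dcal$ while $Y\mid X,T{=}0\sim\Dcal'$, yet the DAG forbids an arrow from $T$ into $Y$. The trick is therefore to route this entire dependence through the confounder $U$: I make $T$ a deterministic readout of one coordinate of $U$, and let $Y$ read off one of two further coordinates of $U$ selected by that same bit, so that all $T$–$Y$ association is mediated by the common parent $U$ and the graph is respected.

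Concretely, I would set $U=(B,V_0,V_1)$ with $B\sim\mathrm{Bernoulli}(1/2)$, $V_0\sim\Dcal'$, and $V_1\sim\Dcal$ mutually independent and independent of $X$. I then define the structural equations $T=B$, so that $\pr(T{=}1\mid X)=1/2$ (condition 2) and $U\to T$ holds, and $Y=(1-B)V_0+BV_1$, which is a function of $U$ alone, hence a legitimate child of $U$ carrying no $T$ parent (condition 1). For the response I set $H=0$ whenever $T=0$ and draw $H\mid T{=}1\sim\mathrm{Bernoulli}(1/2)$ using fresh noise independent of $U$; this gives conditions 3 and 4 while respecting that the canonical confounding graph has no $U\to H$ edge. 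Since every edge out of $X$ may be null, taking all mechanisms constant in $X$ makes the stated conditionals hold for all $x$ at once.

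Verifying conditions 5 and 6 is then immediate by conditioning on $B$. Conditioning on $T=0$ forces $B=0$, whence $Y=V_0\sim\Dcal'$, giving condition 5; conditioning on $T=1$ forces $B=1$, whence $Y=V_1\sim\Dcal$. Because $Y$ is a function of $U$ only and the draw of $H\mid T{=}1$ is independent of $U$, we have $Y\indep H\mid T{=}1$, so further conditioning on $H=0$ or $H=1$ leaves the law of $Y$ unchanged at $\Dcal$, producing the chain of equalities in condition 6.

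I expect the only genuinely delicate point to be confirming that this is a bona fide SCM for the canonical confounding graph rather than a disguised $T\to Y$ model. I would address this by checking the factorization $p(x)\,p(u)\,p(t\mid x,u)\,p(h\mid x,t)\,p(y\mid x,h,u)$ explicitly: the mechanism for $Y$ names only $U$ (through $B,V_0,V_1$) as an argument, so the observed correlation between $T$ and $Y$ is fully explained by their shared parent $U$, exactly as a confounder should behave. A secondary bookkeeping point is the independence $H\indep U\mid(X,T)$, which follows because $H$ is generated from independent noise given $(X,T)$ and is what makes the equality across $H=0,H=1$ in condition 6 go through.
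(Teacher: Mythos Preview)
Your construction is correct and follows essentially the same idea as the paper's proof: make $T$ a deterministic function of a fair coin carried by $U$, and let the law of $Y$ be selected by that same coin, so that all $T$--$Y$ dependence is routed through the common parent $U$ while $H\mid T{=}1$ is drawn independently. The only cosmetic difference is that the paper takes $U$ to be the single Bernoulli bit and specifies $Y\mid H,U$ as a conditional distribution, whereas you enlarge $U$ to also contain pre-drawn outcomes $V_0\sim\Dcal'$, $V_1\sim\Dcal$ and make $Y$ a deterministic function of $U$; this is the standard SCM reparametrization and changes nothing substantive.
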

\begin{proof}
Our conditional distributions will be independent of $X$, so we will drop $X$ going forward. We make the following choices for the conditional distributions
\begin{align*}
Y \mid  H=0, U=0 &\sim \Dcal \\
Y \mid H=0, U=1 &= \Dcal' \\
Y \mid H=1, U=0 &= \Dcal \\
Y=1 \mid H=1, U=1 &= \Dcal' \\
\pr(T=1 \mid U=0) &= 1 \\ 
\pr(T = 1 \mid U=1) &= 0 \\ 
\pr(H=1 \mid T=1) &= 1/2 \\
\pr(H=0 \mid T=0) &= 1 \\
\pr(U=1) &= 1/2.
\end{align*}
Then we can work out the following
\begin{align*}
\pr(T=1) &= \pr(U=1)\pr(T = 1 \mid U=1) + \pr(U=0)\pr(T = 1 \mid U=0) = 1/2 \\
\pr(U=1 \mid H, T=1) &= \pr(U=1 \mid T=1) \\
&= \frac{\pr(T=1 \mid U=1)\pr(U=1)}{\pr(T=1 \mid U=1)\pr(U=1) + \pr(T=1 \mid U=0)\pr(U=0)} = 0 \\
\pr(U=1 \mid H, T=0) &= \pr(U=1 \mid T=0) \\
&= \frac{\pr(T=0 \mid U=1)\pr(U=1)}{\pr(T=0 \mid U=1)\pr(U=1) + \pr(T=0 \mid U=0)\pr(U=0)} = 1 \\
\pr(U=0 \mid H, T=1) &= 1 \\
\pr(U=0 \mid H, T=0) &= 0.
\end{align*}
And so we have
\begin{align*}
\pr(Y=y | H=1, T=1) &= \pr(Y=y \mid H=1, U=0, T=1) \pr(U=0 \mid H=1, T=1) \\
&\hspace{2em} +  \pr(Y=y \mid H=1, U=1, T=1) \pr(U=1 \mid H=1, T=1) \\
&= \pr(Y=y \mid H=1, U=0) \pr(U=0 \mid H=1, T=1) \\
&\hspace{2em} +  \pr(Y=y \mid H=1, U=1) \pr(U=1 \mid H=1, T=1) \\
&= \pr(Y=y \mid H=1, U=0).
\end{align*}
Thus, $Y \mid H=1, T=1 \sim \Dcal$. Similarly, we have
\begin{align*}
\pr(Y=y | H=0, T=1) &= \pr(Y=y \mid H=0, U=0, T=1) \pr(U=0 \mid H=0, T=1) \\
&\hspace{2em} +  \pr(Y=y \mid H=0, U=1, T=1) \pr(U=1 \mid H=0, T=1) \\
&= \pr(Y=y \mid H=0, U=0) \pr(U=0 \mid H=0, T=1) \\
&\hspace{2em} +  \pr(Y=y \mid H=0, U=1) \pr(U=1 \mid H=0, T=1) \\
&= \pr(Y=y \mid H=0, U=0).
\end{align*}
Therefore, $Y \mid H=0, T=1 \sim \Dcal$. Moreover, we can work out
\begin{align*}
\pr(Y = y \mid T=0) &= \pr(Y = y \mid H=0, T=0) \\
&= \pr(Y = y \mid U=0, H=0) \pr(U=0 \mid T=0) \\
&\hspace{3em}+ \pr(Y = y \mid U=1, H=0) \pr(U=1 \mid T=0)\\
&= \pr(Y = y \mid U=1, H=0).
\end{align*}
Thus, $Y \mid T=0 \sim \Dcal'$.
\end{proof}

\begin{proof}[Proof of \cref{prop:canonical-confounding}]
To finish the proof of \cref{prop:canonical-confounding}, we can take $\Dcal$ and $\Dcal'$ to be distributions with non-intersecting supports, e.g. uniform distributions over separated intervals. Then \cref{lem:general-latent-confounding} gives us a confounded setting in which the non-treatment distribution does not match the non-response-under-treatment distribution, violating \cref{eqn:two_groups}. Thus, the nonparametric causal two-groups model is not robust against latent confounding, and therefore is also not robust against total confounding.

Moreover, we can also see that
if we were to apply the nonparametric causal two-group model anyway, we would conclude that $\pi(x) = 1$ for all $x$ satisfying $t=1$. This is because the treatment and non-treatment distributions do not share any support, and so the only valid mixture satisfying \cref{eqn:treatment-mixture} is the trivial mixture that assigns the responder distribution to be the treatment distribution itself and for $H=1$ to occur with probability 1 when $T=1$. However, as shown in \cref{prop:canonical-confounding}, the true probability of $H=1$ is 1/2 when $T=1$. 
\end{proof}

\section{Additional results and details from \Cref{sec:simulations}}
\label{sec:more_simulations}
\begin{table}[htp!]
\centering
\begin{tabular}{|| l | l||} 
 \hline
 Additive & Nonadditive \\ 
 \hline\hline
 $\beta, \gamma, \theta \sim \mathcal{N}(0, I_d/\sqrt{d} )$ & $\beta, \gamma \sim \mathcal{N}(0, I_d/\sqrt{d} )$ \\ 
  & $c \sim \text{HalfNormal}(0, 2)$ \\
 $H | T=1 \sim \text{Bern}( \text{sigmoid}(\beta^T X))$ & $W = \gamma^TX$  \\
 $T \sim \text{Bern}(0.5)$ & $T \sim \text{Bern}( \text{sigmoid}(W)$ \\
 $\mu_0 = \gamma^T X $ & $H | T=1 \sim \text{Bern}( \text{sigmoid}(\beta^T X))$ \\ 
 $\mu_1 = \mu_0 + \tau \sum_i |X_i||\gamma_i| $ & $B_{ij} \sim \text{Bern}(0.1), Z_{ij} \sim \text{StudentT}(3)$ \\
 $Y | H=0 \sim \mathcal{N}(\mu_0, 1)$  & $L = \sum_{i,j} B_{i,j} Z_{i,j} X_i X_j$ \\
 $Y | H=1 \sim \mathcal{N}(\mu_1, 1)$   & $Y \sim \mathcal{N}( \log(1 + \exp(cW + \theta^T X + \tau H + L)), 1 ) $ \\
 \hline
\end{tabular}
\caption{Data generating models for additive and nonadditive synthetic simulations.}
\label{table:synthetic-data}
\end{table}

The full data generating process for both synthetic additive and nonadditive simulations is described in \cref{table:synthetic-data}.
All simulations were performed on a cluster with Intel Xeon Gold 6348 CPUs. Each simulation ran in under 2 hours with 2 CPUs allocated. In all plots, shaded regions represent 95\% confidence intervals under the normal approximation.

For additional baselines, we also compared against three methods that are valid ways of testing in the intention-to-treat setting. FDR-Regression \citep{scott:etal:2014:fdr-regression} controls the local FDR in different subsets of the experiment. BART \citep{hill:etal:2011:causal-bart} and Causal forests \citep{wager:athey:2018:causal-forests} are two nonparametric tree-based methods for estimating individual treatment effects. 

To appropriately measure power in this setting, we considered \emph{valid power}, which we define to be 
\[ \text{vp}(\alpha) =
\begin{cases}
0 & \text{ if }  \, \, \hat{\text{fdr}}(\alpha) - \text{CI}_{\text{fdr}} > \alpha \\
\frac{\# \text{ of rejected true non-nulls}}{\# \text{of true non-nulls}} & \text{ otherwise}
\end{cases},  \]
where $\hat{\text{fdr}}(\alpha)$ is the average observed FDR of the procedure at level $\alpha$ and $\text{CI}_{\text{fdr}}$ is its associated 95\% confidence interval band, over the 50 different random runs. Intuitively, valid power captures the standard notion of power in settings where FDR has been respected.

We generally observe pathological behavior for the intention-to-treat baselines. In particular, both BART and Causal forest generally reject everything at extremely low nominal FDR levels, leading to high observed FDR rates. The valid power for these methods is categorically zero until the first nominal FDR level for which the observed FDR is acceptable, at which point the valid power immediately jumps to 1. On the other hand, we observe that FDRreg's behavior changes between settings, having high FDR values on the additive data and low FDR values on the nonadditive data. In all settings, however, the valid power of FDRreg was far below the C2G methods and the frequentist baseline.

It is worth emphasizing that these results are not intended as an indictment of the baseline methods. When the assumptions of the problem match those of the methods, these baselines generally perform quite well. Rather, these results illustrate how the causal two-groups setting breaks the assumptions behind these methods.

\begin{figure*}
\centering %
\hspace{-2em}\includegraphics[width=1.03\textwidth]{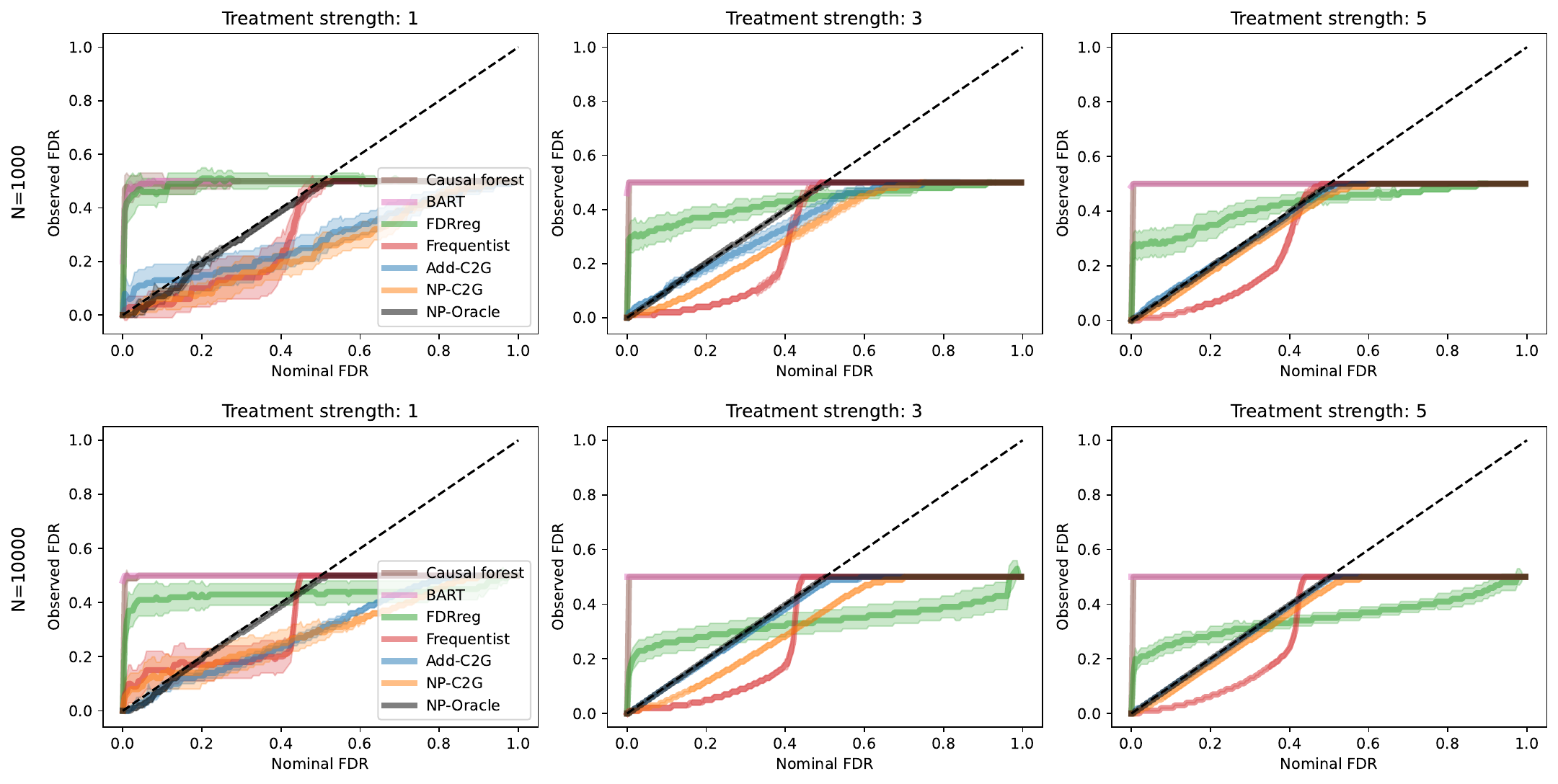}
\caption{FDR curves on additive synthetic data.}
\end{figure*}

\begin{figure*}
\centering %
\hspace{-2em}\includegraphics[width=1.03\textwidth]{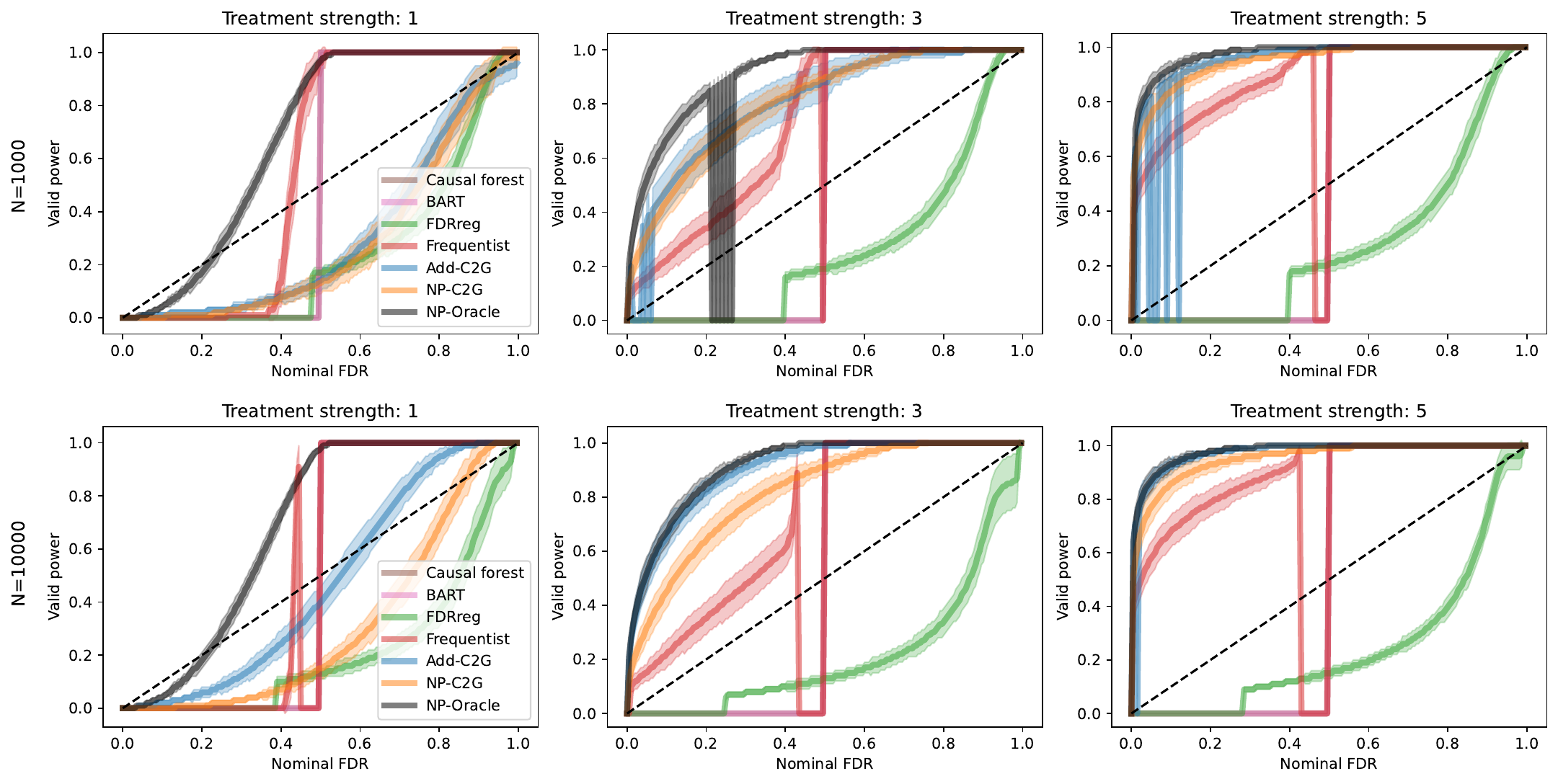}
\caption{Valid power curves on additive synthetic data.}
\end{figure*}

\begin{figure*}
\centering 
\hspace{-2em}\includegraphics[width=1.03\textwidth]{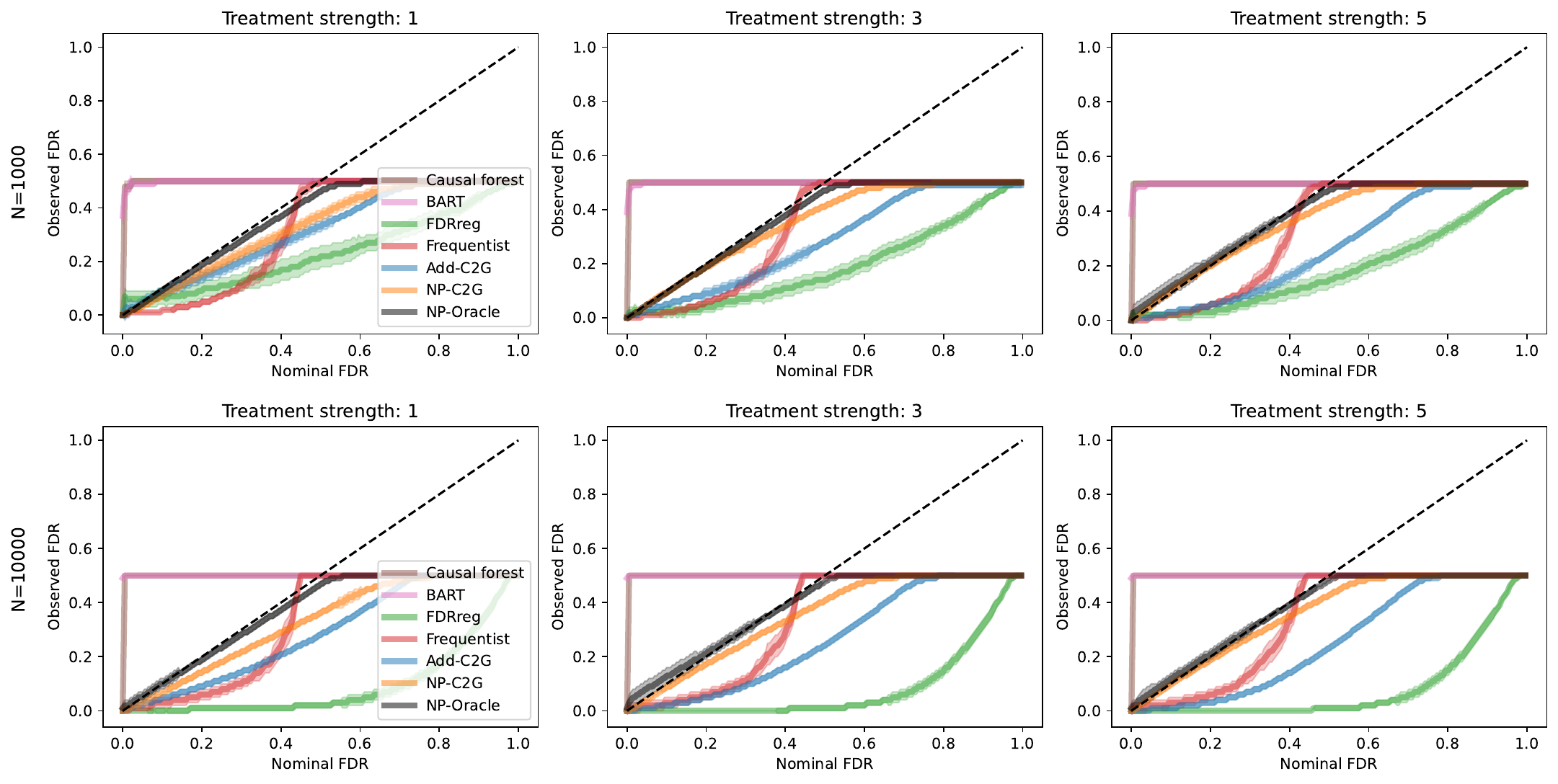}
\caption{FDR curves on nonadditive synthetic data.}
\end{figure*}

\begin{figure*}
\centering %
\hspace{-2em}\includegraphics[width=1.03\textwidth]{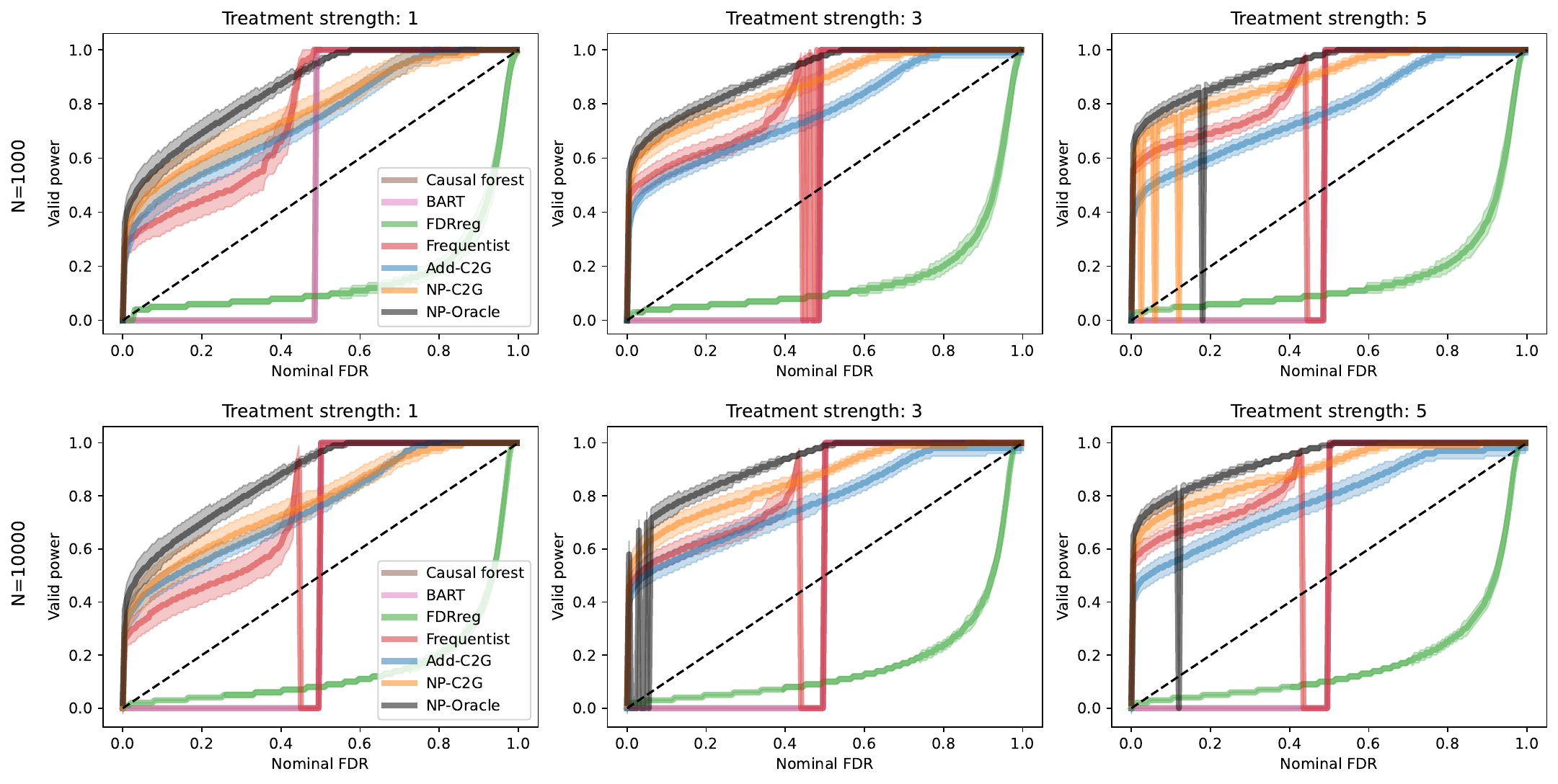}
\caption{Valid power curves on nonadditive synthetic data.}
\end{figure*}

\section{Semi-synthetic simulations on drug sensitivity data}
\label{sec:gdsc_simulations}

\begin{table}
\centering
\caption{Empirical results on semi-synthetic drug-response data. $\pm$ denotes 95\% confindence intervals. \textbf{Bolded} results on FDR column indicate that the method(s) achieved valid FDR (up to 95\% confidence intervals). \textbf{Bolded} results on valid power column indicate that the method(s) achieved the highest valid power for that setting (up to 95\% confidence intervals).}
\begin{tabular}{||c|c|c|c|c|c|c||}
\hline
& \multicolumn{3}{|c|}{FDR} & \multicolumn{3}{|c|}{Valid power} \\
\hline
Method & $\alpha=0.05$ & $\alpha=0.1$ & $\alpha=0.25$ & $\alpha=0.05$ & $\alpha=0.1$ & $\alpha=0.25$ \\
\hline
Frequentist & \textbf{ 0.03±0.006 } & \textbf{ 0.04±0.007 } & \textbf{ 0.19±0.038 } & 0.24±0.018 & 0.33±0.021 & \textbf{ 0.63±0.062 } \\
Add-C2G & 0.14±0.035 & 0.17±0.034 & \textbf{ 0.23±0.032 } & 0.0±0.0 & 0.0±0.0 & 0.42±0.06 \\
NP-C2G & \textbf{ 0.04±0.009 } & \textbf{ 0.07±0.011 } & \textbf{ 0.23±0.02 } & \textbf{ 0.33±0.025 } & \textbf{ 0.44±0.023 } & \textbf{ 0.65±0.021 } \\
\hline
\end{tabular}
\label{table:nutlin-simulations}
\end{table}
The Genomics of Drug Sensitivity in Cancer (GDSC) is a repository of cancer cell line data, both genomic and drug-response~\cite{iorio:etal:2016:gdsc-interactions,yang:etal:2013:gdsc}. We considered a subset of 832 cell lines for which both genomic data and dose-response data against the drug Nutlin-3a were recorded. Nutlin-3a is an MDM2-inhibitor, and as such it promotes p53, a tumor suppressing protein~\cite{shen:maki:2011:p53}. However, this pathway can successfully inhibit cancer growth only when the corresponding gene, TP53, is wild-type, i.e. does not contain a missense mutation.

For each cell line in our dataset, we took the covariates $X$ to be the 50-dimensional PCA projection of its RNA expression data. We took the outcome variable $Y$ to be the z-scored cell line viability of the cell line when exposed to the maximum dose of Nutlin-3a. We let $H$ be an indicator variable that is 1 when the cell line does not have a TP53 mutation. And we took $T$ to be 1 whenever $H=1$, and otherwise we set it to a Bernoulli draw with bias equal to $\text{sigmoid}(W_{\text{GAPDH}})$, where $W_{\text{GAPDH}}$ is the RNA expression of the Glyceraldehyde 3-phosphate dehydrogenase gene. We reran our simulations with 50 different random seeds.

\cref{table:nutlin-simulations} displays the empirical results on GDSC data. Note that in this setting, all methods did a reasonable job of controlling FDR, save for Add-C2G at the lower levels of $\alpha$. However, for BART, FDRreg, and Causal Forests, this is mostly due to having an unreasonably low rejection rate. The fact that Add-C2G generally does not control FDR in this setting likely indicates that the additivity assumption is grossly violated here. Finally, we see that NP-C2G controlled FDR at all levels while maintaing high valid power.

\begin{figure*}
\centering 
\includegraphics[width=1.0\textwidth]{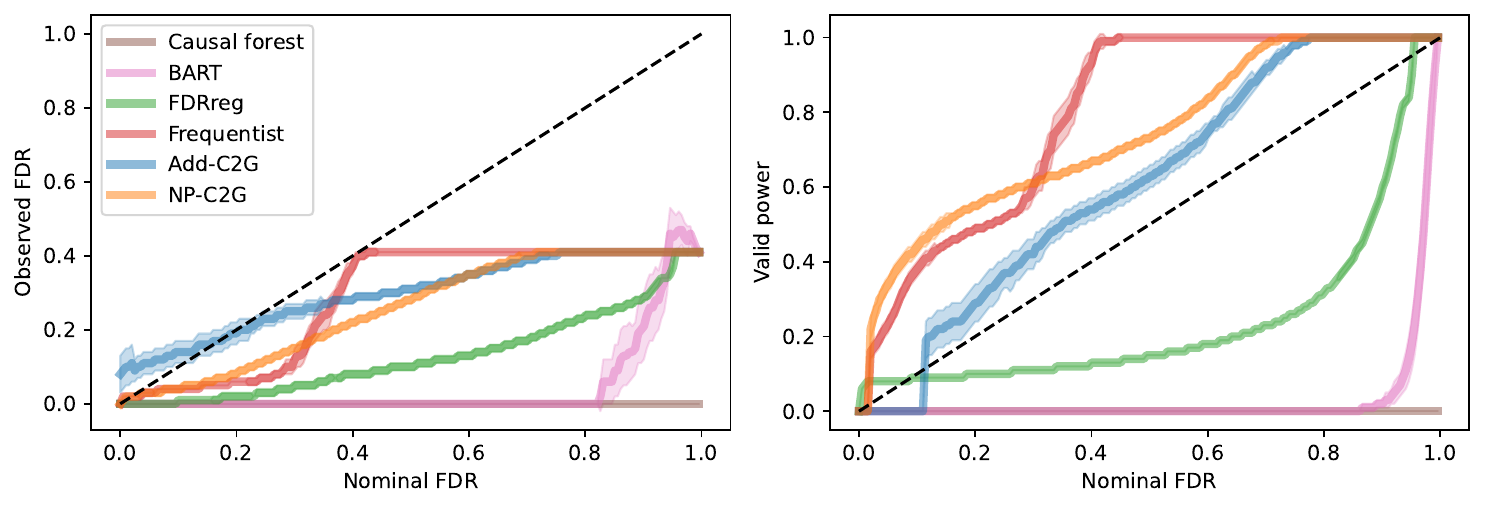}
\caption{Observed FDR (left) and valid power curves (right) for semi-synthetic drug-response data.}
\end{figure*}

\section{Additional results and details from \Cref{sec:case-study}}
\cref{fig:q-val} shows the q-values of individual genes as a function of the nominal FDR level used in NP-C2G selection.

\begin{figure*}
\centering
\includegraphics[width=1.0\textwidth]{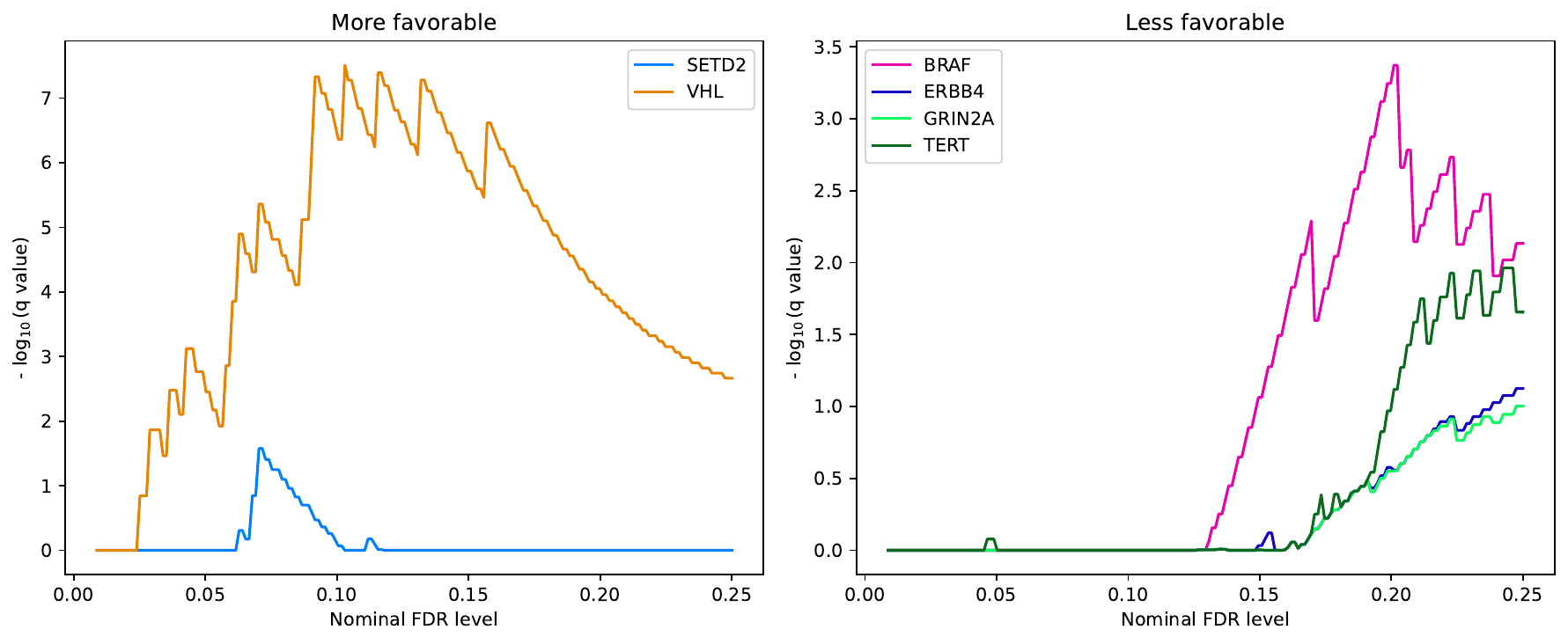}
\caption{q-values as a function of nominal FDR level for each gene. Only genes achieving q-value below 0.1 are shown.}
\label{fig:q-val}
\end{figure*}

\end{document}